 \numberwithin{equation}{section}
\newtheorem{Theorem}{Theorem}[section]
\newtheorem{Lemma}[Theorem]{Lemma}
\newtheorem{Proposition}[Theorem]{Proposition}
\newtheorem{Remark}[Theorem]{Remark}
\newcommand{\R}{\mathbb{R}}
\def\Re{\mathop{\mathrm{Re}}}
\newcommand{\rmO}{\mathrm{O}}
\newcommand{\rmd}{\mathrm{d}}
\newcommand{\rme}{\mathrm{e}}
\newcommand{\rmi}{\mathrm{i}}
\newcommand{\sech}{\,\mathrm{sech}\,}
\newcommand{\eps}{\varepsilon}
\begin{document}
\begin{center}
{\fontsize{15}{15}\fontfamily{cmr}\fontseries{b}\selectfont    
Wave train selection by invasion fronts in the FitzHugh--Nagumo equation
}
\\[0.2in]
Paul Carter$\,^1$ and Arnd Scheel$\,^2$\\
\textit{\footnotesize $\,^1$ University of Arizona, Department of Mathematics, 617 N Santa Rita Ave, Tucson, AZ 85721, USA}\\
\textit{\footnotesize $\,^2$ University of Minnesota, School of Mathematics,   206 Church St. S.E., Minneapolis, MN 55455, USA}\\

\date{\small \today} 
\end{center}

\begin{abstract}
\noindent We study invasion fronts in the FitzHugh--Nagumo equation in the oscillatory regime using singular perturbation techniques. Phenomenologically, localized perturbations of the unstable steady-state grow and spread, creating temporal oscillations whose phase is modulated spatially. The phase modulation appears to be selected by an invasion front that describes the behavior in the leading edge of the spreading process. We construct these invasion fronts for large regions of parameter space using singular perturbation techniques. Key ingredients are the construction of periodic orbits, their unstable manifolds, and the analysis of pushed and pulled fronts in the fast system. Our results predict the wavenumbers and frequencies of oscillations in the wake of the front through a phase locking mechanism. We also identify a parameter regime where nonlinear phase locked fronts are inaccessible in the singularly perturbed geometry of the traveling-wave equation. Direct simulations confirm our predictions and point to  interesting phase slip dynamics.
\end{abstract}

\section{Introduction}\label{s:1}

Front propagation into unstable states has been the object of interest in both mathematics and the applied sciences, starting with the work by Fisher and Kolmogrov--Petrovsky--Piscounov in genetics and population dynamics \cite{fisher,kpp}.
While there continues to be a tremendous amount of interest related to front propagation and invasion in the context of ecology, front propagation has received a considerable amount of attention in the physics community after connections with pattern formation \cite{deelanger} 
and plasma or fluid instabilities were successfully established. We refer to \cite{vS} 
for an extensive recent review. 

Our interest here is, much in the spirit of \cite{deelanger}, 
the role of invasion in the selection of patterns. Here, we think of patterns as sustained spatio-temporally non-constant states. The effect of the invasion process is primarily the selection of spatio-temporally periodic solutions with a wavenumber and frequency determined by parameters, only, rather than initial conditions in a spatially extended system $x\in\R$, say. This selection is in sharp contrast to quenching experiments, where random fluctuations around an unstable state typically lead to spatio-temporally disorganized patterns with a distribution of frequencies and wavenumbers, and possibly with defects playing a major role in the long-term dynamics; see \cite{kotzagiannidis,scheelstevens} for cases where the difference between patterns selected by invasion versus patterns selected from white noise is rather dramatic.

A typical difficulty associated with such pattern-formation contexts is the absence of comparison principles. In addition, pattern-forming fronts tend to be periodic in time, hence solutions to a boundary-value problem for parabolic problems. In those contexts, there are few examples where invasion fronts can be actually constructed; for some of the few examles see \cite{sh,hara} for fronts near Turing-type instabilities  and \cite{ac,ch} for topological constructions in phase separation problems. 

On the other hand, the selection problem can often be answered rather satisfactorily based on linearized information, only; see \cite{hs}.
Nonlinear fronts whose speeds and frequencies are governed by the linear growth in this fashion are often referred to as pulled fronts, whereas fronts that propagate faster than this predicted linear speed are referred to as pushed fronts. We caution however that this simple classification misses subtleties in the linear growth \cite{hslotka,hs}
and possible other resonant interactions \cite{holzerfayescheel}.

In this context, the present work contributes existence results for pulled and pushed invasion fronts that leave behind a spatially propagating wave train. The invasion fronts are stationary in a comoving frame. In particular, wave trains propagate with the same speed as the leading edge of the invasion front. This can be attributed to some extent to linear predictions, which, despite the oscillatory nature of the nonlinear dynamics, predict stationary behavior due to real roots of the dispersion relation. The analysis therefore reduces to the analysis of an ordinary differential equation for traveling-wave solutions. In this ordinary differential equation, we exploit a fast-slow decomposition to construct heteroclinic orbits. 

In the remainder of the introduction, we set up our problem and describe our main result. 

\paragraph{The FitzHugh--Nagumo equation.}
We consider the FitzHugh--Nagumo equation on the real line,
\begin{align}\label{eq_pde}
\begin{split}
u_t &= u_{xx}+f(u)-w\\
w_t &= \eps(u-\gamma w-a),
\end{split}
\end{align}
with cubic nonlinearity $f(u)=u(1-u)(u-a)$, $0<a<1/2$, and $\gamma$ sufficiently small such that $u=a,w=0$ is the unique equilibrium. Throughout, we are interested in $0\leq \eps\ll 1$. Note that the parameter $a$ appears in both the $u$- and the $w$-equation, simply making sure that the equilibrium $u=a$, $w=0$ is explicitly given and controlled by the parameter $a$. The FitzHugh--Nagumo equation is often thought of as an oversimplified model for signal propagation along a nerve axon, but has, in small variations, been used to model phenomena as disparate as transitions to  turbulence in fluids \cite{barkley}, 
CO-oxidation on platinum surfaces \cite{baer,mikhailov},
or heart arrhythmias \cite{Luther2011}. 
One often considers the excitable regime, $a\lesssim 0$ in our parameterization, when the unique equilibrium $u=a$, $w=0$ is stable, but finite-size, yet small perturbations lead to large excursions in phase space and long transients before recovery. Spatial coupling can then sustain activity through excitation pulses or, in two-dimensional media, spiral waves. 

We are here concerned with the oscillatory regime, $0<a<1/2$,  that is, dynamics of spatially constant solutions converge to periodic orbits. In fact, one can readily see that the unique equilibrium is completely unstable, and one can construct invariant regions that guarantee boundedness of solutions for positive times, thus ensuring existence of a limit cycle by Poincar\'e-Bendixson. The limit cycle can be described in more detail when $\eps\ll 1$. Indeed, the system is roughly equivalent to the van-der-Pol oscillator, with well understood relaxation oscillations~\cite{krupaszmolyan20012}.

\paragraph{Traveling waves.} 
In this general context, spatio-temporal dynamics of the FitzHugh--Nagumo equations have been largely understood as being organized around traveling pulses and pulse trains, so-called trigger waves, in the excitable regime, and around spatially homoegeneous oscillations, and their associated long-wavelength periodic modulations in the oscillatory regime. All of those can be found as solutions to an ordinary differential equation. Traveling waves $(u,w)(x,t)=(u,w)(x-ct)$ which travel to the right with positive wave speed $c>0$ solve the first-order system
\begin{align}\label{eq_twode}
\begin{split}
\dot{u} &= v\\
\dot{v} &= -cv-f(u)+w\\
\dot{w} &= -\frac{\eps}{c}(u-\gamma w-a),
\end{split}
\end{align}
where $~\dot{}=\frac{d}{d\xi}$ denotes differentiation with respect to the traveling wave variable $\xi = x-ct$. For $0<\eps/c\ll 1$, this three dimensional ODE is a slow-fast system, with two fast variables, $u$ and $v$,  and one slow variable, $w$. We refer to~\eqref{eq_twode} as the \emph{fast} system. 

The slow-fast structure in the traveling-wave equation has been exploited extensively, in particular in the construction of pulses and pulse trains~\cite{has, joe,cas}.
Note that this system has a unique equilibrium  $(u,v,w)=(a,0,0)$ which we denote by $p\in\R^3$. Of course, $p$ corresponds to the spatially constant equilibrium in \eqref{eq_pde}.

\paragraph{Spreading speeds and steepest fronts.} It turns out that traveling waves connecting the unstable equilibrium to periodic orbits come in families parameterized by the wave speed. Rather than constructing this entire family, we construct fronts that, among all traveling waves (and, in fact, among all traveling waves allowing for a time-periodic modulation in a comoving frame) have the steepest possible decay. Supported by a wealth of examples and by numerical evidence, these steepest fronts are \emph{selected} by compactly (or sufficiently rapidly decaying) initial data \cite{vS}. We encounter, matching the general classification in \cite{vS}, two types of steepest fronts, 
\begin{itemize}
 \item \emph{pulled fronts} possess steepest decay since the real part of leading eigenvalues in the relevant stable manifold achieves a minimum as a function of wave speed for the selected pulled speed;
 \item \emph{pushed fronts} possess steepest decay since the heteroclinic connects to a strong stable manifold for the selected pushed speed. 
\end{itemize}
The speed of pulled fronts is, by definition, determined by the linearization at the unstable state, only. The speed of pushed fronts depends on the shape of the nonlinearity, as typical for codimension-one heteroclinic orbits. We give more background on these concepts and a brief, cautious guide to the literature in~\S\ref{s:spsp}, but we will state our main results using this terminology below. 

\paragraph{Periodic orbits.} For each $0<a<1/2$ and each sufficiently small $\eps>0$, the traveling-wave equation \eqref{eq_twode} possess a family of periodic orbits parameterized by the speed $c>0$, which effectually represent wave train solutions to the FitzHugh--Nagumo equation. The family of periodic orbits naturally breaks into two qualitatively different pieces, the ``trigger'' and the ``phase'' piece. 
The trigger family had been constructed in \cite{STR}. We outline that construction and add the construction of the phase family in~\S\ref{sec_periodicorbits}. More precisely, we construct periodic orbits for any given $c>0$, with $\eps$ sufficiently small, which are
\begin{itemize}
 \item \emph{trigger wave trains} $(u,v,w)_\mathrm{h}(\xi;c,a,\eps)$ for $0<c<c_*(a)$, and
 \item \emph{phase wave trains} $(u,v,w)_\mathrm{nh}(\xi;c,a,\eps)$ for $c_*(a)<c$,
\end{itemize}
where 
\begin{equation}\label{e:csa}
 c_*(a)=\sqrt{\frac{1-a+a^2}{2}}.
\end{equation}
Phase wave-trains are distinguished by the fact that their traveling-wave trajectory passes through the maximum or minimum of the cubic. As a consequence, the singular perturbation analysis is somewhat more subtle, involving a non-hyperbolic piece of the slow manifold. More phenomenologically, these waves possess very long wavelength and roughly resemble the relaxation oscillations in the pure kinetics. Their amplitude is roughly independent of the period, that is they exhibit fully developed oscillations, as opposed to the somewhat more narrowly spaced `trigger waves. Trigger waves are constructed out of heteroclinic orbits in the $u$-equation, propagating roughly with the speed of propagation of these interfaces from the equation with frozen value of $w$. We refer to  \cite{kt} for a discussion of these two types of excitation waves. 

We now state our main result on periodic orbits.
\begin{Theorem}[Periodic Orbits]\label{thm_periodicexistence}
Fix $0<a<1/2$ and $c>0$. Then for each sufficiently small $0<\eps\ll1$, the system~\eqref{eq_twode} admits a periodic orbit $\Gamma_\eps(c)$ with wave speed $c$ and period $L(c;\eps)$. Moreover, fix either $0<c<c_*(a)$ or $c_*(a)<c$; then for all sufficiently small $\eps$,
\[
 L(c;\eps)=\left\{ \begin{array}{ll} 
                    {L_0(c)}{\eps^{-1}}+\rmO(\log \eps),& 0<c<c_*(a),\\
                    {L_0(c)}{\eps^{-1}}+\mathcal{O}(\eps^{-2/3}),& c_*(a)<c.\\
                   \end{array}\right.
\]
for some function $L_0(c)>0$. The error terms are understood in terms of $\eps$ for each fixed $c$. The functions $L(c;\eps)$ and $L_0(c)$ are monotonically increasing in $c$. 
\end{Theorem}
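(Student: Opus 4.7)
The plan is to use geometric singular perturbation theory (GSPT) to construct the periodic orbit as a perturbation of a singular limit cycle in the $\eps=0$ system, handling the trigger and phase regimes separately.

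First I would analyze the $\eps=0$ layer problem, whose critical manifold $M_0=\{(u,0,w):w=f(u)\}$ is the cubic graph. Away from the two fold points $u=a_\pm$ where $f'(u)=0$, normal hyperbolicity holds and $M_0$ splits into attracting outer branches $M_0^{\ell,r}$ and a repelling middle branch. Fenichel theory produces locally invariant slow manifolds $M_\eps^{\ell,r}$ which are $O(\eps)$ perturbations of $M_0^{\ell,r}$, carrying the reduced slow flow $\dot w=-(1/c)(u-\gamma w-a)$ along the graph of $f$.

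For the trigger case $0<c<c_*(a)$, I would exploit that with $w$ frozen the fast subsystem is a Nagumo equation whose outer-to-outer heteroclinic has an explicit closed-form speed as a function of $w$. Solving for speed $c$ gives exactly two values $w_-(c)<w_+(c)$, each providing a heteroclinic between $M_0^\ell$ and $M_0^r$; these values exist precisely when $c<c_*(a)$, and $c_*(a)=\sqrt{(1-a+a^2)/2}$ is the maximum of the heteroclinic-speed curve. The singular cycle is then the concatenation of the front jump at $w_+$, the slow descent along $M_0^r$ from $w_+$ to $w_-$, the back jump at $w_-$, and the slow ascent along $M_0^\ell$ back to $w_+$; the singular period $L_0(c)$ is the sum of the two slow-branch integrals $\int dw/\dot w$. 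Persistence to a true $\eps>0$ periodic orbit $\Gamma_\eps(c)$ is handled by the exchange lemma in the spirit of Jones--Kopell at each corner, tracking the unstable fibers of $M_\eps^\ell$ past $M_\eps^r$ and vice versa; the logarithmic correction $O(\log \eps)$ arises from the entry/exit times in the Fenichel neighborhoods of $M_\eps^{\ell,r}$.

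For the phase case $c>c_*(a)$, no speed-$c$ outer-to-outer heteroclinic exists in the layer problem, so the singular orbit instead follows an attracting branch all the way to a fold, makes a fast jump across to the opposite branch at the fold's $w$-level, slides, and returns through the opposite fold. The main obstacle is that the slow manifold loses normal hyperbolicity at the folds, so Fenichel does not apply there; I would handle each fold passage using the Krupa--Szmolyan blow-up construction from \cite{krupaszmolyan20012}, which yields an entry/exit map and, critically, gives a passage time of order $\eps^{-2/3}$. Summing this with the $O(1)$ slow-flow times and $O(\log\eps)$ hyperbolic transition times produces the asymptotic $L_0(c)\eps^{-1}+\mathcal{O}(\eps^{-2/3})$. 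In both cases, monotonicity of $L_0(c)$ follows from the explicit $1/c$ scaling of the slow velocities together with monotone dependence of the jump levels $w_\pm(c)$ (respectively the fold levels, which are $c$-independent) on $c$. The fold-point blow-up is the technically most delicate step, though by now standard in GSPT.
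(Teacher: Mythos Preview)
Your overall strategy is essentially the paper's: Fenichel theory plus exchange-lemma matching in the hyperbolic regime, and a fold blow-up analysis in the nonhyperbolic regime, with the leading-order period coming from the slow-branch integrals. Two points deserve comment.

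First, a genuine gap: the theorem asserts monotonicity of $L(c;\eps)$ in $c$ for each small $\eps>0$, not only of the leading coefficient $L_0(c)$. Your argument (the $1/c$ scaling of the slow vector field together with the monotone widening of the jump interval $[w_-(c),w_+(c)]$) establishes monotonicity of $L_0(c)$, but the error terms are only controlled pointwise in $c$, so you cannot simply differentiate the expansion. The paper closes this gap by a different, PDE-level argument: writing the wave train as a solution of the rescaled traveling-wave problem with fixed period $2\pi$, one shows that $\partial_c L(c;\eps)=0$ would force $\partial_c\phi_{\mathrm{per}}$ into the generalized kernel of the linearization, hence produce a Floquet multiplier $1$ of algebraic multiplicity two. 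This contradicts the hyperbolicity of $\Gamma_\eps(c)$ as a fixed point of the Poincar\'e map, which is a by-product of the exchange-lemma construction. You should add this step.

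Second, a minor correction on the source of the $\mathcal{O}(\eps^{-2/3})$ term. The Krupa--Szmolyan passage time through a regular fold is $\mathcal{O}(\eps^{-1/3})$ in the fast variable, not $\mathcal{O}(\eps^{-2/3})$. The $\eps^{-2/3}$ in the period actually comes from the $\mathcal{O}(\eps^{1/3})$ $C^1$-error (equivalently $\mathcal{O}(\eps^{2/3})$ $C^0$-error) in the location of the extended slow manifold $\mathcal{M}^{\ell,+}_\eps$ near the fold: this introduces an $\mathcal{O}(\eps^{1/3})$ relative error in the slow-flow integrand, which over the $\mathcal{O}(\eps^{-1})$ slow time yields the stated correction. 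Your conclusion is right, but the mechanism you cite is not.
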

As mentioned above, the new contribution here is the case $c_*(a)<c$, which is discussed in  Proposition~\ref{prop_nh_periodicorbits};  see~\S\ref{sec_periodicorbits} for the geometry of the periodic orbits in the context of the traveling wave equation~\eqref{eq_twode}.

\paragraph{Invasion fronts --- main results.} We focus throughout on $a<1/2$; the case $a>1/2$ is obtained by reflection $u\mapsto 1-u$. Our main results will invoke several speeds in addition to the critical speed $c_*(a)$ where wave trains reach maximal amplitude, defined in \eqref{e:csa}, that we shall define now:
\begin{itemize}
 \item $c_\mathrm{lin}=2\sqrt{a(1-a)}$, the linear or pulled speed;
 \item $c_\mathrm{p}=\frac{1}{\sqrt{2}}(1+a)$, the pushed speed, defined for $a<1/3$;
 \item $c_\mathrm{bs}(a)=\frac{1}{\sqrt{2}}(1-2a)$, the bistable speed for fronts between $u=0$ and $u=1$ for $w=\eps=0$.
\end{itemize}
All those speeds are understood in the limit $\eps=0$ for fixed $a$. Note that $c_\mathrm{lin}>c_\mathrm{bs}$ for $a>a_\mathrm{b}=(3-\sqrt{6})/6$. Our main results will include a construction of extensions of pulled and pushed speeds to finite $\eps>0$; we denote those speeds by the same name slightly abusing notation.

Our main results concern the existence of both pulled and pushed fronts in~\eqref{eq_pde} for sufficiently small $\eps>0$, which are summarized in the following two theorems.
\begin{Theorem}[Pulled fronts]\label{thm_pulledexistence}
Fix $a_\mathrm{b}<a<1/2$. There exists $\eps_0>0$ such that for all $0<\eps<\eps_0$, there exists a wave speed $c_\mathrm{lin}(a,\eps) =c_\mathrm{lin}+\mathcal{O}(\eps)$ such that~\eqref{eq_twode} admits heteroclinic orbits $F^\ell_\eps(a), F^r_\eps(a)$ which are backward asymptotic to the periodic orbit $\Gamma_\eps(c_\mathrm{p}(a,\eps))$ and forward asymptotic to the equilibrium $(a,0,0)$. The selected periodic orbit is a trigger wave train when $a_\mathrm{b}<a<(3-\sqrt{5})/{6}$, and a phase wave train when $(3-{\sqrt{5}})/{6}<a<{1}/{2}$. Moreover, all heteroclinic orbits in a vicinity of $F^\ell$ or $F^r$, for speeds close to $c_\mathrm{lin}(a,\eps)$, possess weaker exponential decay towards $(a,0,0)$. 
\end{Theorem}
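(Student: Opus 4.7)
The plan is to construct $F_\eps^{\ell}$ and $F_\eps^{r}$ as singular perturbations of planar KPP-type heteroclinics in the invariant slice $\{w=0\}$ at $\eps=0$, matched to slow segments of the periodic orbit via geometric singular perturbation theory.

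First I would analyze the $\eps=0$ dynamics in the slice $\{w=0\}$, where the system reduces to the planar equation $\ddot u + c\dot u + f(u)=0$ with saddles at $(0,0)$ and $(1,0)$ and a stable equilibrium at $(a,0)$ whose eigenvalues $\lambda_\pm(c)=\tfrac12\bigl(-c \pm \sqrt{c^2-4a(1-a)}\bigr)$ coalesce at $c=c_\mathrm{lin}$. At this critical speed the equilibrium is a degenerate stable node, and the one-dimensional unstable manifold of the saddle $(0,0)$ (respectively $(1,0)$) enters it at the critical eigenvalue rate, yielding the planar pulled fronts $F_0^{\ell}$ and $F_0^{r}$. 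The hypothesis $a>a_\mathrm{b}$ ensures $c_\mathrm{lin}>c_\mathrm{bs}$, the standard criterion that rules out pushed behavior (which would reroute the saddle's unstable manifold toward the opposite saddle) and guarantees that the planar connections are of genuinely pulled type.

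Next I would lift these planar connections to the full three-dimensional system. The points $(0,0,0)$ and $(1,0,0)$ lie on the outer branches $\mathcal{M}_0^{L}$ and $\mathcal{M}_0^{R}$ of the critical manifold $\{v=0,\,w=f(u)\}$. For $c=c_\mathrm{lin}$, the slow segment of $\Gamma_0(c_\mathrm{lin})$ on the left outer branch sweeps a $w$-interval containing $0$, so the singular heteroclinic is the concatenation of this slow segment (providing the backward asymptotic periodic piece) with the planar KPP front $F_0^{\ell}$ at $w=0$ (providing the fast jump into $(a,0,0)$); the right outer branch is handled symmetrically. The trigger/phase dichotomy reduces to comparing $c_\mathrm{lin}$ with $c_*$: solving $c_\mathrm{lin}(a)=c_*(a)$ gives $9a(1-a)=1$, whose root in $(0,1/2)$ is $a=(3-\sqrt{5})/6$.

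The perturbation to $\eps>0$ proceeds via Fenichel's theorem and an exchange-lemma argument. The slow manifolds $\mathcal{M}_\eps^{L/M/R}$ persist as $\mathcal{O}(\eps)$-perturbations, and $\Gamma_\eps(c)$ persists by Theorem~\ref{thm_periodicexistence}; the two-dimensional unstable manifold $W^u(\Gamma_\eps(c))$ is then the union of fast unstable fibers over the slow segment of $\Gamma_\eps(c)$. At $(a,0,0)$ the two-dimensional strong stable manifold $W^{ss}((a,0,0))$ is tangent to the generalized eigenspace of the two fast stable eigenvalues and perturbs the slice $\{w=0\}$ smoothly in $(c,\eps)$ across the coalescence. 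Defining $c_\mathrm{lin}(a,\eps)$ implicitly by the coincidence of these two fast eigenvalues gives $c_\mathrm{lin}(a,\eps)=c_\mathrm{lin}+\mathcal{O}(\eps)$. At this speed, the singular transverse intersection of $W^u(\Gamma_0)$ with $\{w=0\}$ along $F_0^{\ell/r}$ persists as a transverse intersection of $W^u(\Gamma_\eps)$ with $W^{ss}((a,0,0))$ in three dimensions, producing $F_\eps^{\ell/r}$. For speeds $c$ near but different from $c_\mathrm{lin}(a,\eps)$, the two fast eigenvalues split into distinct real values (for $c>c_\mathrm{lin}(a,\eps)$) or a complex pair (for $c<c_\mathrm{lin}(a,\eps)$), and the persistent intersection family yields nearby heteroclinics decaying at the weaker rate $|\lambda_+(c)|$ or $c/2$, respectively, both strictly less than $c_\mathrm{lin}(a,\eps)/2$.

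The principal technical obstacle is the treatment of the fast jump in the degenerate regime: near $c=c_\mathrm{lin}$ the fast eigenvalues at $(a,0,0)$ coalesce and the standard normally hyperbolic splitting between the strong stable manifold and the slow direction degrades, so $W^{ss}((a,0,0))$ must be tracked uniformly through the coalescence. Verifying transversality of its intersection with $W^u(\Gamma_\eps)$ at the selected speed then requires a Melnikov-type computation along the singular pulled front $F_0^{\ell/r}$.
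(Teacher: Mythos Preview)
Your overall strategy matches the paper's: build singular heteroclinics by concatenating the layer fronts $\phi^{\ell/r}_\mathrm{lin}$ at $w=0$ with slow segments of $\Gamma_0(c_\mathrm{lin})$, define $c_\mathrm{lin}(a,\eps)$ by the double-eigenvalue condition, and obtain $F^{\ell/r}_\eps$ as a transverse intersection of the two-dimensional $\mathcal{W}^\mathrm{u}(\Gamma_\eps)$ with the two-dimensional $\mathcal{W}^\mathrm{ss}(p)$. The paper factors this through an intermediate result (its Proposition on stable/unstable manifolds of $\Gamma_\eps$) showing $\mathcal{W}^\mathrm{u}(\Gamma_\eps)$ is $\mathcal{O}(e^{-\eta/\eps})$-close to $\mathcal{W}^\mathrm{u}(\mathcal{M}^{\ell/r}_\eps)$ in Fenichel neighborhoods, and then intersects the latter with $\mathcal{W}^\mathrm{ss}(p)$; your direct exchange-lemma phrasing is equivalent.

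Your ``principal technical obstacle,'' however, is not one. The coalescence of the two fast eigenvalues at $c=c_\mathrm{lin}$ does not degrade the splitting that defines $\mathcal{W}^\mathrm{ss}(p)$: that manifold is determined by the spectral gap between the two fast eigenvalues (both $\mathcal{O}(1)$ negative) and the slow eigenvalue ($\mathcal{O}(\eps)$), a gap which is uniform through the coalescence. Hence $\mathcal{W}^\mathrm{ss}(p)$ is a smooth two-dimensional manifold depending smoothly on $(c,\eps)$, and at $\eps=0$ it is exactly the plane $\{w=0\}$. Transversality with $\mathcal{W}^\mathrm{u}(\mathcal{M}^r_0)$ is then a pure dimension count ($2+2>3$) together with the observation that $\mathcal{W}^\mathrm{u}(\mathcal{M}^r_0)$ is transverse to constant-$w$ planes; no Melnikov integral is needed. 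The paper reserves Melnikov theory for the pushed case, where the target is the \emph{one}-dimensional super-strong stable manifold and the intersection is genuinely codimension one in $c$.

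A minor correction on the role of $a>a_\mathrm{b}$: it does not ``rule out pushed behavior'' (pushed and pulled fronts coexist for $a_\mathrm{b}<a<1/3$). Rather, $c_\mathrm{lin}>c_\mathrm{bs}$ ensures that the jump-off height $w_\mathrm{f}(c_\mathrm{lin})$ of the singular periodic orbit lies below $w=0$, so that the slow segments of $\Gamma_0(c_\mathrm{lin})$ on \emph{both} outer branches contain the basepoint $w=0$ of the fast jump. This is why the left front $F^\ell$ also requires $a>a_\mathrm{b}$, even though the layer connection $\phi^\ell$ itself exists for all $c>0$.
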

\begin{Remark}
The $\ell,r$--superscripts for the fronts $F^\ell_\eps(a), F^r_\eps(a)$ given by Theorem~\ref{thm_pulledexistence} correspond to jumps in the fast subsystem of~\eqref{eq_twode} which originate from the left/right branches of the critical manifold; see~\S\ref{sec_slowfast} for the relevant geometry. In particular, we  have two distinct pulled front solutions: the $u$-profile of the front $F^\ell_\eps(a)$ eventually increases monotonically as $\xi \to \infty$, while that of the front $F^r_\eps(a)$ eventually decreases monotonically as $\xi \to \infty$ for $a>1/3$. We expect the front $F^r_\eps(a)$ to be unstable for $a<1/3$. 
\end{Remark}
\begin{Theorem}[Pushed fronts]\label{thm_pushedexistence}
Fix $0<a<1/3$. There exists $\eps_0>0$ such that for all $0<\eps<\eps_0$, there exists a wave speed $c_\mathrm{p}(a,\eps) =c_\mathrm{p}+\mathcal{O}(\eps)$ such that~\eqref{eq_twode} admits a heteroclinic orbit $P^r_\eps(a)$ which is backward asymptotic to the periodic orbit $\Gamma_\eps(c_\mathrm{p}(a,\eps))$ and decays exponentially as $\xi\to \infty$ to $(a,0,0)$ at the strongest possible rate.  In particular, all heteroclinic orbits in a vicinity of $P^r$, for speeds close to $c_\mathrm{lin}(a,\eps)$, possess weaker exponential decay towards $(a,0,0)$. 
\end{Theorem}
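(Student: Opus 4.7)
The plan is to construct $P^r_\eps(a)$ by perturbing off an explicit singular heteroclinic at $\eps=0$, $c=c_{\mathrm p}$. At $w=0$ the fast subsystem of~\eqref{eq_twode} reduces to the Huxley/Nagumo bistable front equation, which admits the explicit heteroclinic
\[
u_{\mathrm s}(\xi) = a + \frac{1-a}{1 + \exp\bigl((1-a)\xi/\sqrt{2}\bigr)},\qquad v_{\mathrm s}(\xi) = \dot u_{\mathrm s}(\xi),\qquad w\equiv 0,
\]
connecting $(1,0,0)$ to $(a,0,0)$ at precisely the speed $c_{\mathrm p}=(1+a)/\sqrt{2}$. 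First I would analyze the linearization of~\eqref{eq_twode} at $(a,0,0)$: for $\eps=0$ and $c=c_{\mathrm p}$ its eigenvalues are $\{-a\sqrt{2},\,-(1-a)/\sqrt{2},\,0\}$, and the inequality $(1-a)/\sqrt{2}>a\sqrt{2}$ is equivalent to the hypothesis $a<1/3$, identifying $\lambda_{\mathrm{ss}}=-(1-a)/\sqrt{2}$ as the strong stable eigenvalue. Since the decay rate of $u_{\mathrm s}$ is exactly $(1-a)/\sqrt{2}$, the singular front lies in the one-dimensional strong stable manifold $W^{\mathrm{ss}}(a,0,0)$; the zero eigenvalue perturbs to an $\mathcal{O}(\eps)$ negative one for $\eps>0$ small, and $W^{\mathrm{ss}}_\eps$ persists as a smooth one-dimensional invariant manifold.

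Next I would assemble the singular pushed heteroclinic as a concatenation: a slow segment on the right branch $M_0^{\mathrm r}=\{(u_{\mathrm r}(w),0,w)\}$ of the critical manifold connecting a point of $\Gamma_0(c_{\mathrm p})$ to $(1,0,0)$, followed by the Nagumo layer $(u_{\mathrm s},v_{\mathrm s},0)$. Since $c_{\mathrm p}>c_*(a)$ for every $a\in(0,1/3)$, the relevant periodic orbit $\Gamma_\eps(c_{\mathrm p})$ from Theorem~\ref{thm_periodicexistence} is of phase type and lies at distance $\mathcal{O}(\eps^{2/3})$ from $M_\eps^{\mathrm r}$ after the blow-up analysis near the non-hyperbolic knee. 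Fenichel theory supplies the persistent slow manifold $M_\eps^{\mathrm r}$ together with its unstable fibration, and $W^{\mathrm u}(\Gamma_\eps(c))$ coincides with a two-dimensional sub-bundle thereof after a short transient.

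I would then track $W^{\mathrm u}(\Gamma_\eps(c))$ across the singular geometry using the exchange lemma: the slow passage down $M_\eps^{\mathrm r}$ aligns it exponentially with the unstable fibration of $M_\eps^{\mathrm r}$, and the fast jump carries it close to the translates of the Nagumo profile, yielding a two-dimensional invariant manifold $\mathcal{W}(c,\eps)$ in a neighborhood of $(a,0,0)$ that is $C^1$--$\mathcal{O}(\eps)$-close to the singular layer $W^{\mathrm u}(M_0^{\mathrm r})\cap\{w=0\}$. In the ambient three-dimensional space, intersection of the two-dimensional $\mathcal{W}(c,\eps)$ with the one-dimensional $W^{\mathrm{ss}}_\eps(a,0,0)$ along a heteroclinic orbit is, modulo the flow, a codimension-one condition, for which $c$ is the natural tuning parameter. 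Define a Melnikov-type signed-distance function $D(c,\eps)$ measuring the separation between $\mathcal{W}(c,\eps)$ and $W^{\mathrm{ss}}_\eps$ on a section transverse to the Nagumo front; by construction $D(c_{\mathrm p},0)=0$, and the standard adjoint integral along $u_{\mathrm s}$ gives $\partial_c D(c_{\mathrm p},0)\neq 0$. The implicit function theorem then yields a unique $c_{\mathrm p}(a,\eps)=c_{\mathrm p}+\mathcal{O}(\eps)$ along which $\mathcal{W}(c_{\mathrm p}(a,\eps),\eps)\cap W^{\mathrm{ss}}_\eps(a,0,0)$ contains the pushed front $P^r_\eps(a)$.

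The principal obstacle is that the exchange lemma, as usually applied, aligns the tracked manifold only with the full stable foliation $W^{\mathrm s}_\eps(a,0,0)$ rather than with its one-dimensional strong stable sub-foliation $W^{\mathrm{ss}}_\eps$; only alignment with the latter detects the pushed (rather than pulled) character and makes $D$ a meaningful signed distance. Resolving this requires a refined exchange lemma adapted to strong stable sub-bundles, available here precisely because of the strict spectral gap $|\lambda_{\mathrm{ss}}|-|\lambda_{\mathrm{ws}}|=(1-a)/\sqrt{2}-a\sqrt{2}>0$, which is the content of the hypothesis $a<1/3$. A secondary technical ingredient is the passage of $W^{\mathrm u}(\Gamma_\eps(c))$ across the non-hyperbolic knee of the phase wave train, handled by the same blow-up used for the periodic orbit construction.
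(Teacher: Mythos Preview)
Your strategy coincides with the paper's proof: identify the explicit singular pushed front $\phi^r_{\mathrm p}$ lying in the one-dimensional super-strong stable manifold $\mathcal{W}^{\mathrm{sss}}(p)$, show that $\mathcal{W}^{\mathrm u}(\Gamma_\eps(c))$ is exponentially close to $\mathcal{W}^{\mathrm u}(\mathcal{M}^r_\eps)$ near $w=0$ (this is Proposition~\ref{prop_poinvmflds}), set up a Melnikov distance $D(\tilde c,\eps)$ between $\mathcal{W}^{\mathrm u}(\mathcal{M}^r_\eps)$ and $\mathcal{W}^{\mathrm{sss}}(p)$ with nonzero $c$-derivative (Lemma~\ref{lem_cmel}; the paper computes both Melnikov coefficients $M^c_{\mathrm f},M^\eps_{\mathrm f}$ explicitly from the adjoint solution), and solve $D=0$ by the implicit function theorem.

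Your ``principal obstacle'' is not a real obstacle, though, and you should drop that paragraph. The manifold $\mathcal{W}^{\mathrm{sss}}(p)$ is simply the one-dimensional strong stable manifold of the hyperbolic equilibrium $p$; it exists and depends smoothly on $(c,\eps)$ by the ordinary strong stable manifold theorem once the spectral gap $|\nu_-|>|\nu_+|$ holds, and no exchange lemma enters its definition. The exchange lemma is used only on the \emph{other} side, to align $\mathcal{W}^{\mathrm u}(\Gamma_\eps(c))$ with the unstable fibration of the saddle-type slow manifold $\mathcal{M}^r_\eps$, and there is no sub-bundle subtlety there. With both manifolds in hand, $D$ is a well-defined smooth function and your adjoint computation applies directly---this is exactly how the paper proceeds. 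Similarly, the fold passage you flag as secondary is confined to the proof of Proposition~\ref{prop_poinvmflds} (closeness of $\mathcal{W}^{\mathrm u}(\Gamma_\eps)$ to $\mathcal{W}^{\mathrm u}(\mathcal{M}^r_\eps)$ in a Fenichel neighborhood strictly between the folds); the fast jump itself occurs at $w=0$, well inside the normally hyperbolic region, and requires no blow-up.
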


\begin{Remark}\label{r:cg}
 For all fronts constructed here, wave trains are generated by the front in the following sense. Given a wave train, one can calculate a group velocity, that measures the speed of propagation of distrubances by the linearized equation at the wave train; see \cite{ssdefect} for background. One then says that a wave train is ``generated'' at an front interface if the group velocity of the wave train, in a frame moving with the speed of the interface, points away from the interface. A short calculation shows that the group velocity is given through 
 \[
  c_\mathrm{g}=\frac{\rmd \omega}{\rmd k}=c-L\frac{\rmd c}{\rmd L},
 \]
 where $\omega=ck$ and $L=2\pi/k$. Since wave trains and front interface propagate with the same (phase) velocity, $\omega/k=c$, the group velocity in the comoving frame is simply 
 \[
  c_\mathrm{g}-c=-L\frac{\rmd c}{\rmd L},
 \]
 and the sign equals the sign of $-\frac{\rmd L}{\rmd c}$, which, according to Theorem \ref{thm_periodicexistence} is always negative as claimed. 
 
 One can be slightly more precise in the case of phase waves, where $L(c)\sim c\bar{L}/\eps$ at leading order in $\eps$, such that $-L (\rmd c/\rmd L)=-c$, that is, at leading order in $\eps$, the group velocity of phase wave trains vanishes in a steady frame. 
\end{Remark}

\paragraph{Outline.} The remainder of this paper is organized as follows. We discuss linear spreading speeds and front selection criteria in~\S\ref{s:spsp}. In~\S\ref{sec_slowfast}, we consider the singular limit $\eps \to 0$ of the traveling wave ODE~\eqref{eq_twode} in the context of geometric singular perturbation theory, and we construct singular periodic orbits and pushed/pulled front solutions. The persistence of these solutions for $0<\eps \ll 1$ and the proofs of Theorems~\ref{thm_periodicexistence},~\ref{thm_pulledexistence} and~\ref{thm_pushedexistence} are given in~\S\ref{sec_persistence}. We present numerical simulations in~\S\ref{sec_numerics} to visualize the above results, and we conclude with a discussion in~\S\ref{sec_discussion}.

\paragraph{Acknowledgments.} AS gratefully acknowledges support through NSF grant DMS--1311740. 

\section{Spreading speeds and front selection}
\label{s:spsp}
We present a very brief review of speed selection criteria for fronts and motivate the connection with the selection of steepest fronts in invasion processes. 

\paragraph{Linear spreading speeds.} Linearizing \eqref{eq_pde} at the constant state $u=a,w=0$, we find after Fourier-Laplace transform $\rme^{\nu (x-ct) + \lambda t}$, $\nu=\rmi k\in\rmi\R$, 
\begin{equation}\label{e:dm}
 (\lambda-c\nu)\left(\begin{array}{c}
  u\\v
 \end{array}\right) = \left(\begin{array}{cc} \nu^2 +f'(a)& -1\\ \eps & -\gamma \eps\end{array}\right) \left(\begin{array}{c}
  u\\v
 \end{array}\right) ,
\end{equation}
which, taking determinants and writing $\alpha=f'(a)=a(1-a)>0$, is equivalent to the \emph{dispersion relation},
\begin{equation}\label{e:d}
d(\lambda,\nu;\eps):=(\lambda-c\nu -\nu^2-\alpha)(\lambda-c\nu+\gamma\eps)+\eps=0.
\end{equation}
Evaluating at $\lambda=0$ gives the cubic equation for eigenvalues at the linearization of $p$. Following \cite{hs}, 
we are interested in double roots of $d$, 
\begin{equation}\label{e:dd}
 d(\lambda,\nu;\eps)=0,\qquad \partial_\nu d(\lambda,\nu;\eps)=0. 
\end{equation}
This pair of complex equations possesses a finite number of solutions. We are interested in particular solutions, \emph{pinched double roots}, which can be defined as follows. For each double root $(\lambda_*,\nu_*)$, there are (at least) two distinct roots $\nu_\pm(\lambda)$ such that $\nu_\pm(\lambda_*)=\nu_*$. Since $d$ is analytic, we can follow these two roots along a path where $\Re\lambda\nearrow +\infty$. The pinching condition then assumes that $\pm\Re\nu_\pm(\lambda)>0$ for $\Re\lambda\gg 1$; see \cite{hs} for details.

It turns out \cite[Lemma 4.4]{hs} that, generically and in our present situation,  spatially localized initial conditions to the linearized equation grow pointwise if and only if there exists a pinched double root with $\Re\lambda_*>0$. In order to understand spreading speeds, one therefore investigates pinched double roots, and thereby pointwise stability, depending on the wave speed $c$. 
The linear spreading speed is then defined as the largest speed $c$ such that there exists a pinched double root with $\Re\lambda_*\geq 0$. The resulting algebraic equations are usually difficult to analyze analytically. In the case $\eps=0$, one can however readily compute spreading speeds. For this, notice that roots of $d$ are explicitly given through
\[
 \nu_\pm(\lambda)=-\frac{c}{2}\pm\sqrt{\frac{c^2}{4}-\alpha+\lambda}, \qquad \nu_0(\lambda)=\lambda/c.
\]
Pinched double roots can therefore occur when either $\nu_+(\lambda)=\nu_-(\lambda)$, or when $\nu_-(\lambda)=\nu_0(\lambda)$. The latter is excluded for $\lambda>0$, such that one finds the following result. 
\begin{Lemma}[Linear spreading speed]
 The linear spreading speed associated with the dispersion relation \eqref{e:d} is given through 
 \[
  c_\mathrm{lin}(\eps;a)=2\sqrt{\alpha}+\rmO(\eps),
 \]
  for $\eps$ sufficiently small. Moreover, the associated roots $\lambda_*=0$, $\nu_*<0$ are real and $\Re\nu_+(0)$ is minimal for $c=c_\mathrm{lin}(\eps;a)$. 
\end{Lemma}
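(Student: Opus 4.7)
My approach is to analyze the $\eps=0$ case explicitly via the factored dispersion relation, and then perturb by the implicit function theorem.

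At $\eps=0$ one has $d(\lambda,\nu;0)=(\lambda-c\nu)(\lambda-c\nu-\nu^2-\alpha)$, so the three $\nu$-roots for fixed $\lambda$ are $\nu_\pm(\lambda)=-c/2\pm\sqrt{c^2/4-\alpha+\lambda}$ and $\nu_0(\lambda)=\lambda/c$. Pinched double roots arise either from a collision $\nu_+=\nu_-$ or from $\nu_\pm=\nu_0$. I would rule out the latter for $\Re\lambda\geq 0$ by the sign argument the authors sketch: the equation $\nu_-=\nu_0$ forces $\sqrt{c^2/4-\alpha+\lambda}=-c/2-\lambda/c$, whose left side is nonnegative and right side nonpositive in the relevant regime, and analogously for $\nu_+=\nu_0$. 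The collision $\nu_+=\nu_-$ gives $\lambda_*=\alpha-c^2/4$, $\nu_*=-c/2$, and is pinched since $\Re\nu_\pm(\lambda)\to\pm\infty$ along $\Re\lambda\to+\infty$. Imposing $\Re\lambda_*\geq 0$ and maximizing over $c$ gives $c_\mathrm{lin}(0;a)=2\sqrt{\alpha}$, with $\lambda_*=0$ and $\nu_*=-\sqrt{\alpha}<0$ both real.

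To extend to $\eps>0$, I would apply the IFT to the complex system $d=0$, $\partial_\nu d=0$ in the unknowns $(\lambda,\nu)$ with parameters $(c,\eps)$. A direct computation at the base point $(0,-\sqrt{\alpha},2\sqrt{\alpha},0)$ gives $\partial_\lambda d=2\alpha$, $\partial_\nu d=0$, and $\partial_\nu^2 d=-4\alpha$, so the Jacobian determinant with respect to $(\lambda,\nu)$ equals $-8\alpha^2\neq 0$ and the IFT yields a smooth family of double roots $(\lambda_*(c,\eps),\nu_*(c,\eps))$. Pinching persists by continuity of the labeled branches $\nu_\pm$ along the deformation path. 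A second application of the IFT, this time to the real equation $\Re\lambda_*(c,\eps)=0$ with $\partial_c\lambda_*|_{\eps=0}=-c/2=-\sqrt{\alpha}\neq 0$, yields $c_\mathrm{lin}(\eps;a)=2\sqrt{\alpha}+\rmO(\eps)$. Smoothness also preserves the reality of $\lambda_*$ and $\nu_*$ at leading order. For the minimality of $\Re\nu_+(0)$, at $\eps=0$ one has $\Re\nu_+(0)=-c/2$ (decreasing in $c$) for $c<2\sqrt{\alpha}$ and $\Re\nu_+(0)=-c/2+\sqrt{c^2/4-\alpha}$ (increasing in $c$ from $-\sqrt{\alpha}$) for $c>2\sqrt{\alpha}$, so the minimum is attained at $c=2\sqrt{\alpha}$; for $\eps>0$ the discriminant of $d(0,\nu;\eps)=0$ in $\nu$ depends smoothly on $(c,\eps)$ and vanishes at $c=c_\mathrm{lin}(\eps;a)$, which preserves the branching structure and hence the minimality.

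The main obstacle is not any individual calculation but the bookkeeping of which pinched double root dominates: for $\eps>0$ one must argue that no competing pinched double root drifts in from infinity or from the $\nu_0$ branch to overtake the one constructed. This is a continuity and compactness argument, supported by the fact that $d$ is a polynomial of fixed degree in $\nu$ whose other roots at $\eps=0$ stay uniformly away from the unstable half-plane for $c$ in a neighborhood of $2\sqrt{\alpha}$, combined with the boundedness in $\lambda$ of any candidate pinched double root with $\Re\lambda_*\geq 0$ for such $c$.
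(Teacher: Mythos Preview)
Your proposal is correct and follows essentially the same approach as the paper: compute the pinched double root explicitly at $\eps=0$, continue via the implicit function theorem, and verify minimality of $\Re\nu_+(0)$ from the branching structure at the double root. You supply more detail than the paper does---the explicit Jacobian check $(-8\alpha^2\neq 0)$, the second IFT step on $\Re\lambda_*(c,\eps)=0$, and the discussion ruling out competing pinched double roots---all of which the paper either leaves implicit or omits, but the underlying strategy is identical.
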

\begin{proof}
 The existence of $c_\mathrm{lin}$ follows from the implicit function theorem, continuing solutions to \eqref{e:dd} from $\eps=0$. The real part of $\nu_+$ is minimal at $c=c_\mathrm{lin}$ since we have a double root precisely at this point. A local expansion shows that the double root splits into a pair of real roots for increasing $c$, such that the real part of $\Re\nu_+$ increases with $\sqrt{c-c_\mathrm{lin}}$ in this case. For $c<c_\mathrm{lin}$, the two roots become complex and an expansion shows that the real part has asymptotics $-(1/2+\rmO(\eps))c$, and is hence increasing when $c$ decreases. 
\end{proof}

The fact that $\lambda_*$ associated with the pinched double root is real implies that, for the linear equation, the spatio-temporal growth is stationary  in the moving frame at the leading edge, and one can consequently hope for a description of the growth process in terms of stationary solutions in a comoving frame. In order to to characterize fronts with speeds not determined by the linearization, we next turn to a heuristic selection criterion, the selection of the steepest front.

\paragraph{Steepest fronts.} Suppose that we are interested in traveling fronts, that is, in equilibria in a comoving frame $\xi=x-ct$, where $c$ is not necessarily the linear spreading speed, which connect a stable state in the wake, $\xi=-\infty$, to $p$. It turns out that the unstable manifold of stable states is two-dimensional, a fact that we shall establish below for the periodic orbits of interest here. The general argument is more widely applicable; see for instance \cite{ebert,mesuro}. 
Motivated by the fact that we start from compactly supported initial data, we look for traveling-wave solutions with the steepest possible decay. Note that, for $\lambda=0$, the eigenvalues have $\Re\nu_\pm<0$, $\Re\nu_0=\rmO(\eps)\lesssim 0$ for $\gamma\alpha<1$ (when $\gamma\alpha>1$, the system possesses three equilibria). Counting dimensions, we expect robust intersections between the two-dimensional unstable manifold of the periodic  orbit and the two-dimensional strong stable manifold associated with $\nu_\pm$. Decay in this two-dimensional strong stable manifold is steepest when $\Re\nu_+$ is minimal, which occurs precisely for $c=c_\mathrm{lin}$. On the other hand, we may be able to find intersections for \emph{specific} values of $c$ with steeper decay, since $\Re\nu_-(c)<\Re\nu_-(c_\mathrm{lin})$. We refer to such intersections with the one-dimensional super-strong stable manifold as \emph{pushed fronts}.

\section{Slow-fast analysis}\label{sec_slowfast}
In this section, we outline the singular limit geometry of~\eqref{eq_twode} in the context of geometric singular perturbation theory. We begin with a description of the slow reduced system in~\S\ref{sec_reduced}, followed by the fast layer subsystem in~\S\ref{sec_layer}. We then construct singular $\eps=0$ periodic orbits in~\S\ref{sec_periodicorbits} and singular pulled/pushed front solutions in~\S\ref{sec_singularsolns}.

\subsection{Slow subsystem}\label{sec_reduced}
Rescaling the traveling wave variable in~\eqref{eq_twode} by $\tau=\eps \xi$, we obtain the slow system
\begin{align}
\begin{split}\label{eq_slow}
\eps u' &=v\\
\eps v'&= -cv-f(u)+w\\
w'&=-\frac{1}{c}(u-\gamma w-a),
\end{split}
\end{align}
where ``$~{}^\prime~$" denotes $\frac{d}{d\tau}$. Setting $\eps=0$, the flow is restricted to the critical manifold
\begin{align}
\mathcal{M}_0=\{(u,v,w):v=0, w=f(u)\},
\end{align}
 with dynamics given by the reduced equation
\begin{align}
\begin{split}\label{eq_reduced}
u'&=-\frac{u-\gamma f(u)-a}{cf'(u)}.
\end{split}
\end{align}

The critical manifold is composed of three branches $\mathcal{M}_0=\mathcal{M}^\ell_0\cup\mathcal{M}^m_0\cup\mathcal{M}^r_0$, where
\begin{align}
\mathcal{M}_0^\ell&=\{(u,v,w):v=0, w=f(u), u\in (-\infty, u_\ell)\}\\
\mathcal{M}_0^m&=\{(u,v,w):v=0, w=f(u), u\in [u_\ell, u_r]\}\\
\mathcal{M}_0^r&=\{(u,v,w):v=0, w=f(u), u\in (u_r,\infty)\}
\end{align}
and
\begin{align}\label{eq_uell}
u_\ell&=\frac{1}{3}\left(a+1-\sqrt{1-a+a^2}\right)\\
u_r&=\frac{1}{3}\left(a+1+\sqrt{1-a+a^2}\right).
\end{align}
Setting $w_\ell=f(u_\ell)$ and $w_r = f(u_r)$, the points $p_\ell = (u_\ell, 0, w_\ell)$ and $p_r=(u_r,0,w_r)$ denote the locations of the lower left and upper right fold points, respectively, on the critical manifold. The critical manifold and the associated reduced flow~\eqref{eq_reduced} are shown in Figure~\ref{fig:singular_slow}.

\begin{figure}
\centering
\includegraphics[width=0.6\linewidth]{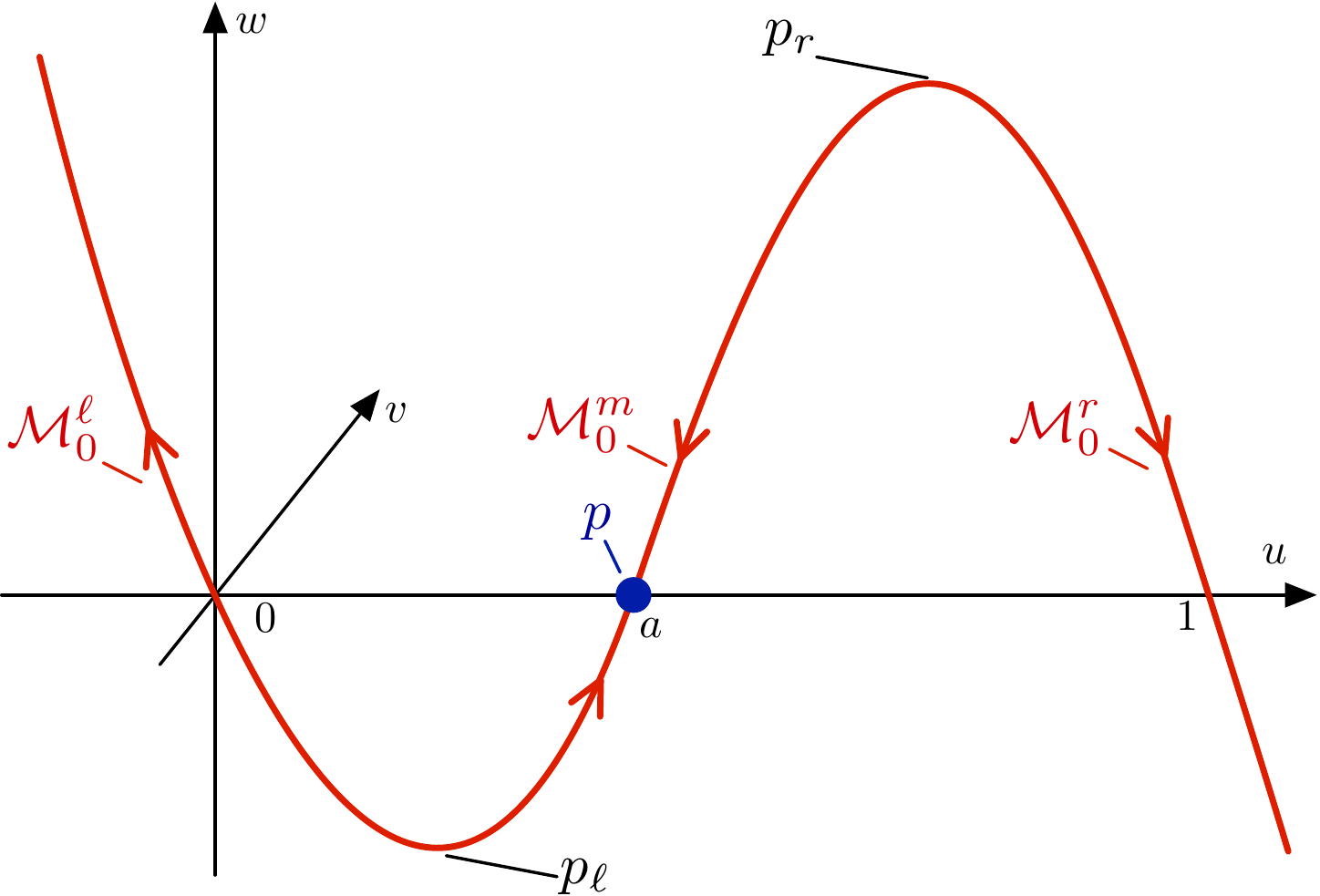}
\caption{Shown is the critical manifold $\mathcal{M}_0=\mathcal{M}^\ell_0\cup\mathcal{M}^m_0\cup\mathcal{M}^r_0$ and the associated reduced flow~\eqref{eq_reduced}.}
\label{fig:singular_slow}
\end{figure}

Away from the folds, the reduced flow on $\mathcal{M}^\ell_0$ satisfies $u'<0$, while the reduced flow on $\mathcal{M}^r_0$ satisfies $u'>0$. On the middle branch $\mathcal{M}^m_\eps$ there is a single equilibrium at $p=(a,0,0)$ which is attracting for the reduced flow.

\subsection{Layer analysis of fronts}\label{sec_layer}

Setting $\eps=0$ in the traveling wave equation~\eqref{eq_twode}, we obtain the layer (fast) subsystem

\begin{align}
\begin{split}\label{eq_layer}
\dot{u} &=v\\
\dot{v}&= -cv-f(u)+w
\end{split}
\end{align}

For each $0<a<1/2$ and each $w\in(w_\ell, w_{r})$, this system has three equilibria $p_i(w), i=1,2,3$, where $p_i(w) = (u_i(w),0)$, and the roots $u_i(w)$ of $w=f(u)$ are numbered in increasing order. By examining the linearization
\begin{align}
\begin{split}
\begin{pmatrix}\dot{U}\\\dot{V}\end{pmatrix}&=\begin{pmatrix} 0 & 1\\ -f'(u) &-c\end{pmatrix}\begin{pmatrix}U\\V\end{pmatrix}
\end{split}
\end{align}
which has eigenvalues
\begin{align}
\begin{split}
\nu_\pm &= \frac{-c\pm\sqrt{c^2-4f'(u)}}{2} 
\end{split}
\end{align}
we see that for $c>0$ the outer equilibria $p_1(w),p_3(w)$ are saddles, and the middle equilibrium $p_2(w)$ is a stable node or focus, depending on the sign of the quantity $c^2-4f'(u_2(w))$. 

We are primarily interested in the layer problem for $w=0$, which contains the equilibrium $(u,v,w)=(a,0,0)$ of the full system (note that $p=p_2(0)$). Linearizing about the equilibrium $p_2(0)=(u_2(0),0)=(a,0)$ in the layer problem~\eqref{eq_layer}, we obtain the eigenvalues
\begin{align}
\begin{split}
\nu_\pm &= \frac{-c\pm\sqrt{c^2+4(a^2-a)}}{2} 
\end{split}
\end{align}
We are interested in the fronts which connect the middle equilibrium $p_2(0)$ to either $p_1(0)$ or $p_3(0)$; the results are summarized in Figures~\ref{fig:layer_ppps} and~\ref{fig:speeds}. For each value of the wavespeed $c>0$ the equilibrium $p_2(0)$ is completely stable. There are two fronts, or heteroclinics: $\phi^\ell$ which connects $p_1(0)$ to $p_2(0)$ for all $c>0$, and $\phi^r$ which connects $p_3(0)$ to $p_2(0)$ for all $c>c_\mathrm{bs}=\frac{1}{\sqrt{2}}(1-2a)$.

\begin{figure}
\hspace{.025\textwidth}
\begin{subfigure}{.3 \textwidth}
\centering
\includegraphics[width=1\linewidth]{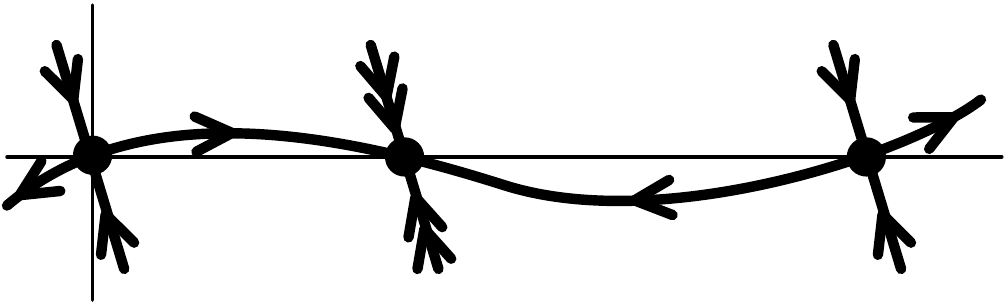}
\caption{$c>c_\mathrm{p}$}
\end{subfigure}
\hspace{.025\textwidth}
\begin{subfigure}{.3 \textwidth}
\centering
\includegraphics[width=1\linewidth]{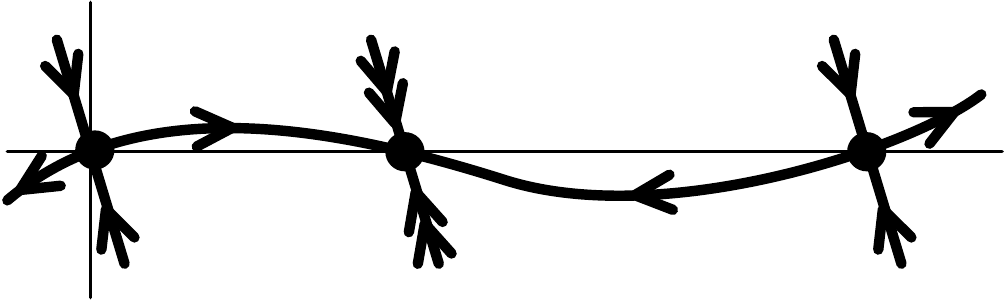}
\caption{$c>c_\mathrm{p}$}
\end{subfigure}
\hspace{.025\textwidth}
\begin{subfigure}{.3 \textwidth}
\centering
\includegraphics[width=1\linewidth]{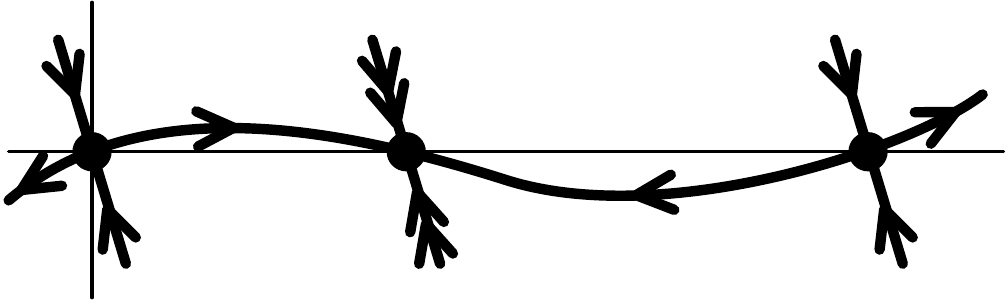}
\caption{$c>c_\mathrm{lin}$}
\end{subfigure}
\hspace{.025\textwidth}\\

\hspace{.025\textwidth}
\begin{subfigure}{.3 \textwidth}
\centering
\includegraphics[width=1\linewidth]{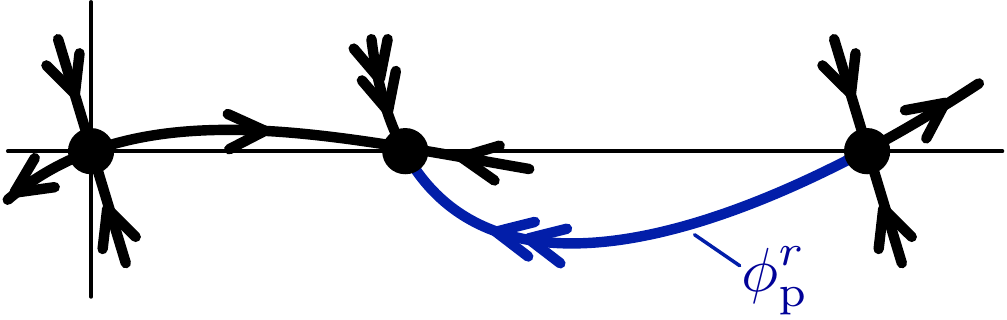}
\caption{$c=c_\mathrm{p}$}
\end{subfigure}
\hspace{.025\textwidth}
\begin{subfigure}{.3 \textwidth}
\centering
\includegraphics[width=1\linewidth]{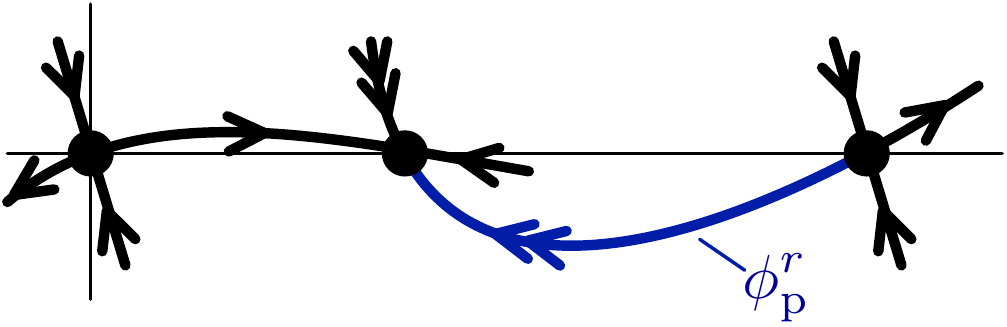}
\caption{$c=c_\mathrm{p}$}
\end{subfigure}
\hspace{.025\textwidth}
\begin{subfigure}{.3 \textwidth}
\centering
\includegraphics[width=1\linewidth]{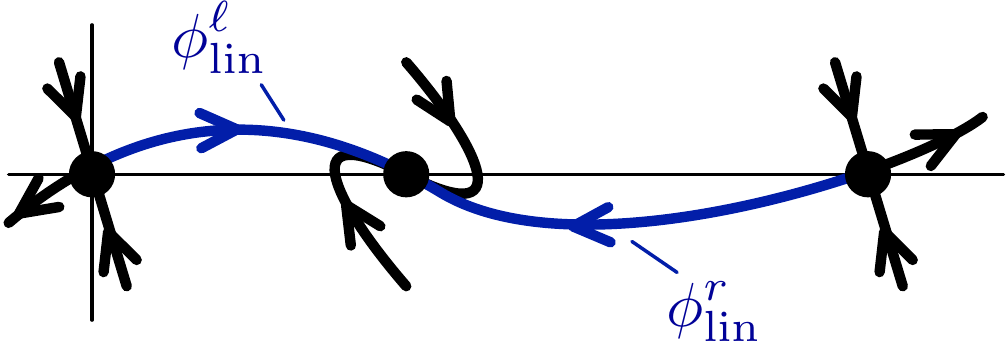}
\caption{$c=c_\mathrm{lin}$}
\end{subfigure}
\hspace{.025\textwidth}\\

\hspace{.025\textwidth}
\begin{subfigure}{.3 \textwidth}
\centering
\includegraphics[width=1\linewidth]{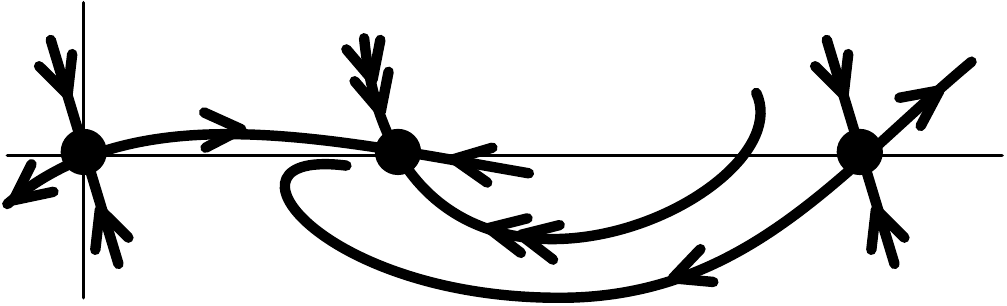}
\caption{$c_\mathrm{bs}<c<c_\mathrm{p}$}
\end{subfigure}
\hspace{.025\textwidth}
\begin{subfigure}{.3 \textwidth}
\centering
\includegraphics[width=1\linewidth]{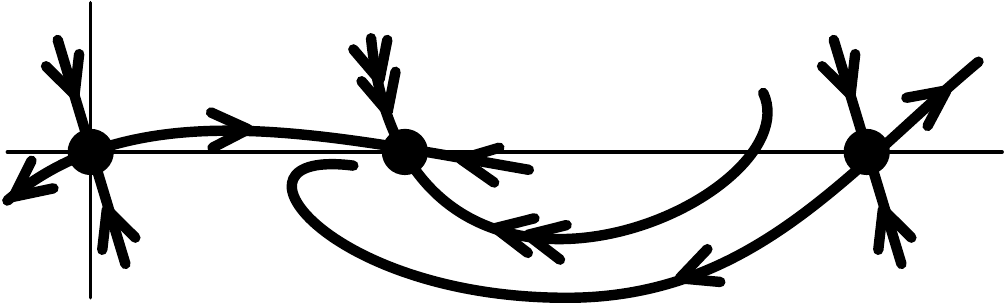}
\caption{$c_\mathrm{lin}<c<c_\mathrm{p}$}
\end{subfigure}
\hspace{.025\textwidth}
\begin{subfigure}{.3 \textwidth}
\centering
\includegraphics[width=1\linewidth]{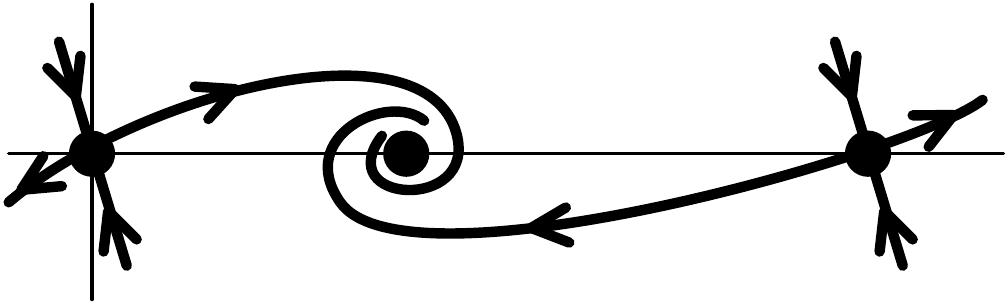}
\caption{$c_\mathrm{bs}<c<c_\mathrm{lin}$}
\end{subfigure}
\hspace{.025\textwidth}\\

\hspace{.025\textwidth}
\begin{subfigure}{.3 \textwidth}
\centering
\includegraphics[width=1\linewidth]{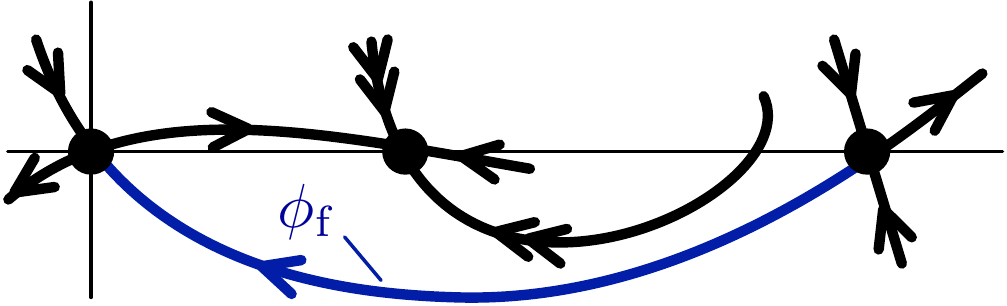}
\caption{$c=c_\mathrm{bs}$}
\end{subfigure}
\hspace{.025\textwidth}
\begin{subfigure}{.3 \textwidth}
\centering
\includegraphics[width=1\linewidth]{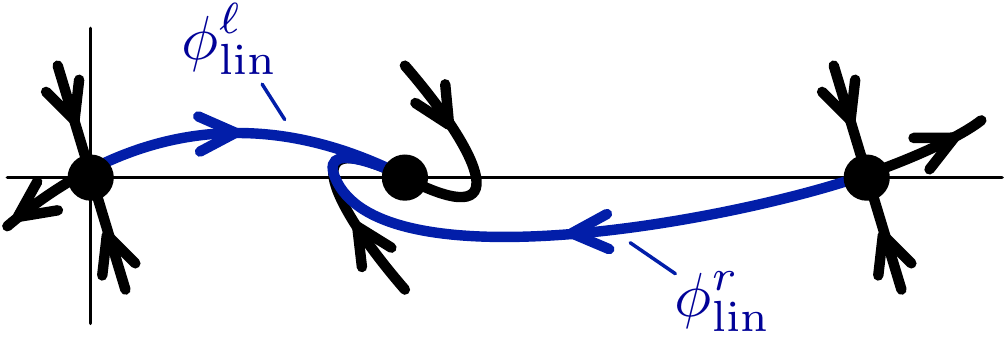}
\caption{$c=c_\mathrm{lin}$}
\end{subfigure}
\hspace{.025\textwidth}
\begin{subfigure}{.3 \textwidth}
\centering
\includegraphics[width=1\linewidth]{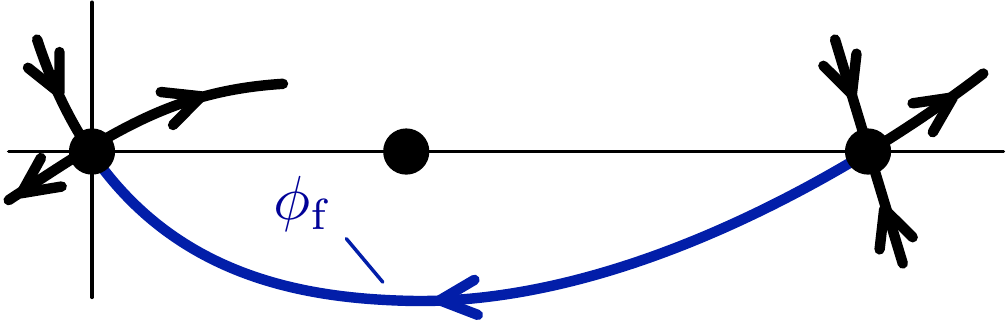}
\caption{$c=c_\mathrm{bs}$}
\end{subfigure}
\hspace{.025\textwidth}\\

\hspace{.025\textwidth}
\begin{subfigure}{.3 \textwidth}
\centering
\includegraphics[width=1\linewidth]{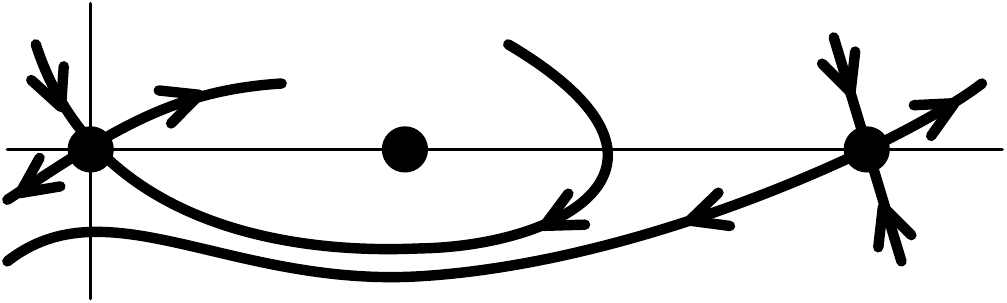}
\caption{$0<c<c_\mathrm{bs}$}
\end{subfigure}
\hspace{.025\textwidth}
\begin{subfigure}{.3 \textwidth}
\centering
\includegraphics[width=1\linewidth]{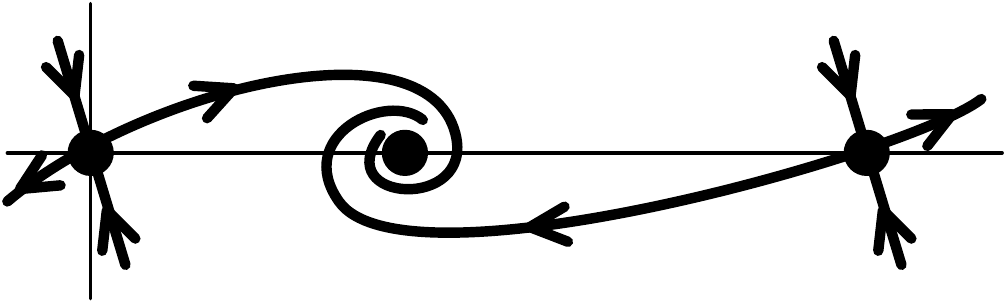}
\caption{$c_\mathrm{bs}<c<c_\mathrm{lin}$}
\end{subfigure}
\hspace{.025\textwidth}
\begin{subfigure}{.3 \textwidth}
\centering
\includegraphics[width=1\linewidth]{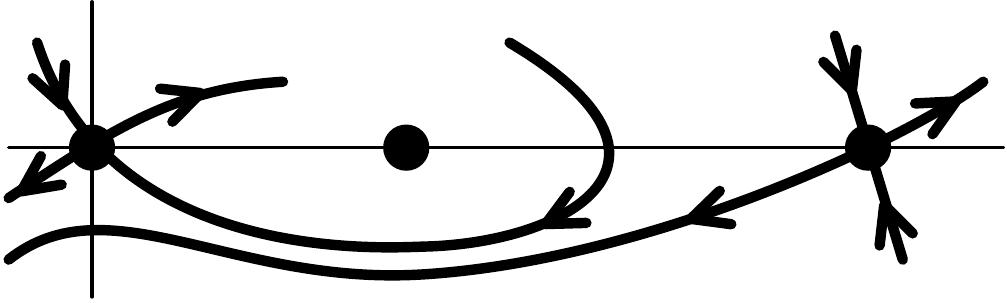}
\caption{$0<c<c_\mathrm{bs}$}
\end{subfigure}
\hspace{.025\textwidth}\\

\hspace{.025\textwidth}
\hspace{.3\textwidth}
\hspace{.025\textwidth}
\begin{subfigure}{.3 \textwidth}
\centering
\includegraphics[width=1\linewidth]{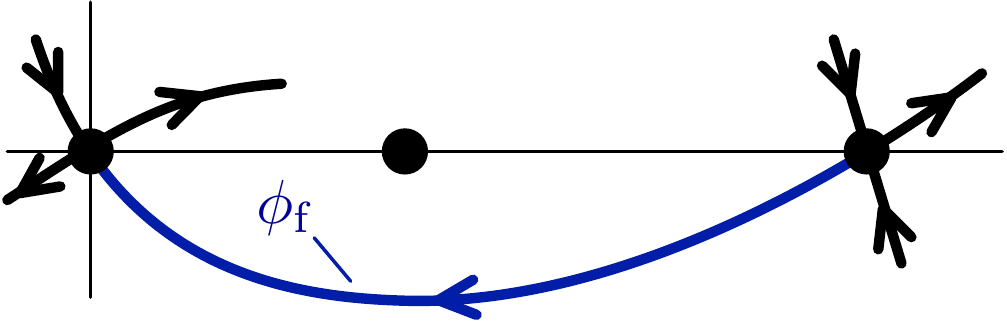}\caption{$c=c_\mathrm{bs}$}
\end{subfigure}
\hspace{.025\textwidth}
\hspace{.3\textwidth}
\hspace{.025\textwidth}\\

\hspace{.025\textwidth}
\hspace{.3\textwidth}
\hspace{.025\textwidth}
\begin{subfigure}{.3 \textwidth}
\centering
\includegraphics[width=1\linewidth]{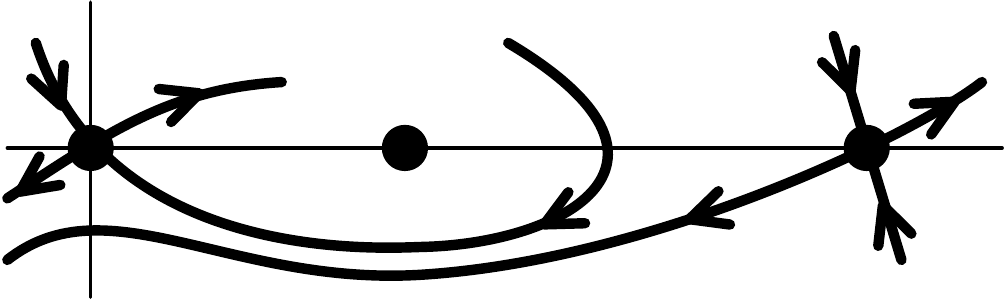}
\caption{$0<c<c_\mathrm{bs}$}
\end{subfigure}
\hspace{.025\textwidth}
\hspace{.3\textwidth}
\hspace{.025\textwidth}
\caption{Shown are the phase portraits in $(u,v)$-space for the layer problem~\eqref{eq_layer} for values of $0<a<a_\mathrm{b}$ (left column), $a_\mathrm{b}<a<1/3$ (middle column), $1/3<a<1/2$ (right column) and $c>0$. The equilibria $p_1(0),p_2(0),p_3(0)$ are depicted from left to right along the $u$-axis in each figure.}
\label{fig:layer_ppps}
\end{figure}

\begin{figure}
\centering
\includegraphics[width=0.6\linewidth]{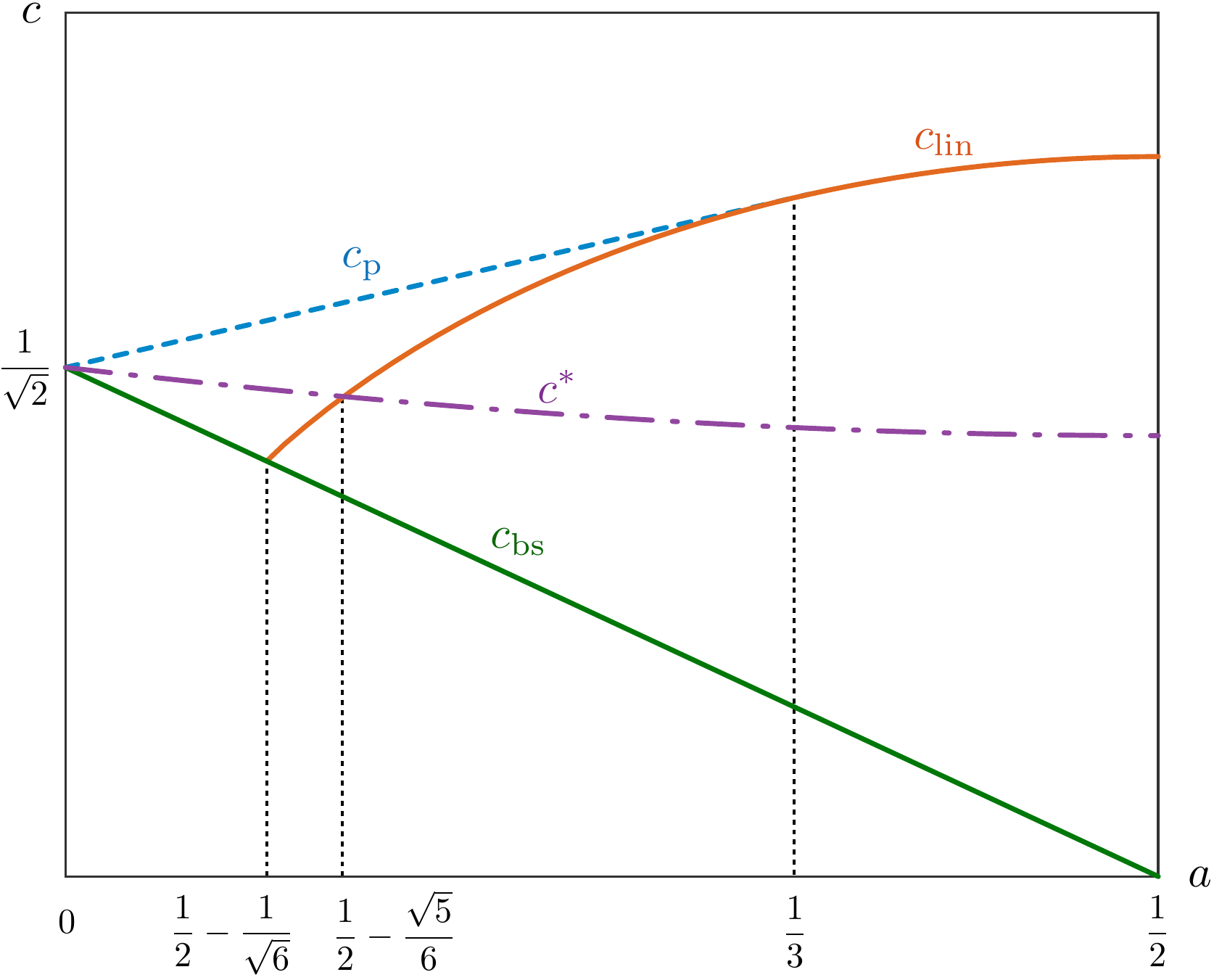}
\caption{Shown is the relation of the speeds $c_\mathrm{bs}(a), c_\mathrm{p}(a), c_\mathrm{lin}(a)$ for $0<a<1/2$. Also shown is the speed $c^*(a)$ which separates the speeds corresponding to periodic orbits in the hyperbolic regime versus those occurring in the nonhyperbolic regime $c\geq c^*(a)$.}
\label{fig:speeds}
\end{figure}

There are two cases of interest: The first is the pulled case in which the fronts coincide with the existence of a double root in the linearization of $p_2(0)$, which we can easily see occurs for $c=c_\mathrm{lin}=2\sqrt{a(1-a)}$ whenever $c_\mathrm{lin}>c_\mathrm{bs}$, which holds for each value of $a_\mathrm{b}<a<1/2$. We denote the associated fronts by $\phi^\ell_\mathrm{lin}, \phi^r_\mathrm{lin}$; see Figure~\ref{fig:layer_ppps}.

The second case is the pushed case in which the equilibrium $p_2(0)$ is a stable node and the front under consideration approaches $p_2(0)$ along a strong stable direction: For $c>2\sqrt{a(1-a)}$, the equilibrium $p_2(0)$ is a stable node with two distinct real eigenvalues. There is a unique solution which approaches $p_2(0)$ along the stronger eigendirection.

In both cases, the rate of decay of the heteroclinic is locally minimal, as one can readily see from the computation of the eigenvalues. 

To determine the wavespeed of this front, we compute the solutions directly. To find $\phi^r_\mathrm{p}$, using the ansatz $v=b(u-1)(u-a)$ we find
\begin{align}
\begin{split}
2b^2u-b^2(1+a)&=-cb+u
\end{split}
\end{align}
from which we compute $b=\frac{1}{\sqrt{2}}$ and $c=c_\mathrm{p}=\frac{1}{\sqrt{2}}(1+a)$ for $0<a<1/3$, and we obtain the explicit solutions
\begin{align}\label{eq_pushedexplicit}
\phi^r_\mathrm{p}(\xi)=\begin{pmatrix} u_\mathrm{p}(\xi)\\v_\mathrm{p}(\xi)\end{pmatrix} = \begin{pmatrix}a+ \frac{1-a}{2}\left(1-\tanh\left(\frac{1-a}{2\sqrt{2}}\xi\right)\right)\\-\frac{(1-a)^2}{4\sqrt{2}}\sech^2\left(\frac{1-a}{2\sqrt{2}}\xi\right)\end{pmatrix}.
\end{align}

We note that there is no pushed front $\phi^r_\mathrm{p}(a)$ for $a>1/3$ and no pushed front $\phi^\ell_\mathrm{p}$ from $p_1(0)$ to $p_2(0)$ for any value of $a$.

\subsection{Periodic orbits}\label{sec_periodicorbits}
In this section, we collect results regarding periodic orbits in the full system for $0<\eps\ll 1$. 

\subsubsection{Hyperbolic regime}
For each value of the wavespeed $0< c <c^*(a)=\sqrt{\frac{1-a+a^2}{2}}$, there exists $w_\mathrm{f}(c),w_\mathrm{b}(c)\in (w_\ell, w_r)$ and fronts $\phi_\mathrm{f}(c), \phi_\mathrm{b}(c)$ which connect the equilibria $p_3(w_\mathrm{f}), p_1(w_\mathrm{f})$ and $p_1(w_\mathrm{b}),p_3(w_\mathrm{b})$, respectively.
\begin{figure}
\centering
\includegraphics[width=0.7\linewidth]{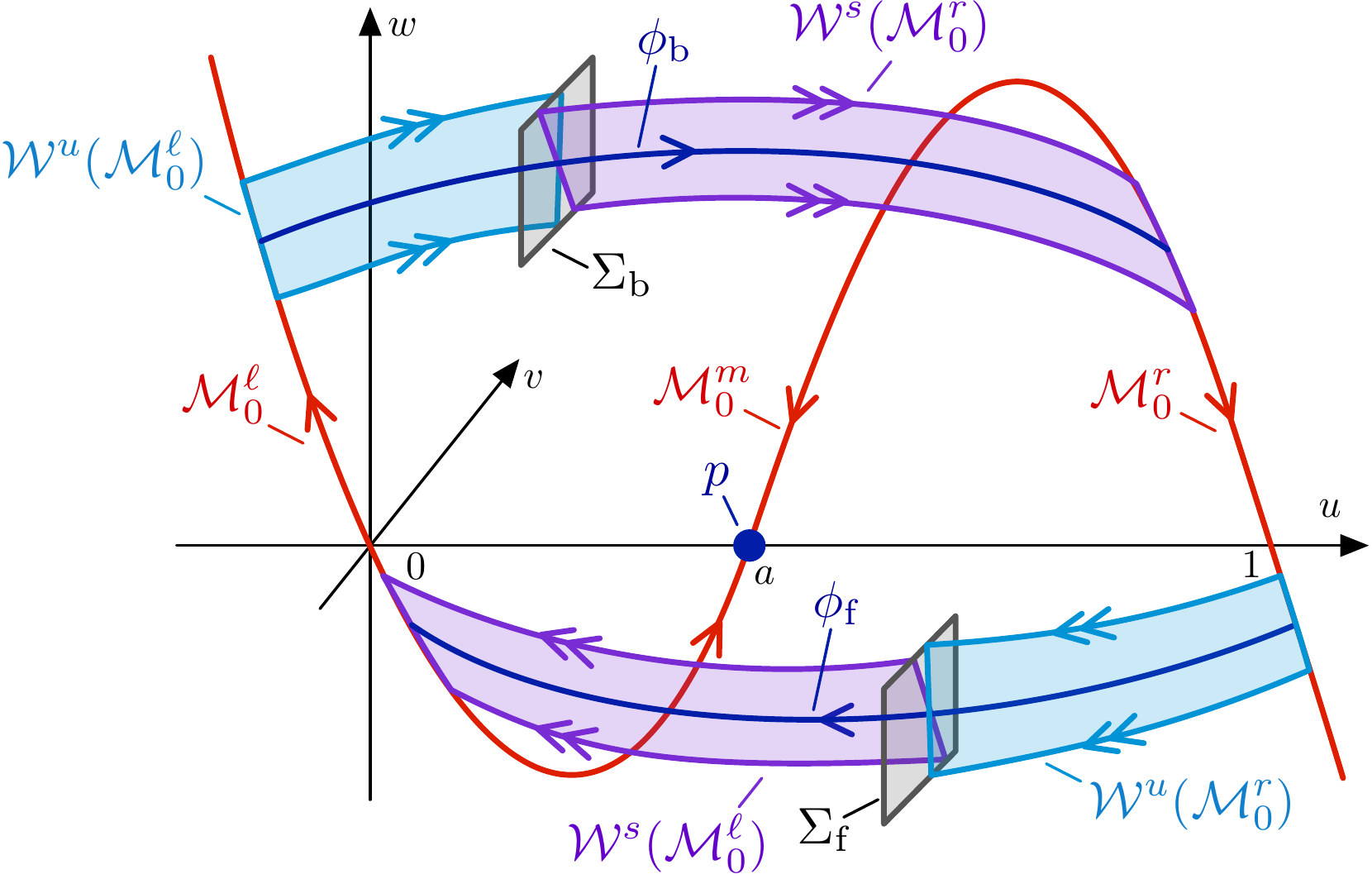}
\caption{Shown is the geometry for of the singular periodic orbit $\Gamma_0(c)$ in the hyperbolic regime $0< c <c^*(a)$. The orbit $\Gamma_0(c)$ is formed by traversing the singular segments $\phi_\mathrm{f}(c),\mathcal{M}^\ell_0,\phi_\mathrm{b}(c),\mathcal{M}^r_0$ in turn.}
\label{fig:PO_hyp}
\end{figure}

By following the front $\phi_\mathrm{f}(c)$, then the slow manifold $\mathcal{M}^\ell_0$, then the front $\phi_\mathrm{b}(c)$, and finally the slow manifold $\mathcal{M}^r_0$, we obtain a singular periodic orbit $\Gamma_0(c)$; see Figure~\ref{fig:PO_hyp} for the geometry of the setup in phase space. It is known~\cite{STR} that for sufficiently small $\eps>0$, this singular structure perturbs to a periodic orbit $\Gamma_\eps(c)$ of the full system.

It is further known~\cite{ESZ} that these periodic orbits are spectrally stable in the original PDE; their spectra is contained in the open left plane, except for a simple eigenvalue at the origin due to translation invariance. This implies that these solutions are in fact nonlinearly stable~\cite{HEN}.

\subsubsection{Nonhyperbolic regime}
For values of the wavespeed $c\geq c^*(a)$, there exist fronts $\phi_\mathrm{f}(c), \phi_\mathrm{b}(c)$ in the fast subsystem in the planes $w=w_\ell$ and $w=w_r$, respectively. These fronts connect each of the fold points of the critical manifold with the saddle equilibrium on the opposite branch. For the critical wavespeed $c=c^*(a)$, these fronts approach the fold along a strong stable direction, while for $c>c^*(a)$, they approach the fold along a center manifold with algebraic decay. 

\begin{figure}
\centering
\includegraphics[width=0.7\linewidth]{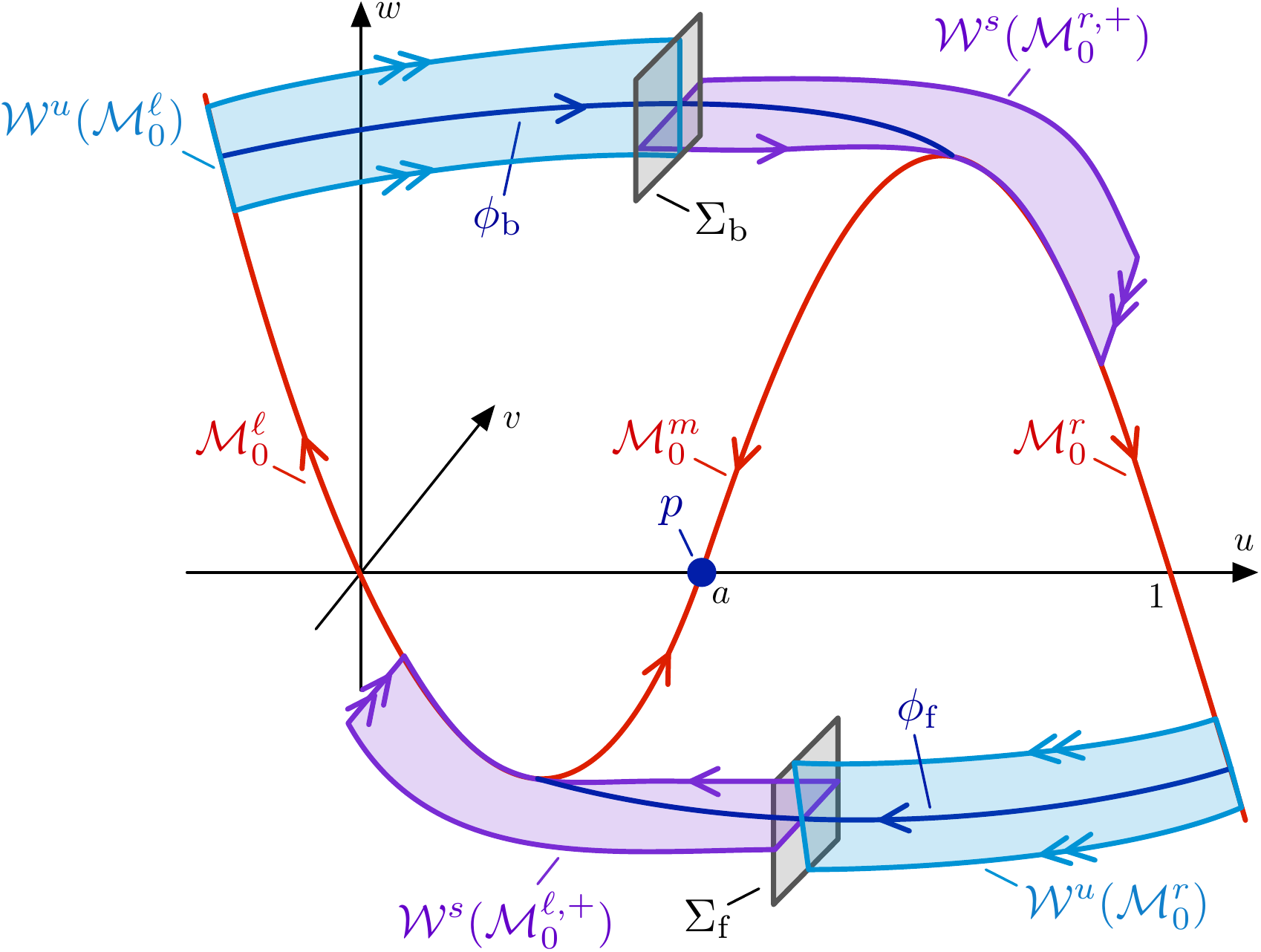}
\caption{Shown is the geometry for of the singular periodic orbit $\Gamma_0(c)$ in the nonhyperbolic regime $c\geq c^*(a)$. The orbit $\Gamma_0(c)$ is formed by traversing the singular segments $\phi_\mathrm{f}(c),\mathcal{M}^\ell_0,\phi_\mathrm{b}(c),\mathcal{M}^r_0$ in turn.}
\label{fig:PO_nonhyp}
\end{figure}

There is still a singular periodic orbit $\Gamma_0(c)$ obtained as before, by following $\phi_\mathrm{f}(c)$, then $\mathcal{M}^\ell_0$, then $\phi_\mathrm{b}(c)$, and finally the slow manifold $\mathcal{M}^r_0$. See Figure~\ref{fig:PO_nonhyp} for the setup. However, persistence of these orbits for small $\eps>0$ and their stability is not known. We refer to this regime as the nonhyperbolic regime due to the loss of normal hyperbolicity at the fold points on the critical manifold.

We have the following proposition regarding the existence of periodic orbits in the nonhyperbolic regime. Its proof will be given in~\S\ref{sec_nhpoproof}.

\begin{Proposition}\label{prop_nh_periodicorbits}
Fix $0<a<1/2$. There exists $\delta_c>0$ such that for each $c> c^*(a)-\delta_c$, and for each sufficiently small $0<\eps\ll1$, there exists a periodic orbit $\Gamma_\eps(c)$ which is $\mathcal{O}(\eps^{2/3})$-close to $\Gamma_0(c)$. 
\end{Proposition}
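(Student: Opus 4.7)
The plan is to combine geometric singular perturbation theory on the normally hyperbolic portions of the critical manifold with a blow-up desingularization at the fold points $p_\ell$ and $p_r$, where normal hyperbolicity fails. The overall strategy is to define a Poincar\'e return map on a section transverse to the singular orbit $\Gamma_0(c)$ and to show that this map has a unique fixed point $\rmO(\eps^{2/3})$-close to the singular one.

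First, I would fix a transverse section in the fast plane $w=w_\mathrm{f}(c)$ near the front $\phi_\mathrm{f}(c)$ and follow an initial condition forward. Away from small fixed neighborhoods of the folds, Fenichel's theorem provides perturbed slow manifolds $\mathcal{M}^\ell_\eps$ and $\mathcal{M}^r_\eps$ that are $\rmO(\eps)$-close to the critical branches, together with their stable and unstable fibers. For $c\geq c^*(a)$, the singular heteroclinics $\phi_\mathrm{f}(c)$ and $\phi_\mathrm{b}(c)$ approach the opposite fold along the strong stable direction (at $c=c^*(a)$) or along the center manifold with algebraic decay (for $c>c^*(a)$). The unstable fiber of the saddle on each branch therefore tracks the singular jump and arrives in an $\rmO(\eps)$-neighborhood of the fold, which is precisely where Fenichel theory breaks down.

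The main technical step is the passage through each fold, which I would handle via the blow-up method of Krupa--Szmolyan \cite{krupaszmolyan20012}. After translating the fold to the origin, one introduces blown-up coordinates of the form $(u,v,w,\eps)=(r\bar u, r^2\bar v, r^2\bar w, r^3\bar\eps)$ on a sphere and analyzes the desingularized flow in the standard entry, rescaling, and exit charts. The natural balance $r\sim\eps^{1/3}$ produces the displacement in $w$ between the perturbed trajectory and the singular jump point of order $\eps^{2/3}$ upon exit from the fold region, which is the source of the error estimate in the proposition. In the entry chart one transports the incoming manifold to the blown-up sphere; in the rescaling chart one analyzes a planar Riccati-type problem whose distinguished solution provides the canard-like passage; and in the exit chart one connects the trajectory to the fast unstable fiber of the opposite branch. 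For $c>c^*(a)$ the incoming jump lies on the center manifold of the fold rather than the strong stable manifold, but this only changes the entry-chart analysis mildly, and the $\eps^{1/3}$ scaling, hence the $\rmO(\eps^{2/3})$ estimate, is preserved.

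With the two fold passages in hand, I would concatenate the four segments --- fast jump, slow transit along $\mathcal{M}^\ell_\eps$, fast jump, slow transit along $\mathcal{M}^r_\eps$ --- into a return map on the chosen section. The strong contraction in the fast stable fibers and the exponential contraction of the Fenichel fibers on the slow manifolds make the singular return map a (strong) contraction in the transverse fast direction, and, expressed in the blown-up coordinates, a smooth function of $\eps^{1/3}$. An application of the contraction mapping principle (or equivalently the implicit function theorem in blown-up coordinates) then yields the unique periodic orbit $\Gamma_\eps(c)$ within the claimed $\rmO(\eps^{2/3})$ neighborhood of $\Gamma_0(c)$. The range $c>c^*(a)-\delta_c$ is chosen so that a single blow-up analysis at each fold uniformly covers both the hyperbolic regime just below $c^*(a)$ and the nonhyperbolic regime above, recovering the classical result of \cite{STR} in a neighborhood of $c^*(a)^-$. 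The principal obstacle is the chart-by-chart bookkeeping in the blow-up, and in particular the uniform matching across the transition $c=c^*(a)$, where the qualitative type of the incoming trajectory changes from strong-stable to center-type approach.
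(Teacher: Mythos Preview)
Your proposal is correct and follows essentially the same strategy as the paper: a Poincar\'e return map across a section near the fold, with the $\rmO(\eps^{2/3})$ estimate arising from the fold passage, and a fixed-point argument via the implicit function theorem. The paper's implementation differs only in packaging---rather than redoing the blow-up chart by chart, it first passes to a normal form that decouples the strong-stable fiber direction and cites \cite{cas} for the extension $\mathcal{M}^{\ell,+}_\eps$ of the slow manifold through the fold, then uses the exchange lemma together with the transversality statements of Propositions~\ref{prop_transfront} and~\ref{prop_transback} to track the relevant invariant manifolds between sections; your direct Krupa--Szmolyan route would reproduce the same estimates.
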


\begin{Remark}
The $\mathcal{O}(\eps^{2/3})$ estimate in Proposition~\ref{prop_nh_periodicorbits} appears due to the passage near the fold points on the critical manifold.
\end{Remark}

\begin{Remark}
Proposition~\ref{prop_nh_periodicorbits} guarantees, for fixed $a$, the existence of periodic orbits $\Gamma_\eps(c)$ for $c> c^*(a)-\delta_c$, where $\delta_c>0$ is independent of $\eps$. This family of orbits therefore overlaps with those constructed in the hyperbolic regime, forming the single family described in Theorem~\ref{thm_periodicexistence}.
\end{Remark}

\begin{Remark}
We do not consider the PDE stability of the periodic orbits from Proposition~\ref{prop_nh_periodicorbits} in this work. However, we expect that the methods used in~\cite{ESZ} to obtain spectral stability of the periodic orbits in the hyperbolic regime extend to the nonhyperbolic regime with appropriate modifications. \end{Remark}

\subsection{Singular traveling fronts}\label{sec_singularsolns}

Combining results from the previous sections concerning the fast/slow dynamics, we construct singular pushed/pulled fronts which connect the equilibrium at $(a,0,0)$ to one of the singular periodic orbits $\Gamma_0(c)$.

\subsubsection{Pulled fronts}
We recall from~\S\ref{sec_layer} that for $c=c_\mathrm{lin}(a)=2\sqrt{a(1-a)}$ for each value of $0<a<1/2$, there exist pulled fronts $\phi^\ell_\mathrm{lin}(a),\phi^r_\mathrm{lin}(a)$ which connect $(a,0,0)$ to the left and right branches $\mathcal{M}^\ell_0$ and $\mathcal{M}^r_0$ of the critical manifold, respectively.

Further, from~\S\ref{sec_periodicorbits}, for each $c>0$ there exists a singular periodic orbit $\Gamma_0(c)$. For $c <c^*(a)=\sqrt{\frac{1-a+a^2}{2}}$, this periodic orbit is in the hyperbolic regime, while $c\geq c^*(a)$ constitutes the nonhyperbolic regime.

Therefore, for $0<a<\frac{1}{2}-\frac{\sqrt{5}}{6}$, at $c=c_\mathrm{lin}(a)$, there are pulled fronts coinciding with a singular periodic orbit $\Gamma_0(c)$ in the hyperbolic regime, while for $\frac{1}{2}-\frac{\sqrt{5}}{6}<a<\frac{1}{2}$, the pulled fronts coincide with a periodic orbit in the nonhyperbolic regime.

This allows us to construct singular pulled fronts as follows: For each value of the wave speed $c=c_\mathrm{lin}(a)$ for certain values of $0<a<1/2$, there is a singular invasion front $F^r_0(a)$ which follows the periodic orbit $\Gamma_0(c_\mathrm{lin}(a))$, then a segment of the right branch $\mathcal{M}^r_0$ of the critical manifold, then the front $\phi^r_\mathrm{lin}(a)$. This can only happen in the case that the periodic orbit traverses a fast segment $\phi_\mathrm{f}$ which lies below $w=0$; otherwise the front $\phi^r_\mathrm{lin}(a)$ does not exist. This happens whenever $c_\mathrm{lin}(a)>c_\mathrm{bs}=\frac{1}{\sqrt{2}}(1-2a)$, which occurs for $a_\mathrm{b}<a<\frac{1}{2}$. Hence there is a singular invasion front $F^r_0(a)$ for each $a_\mathrm{b}<a<\frac{1}{2}$. The geometry of the singular pulled front is shown in Figure~\ref{fig:singular_front_pulled}.

\begin{figure}
\centering
\includegraphics[width=0.7\linewidth]{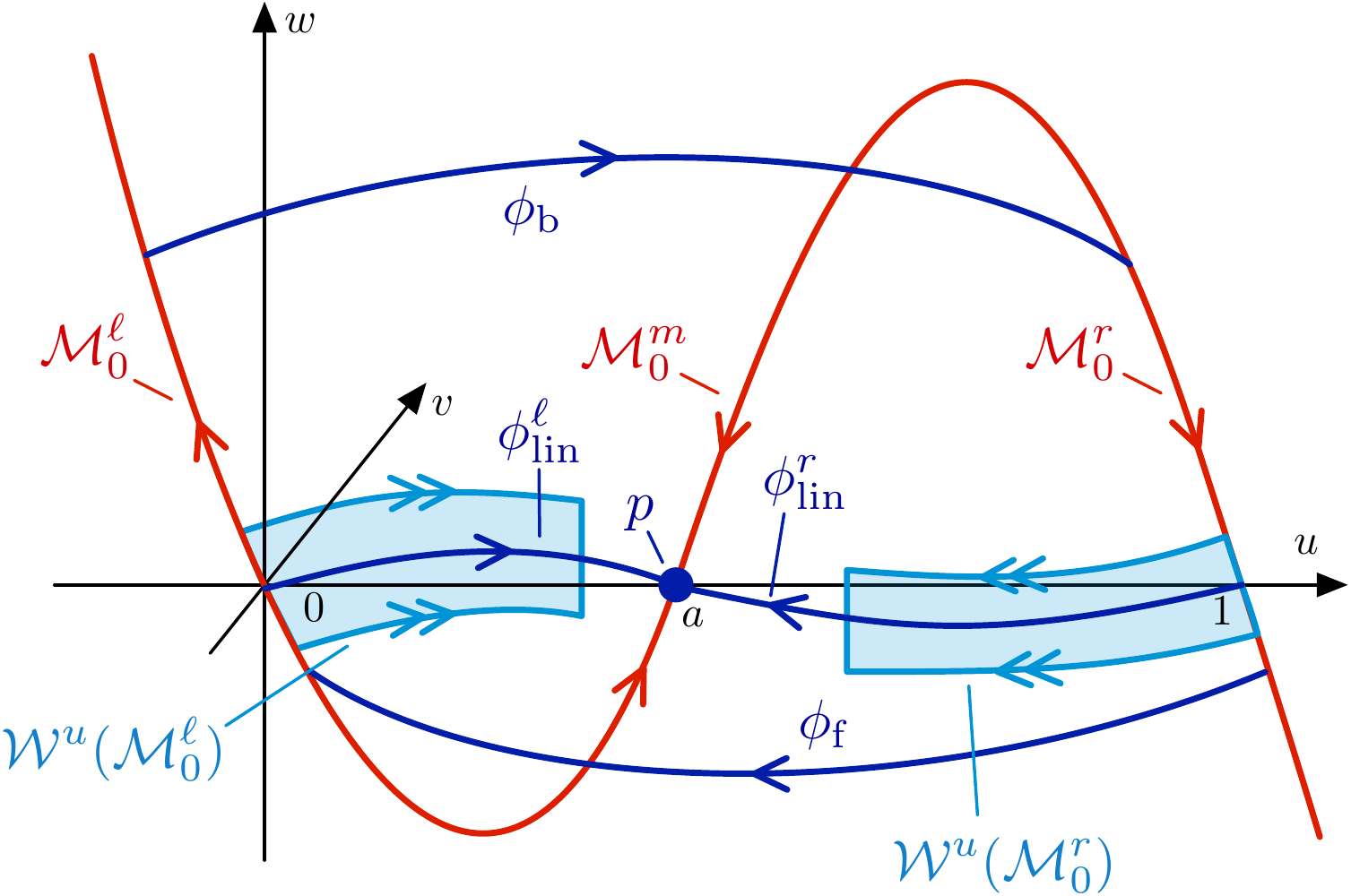}
\caption{Shown is the geometry of the singular pulled fronts $F^r_0(a)$ and $F^\ell_0(a)$. The front $F^r_0(a)$ is formed by following the singular periodic orbit $\Gamma_0(c_\mathrm{lin}(a))$ formed by $\phi_\mathrm{f}, \mathcal{M}^\ell_0, \phi_\mathrm{b}, \mathcal{M}^r_0$, then a segment of the right branch $\mathcal{M}^r_0$ of the critical manifold, then the front $\phi^r_\mathrm{lin}(a)$; $F^\ell_0(a)$ is formed by following $\Gamma_0(c_\mathrm{lin}(a))$, then a segment of the left branch $\mathcal{M}^\ell_0$ of the critical manifold, then the front $\phi^\ell_\mathrm{lin}(a)$.}
\label{fig:singular_front_pulled}
\end{figure}

For the same values of $a_\mathrm{b}<a<\frac{1}{2}$, there is a second front $F^\ell_0(a)$ which follows $\Gamma_0(c_\mathrm{lin}(a))$, then a segment of the left branch $\mathcal{M}^\ell_0$ of the critical manifold, then the front $\phi^\ell_\mathrm{lin}(a)$. 

\subsubsection{Pushed fronts}
Similarly to the pulled case, we construct singular invasion fronts $P^r_0(a)$ at $c=c_\mathrm{p}(a)=\frac{1}{\sqrt{2}}(1+a)$ for each value of $0<a<1/3$ by following the periodic orbit $\Gamma_0(c_\mathrm{p}(a))$, then a segment of the right branch $\mathcal{M}^r_0$ of the critical manifold, then the front $\phi^r_\mathrm{p}(a)$; see Figure~\ref{fig:singular_front_pushed}. We note that the associated periodic orbit $\Gamma_0(c_\mathrm{p}(a))$ always occurs in the nonhyperbolic regime. 

\begin{figure}
\centering
\includegraphics[width=0.7\linewidth]{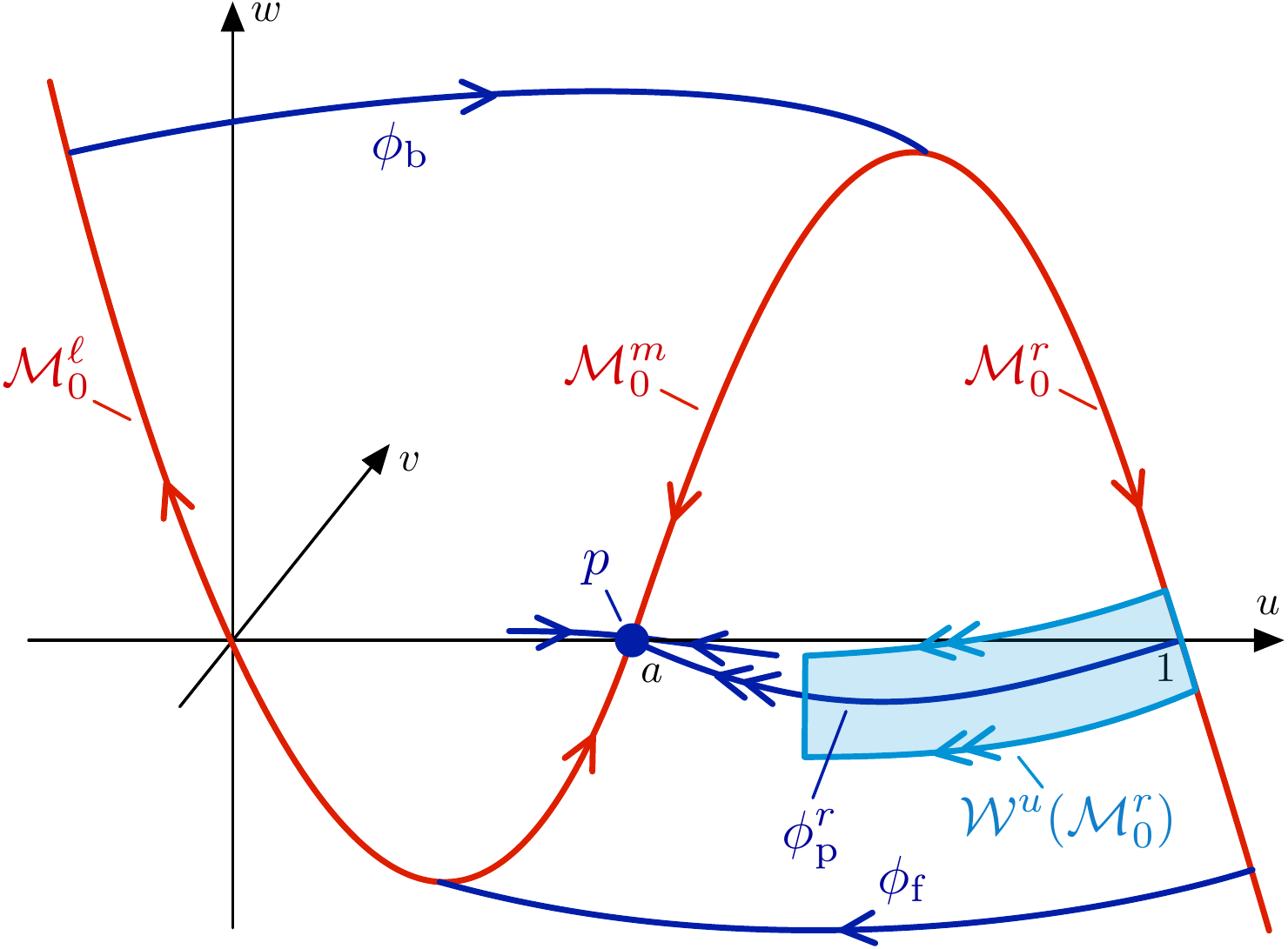}
\caption{Shown is the geometry of the singular pushed front $P^r_0(a)$, which is formed by concatenating the singular periodic orbit $\Gamma_0(c_\mathrm{lin}(a))$ formed by $\phi_\mathrm{f}, \mathcal{M}^\ell_0, \phi_\mathrm{b}, \mathcal{M}^r_0$, with a segment of the right branch $\mathcal{M}^r_0$ of the critical manifold, then the front $\phi^r_\mathrm{p}(a)$.}
\label{fig:singular_front_pushed}
\end{figure}

\section{Persistence of solutions for $0<\eps\ll1$}\label{sec_persistence}

We use geometric singular perturbation theory to show that the singular solutions constructed in~\S\ref{sec_singularsolns} perturb to solutions of the full problem for sufficiently small $\eps>0$. Using information about the periodic orbits $\Gamma_\eps(c)$ for small $\eps$, we find the desired invasion fronts as intersections of the unstable manifold $\mathcal{W}^\mathrm{u}(\Gamma_\eps(c))$ of the periodic orbit and solutions with appropriate decay properties on the stable manifold $\mathcal{W}^\mathrm{s}(p)$ of the equilibrium.

In~\S\ref{sec_persistenceinvtmflds}, we collecting results of Fenichel theory regarding persistence of invariant manifolds for $0<\eps \ll1$, and we further describe the flow near the fold points on the critical manifold. A proof of Proposition~\ref{prop_nh_periodicorbits} concerning the existence of periodic orbits in the nonhyperbolic regime is given in~\S\ref{sec_nhpoproof}. We then prove a technical proposition in~\S\ref{sec_poinvariantmanifolds} concerning the closeness of the stable/unstable manifolds of the periodic orbits of Theorem~\ref{thm_periodicexistence} to the stable/unstable manifolds of the left and right slow manifolds $\mathcal{M}^{\ell,r}_\eps$. Finally we complete the proofs of Theorems~\ref{thm_periodicexistence},~\ref{thm_pulledexistence} and~\ref{thm_pushedexistence} in~\S\ref{sec_periodicorbitthmproof},~\S\ref{sec_pulledproof} and~\S\ref{sec_pushedproof}, respectively.

\subsection{Persistence of invariant manifolds}\label{sec_persistenceinvtmflds}
We first recall some standard results from geometric singular perturbation theory. Away from the fold points, the branches $\mathcal{M}^\ell_0, \mathcal{M}^m_0, \mathcal{M}^r_0$ are normally hyperbolic and therefore persist as locally invariant slow manifolds $\mathcal{M}^\ell_\eps, \mathcal{M}^m_\eps, \mathcal{M}^r_\eps$, on which the flow is an $\mathcal{O}(\eps)$ perturbation of the slow flow~\eqref{eq_reduced}. Therefore the flow on $\mathcal{M}^\ell_\eps$ satisfies $u'<0$, on  $\mathcal{M}^r_\eps$ we have $u'>0$, and on $\mathcal{M}^m_\eps$ there is a single equilibrium at $p$ which is attracting for the slow flow.

Further the two dimensional stable/unstable manifolds $\mathcal{W}^{s/u}(\mathcal{M}^\ell_0)$ formed as the union of the stable/unstable manifolds of the saddle equilibria $p_1(w)$ of the layer equations~\eqref{eq_layer} persist for $0<\eps \ll 1$ as locally invariant two-dimensional stable/unstable manifolds $\mathcal{W}^{s/u}(\mathcal{M}^\ell_\eps)$ of the perturbed slow manifold $\mathcal{M}^\ell_\eps$. These manifolds are foliated by stable/unstable fibers which form an invariant family. The same holds for the saddle-type critical manifold $\mathcal{M}^r_0$ and we obtain corresponding perturbed stable/unstable manifolds $\mathcal{W}^{s/u}(\mathcal{M}^r_\eps)$. 

The middle branch $\mathcal{M}^m_0$ is completely stable and hence has a three dimensional stable manifold $\mathcal{W}^\mathrm{s}(\mathcal{M}^m_0)$ which persists as $\mathcal{W}^\mathrm{s}(\mathcal{M}^m_\eps)$ for sufficiently small $\eps>0$. Since the equilibrium $p$ on $\mathcal{M}^m_\eps$ is attracting under the slow flow, the manifold $\mathcal{W}^\mathrm{s}(\mathcal{M}^m_\eps)$ in fact coincides with the stable manifold $\mathcal{W}^\mathrm{s}(p)$. This stable manifold can be separated into slow/fast components: In the fast system~\eqref{eq_twode} for $\eps=0$ there are two strictly negative eigenvalues, while a third negative eigenvalue which is $\mathcal{O}(\eps)$ comes from the slow flow tangential to the slow manifold $\mathcal{M}^m_\eps$ for $\eps>0$. Hence the equilibrium $p$ has a two dimensional strong stable manifold $\mathcal{W}^\mathrm{ss}(p)$ whose tangent space is $\mathcal{O}(\eps)$-close to the plane $\{w=0\}$ (when $\eps=0$, $\mathcal{W}^\mathrm{ss}(p)$ lies entirely in this plane).

\subsubsection{Local analysis near the folds}
We consider the lower left fold $p_\ell$. The analysis near the upper fold is similar. The lower fold point is given by the fixed point $p_\ell=(u_\ell,0,w_\ell)$ of the layer problem~\eqref{eq_layer} where from~\eqref{eq_uell} we have
\begin{align*}
u_\ell = \frac{1}{3}\left(a+1-\sqrt{1-a+a^2}\right),
\end{align*}
and $w_\ell=f(u_\ell)$. The linearization of~\eqref{eq_twode} about this fixed point has one negative real eigenvalue $-c < 0$ and a double zero eigenvalue, since $f'(u_\ell) = 0$. 

We note that the geometry near the fold is similar to that considered in~\cite[\S4]{cas} (in fact the scenarios are identical up to a reversal of time and change of orientation), and hence we draw on the local analysis as presented in~\cite{cas}. We first move to a local coordinate system in a neighborhood of $p_\ell$: there exists a neighborhood $V_\ell$ of $p_\ell$, in which we can perform the following $C^k$-change of coordinates $\Phi_\eps \colon V_\ell \to \mathbb{R}^3$ to~\eqref{eq_twode}, which is $C^k$-smooth in $c\in I_c$ and $\eps>0$ sufficiently small. We apply $\Phi_\eps$ in the neighborhood of the fold point and rescale time by a positive constant so that~\eqref{eq_twode} becomes
\begin{align}
\begin{split}
\label{eq_canonicalfold}
x' &= \left(y- x^2 +h(x,y,\eps;c)\right), \\
y' &=\eps g(x,y,\eps;c), \\
z' &= z\left(-\theta+\mathcal{O}(x,y,z,\eps)\right), \\
\end{split}
\end{align}
where $\theta$ is a positive constant, and $h,g$ are $C^k$-functions satisfying
\begin{align*}\begin{split}
h(x,y,\eps;c) &= \mathcal{O}(\eps,xy,y^2,x^3), \\
g(x,y,\eps;c) &= 1+ \mathcal{O}(x,y,\eps),
\end{split}\end{align*}
uniformly in $c\in I_c$. In the transformed system~\eqref{eq_canonicalfold}, the $x,y$-dynamics is decoupled from the dynamics in the $z$-direction along the straightened out strong stable fibers. Thus, the flow is fully described by the dynamics on the two-dimensional invariant center manifold $z = 0$ and by the one-dimensional dynamics along the fibers in the $z$-direction. Figure~\ref{fig:singularfold} depicts the singular $\eps=0$ flow of~\eqref{eq_canonicalfold}.

\begin{figure}[h]
\centering
\includegraphics[width=0.5\linewidth]{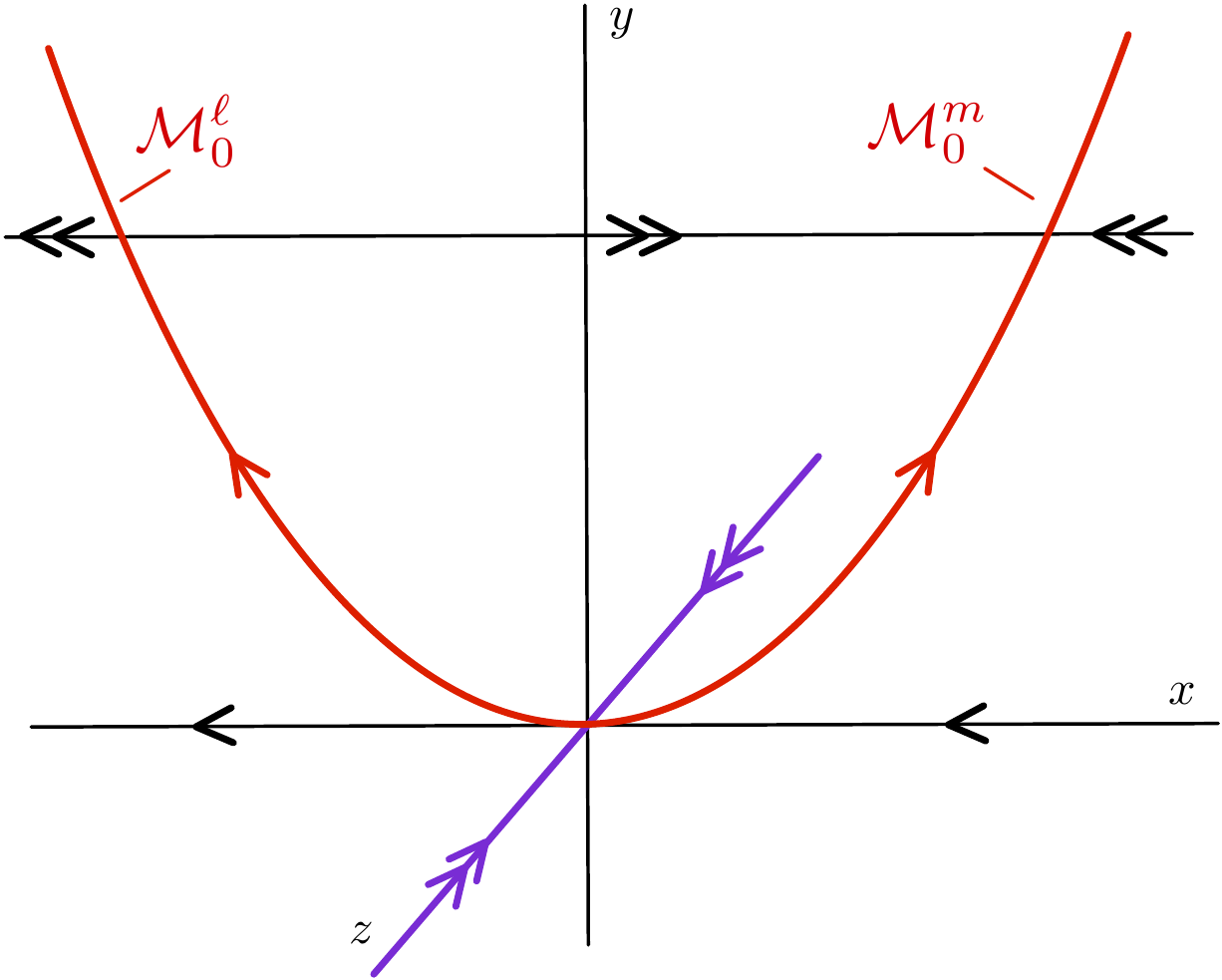}
\caption{Shown are the singular $\eps=0$ dynamics in the local coordinates of the lower left fold point. The trajectory $\mathcal{M}^{\ell,+}_0$ is formed by concatenating $\mathcal{M}^\ell_0$ with the positive $x$-axis.}
\label{fig:singularfold}
\end{figure}

We consider the flow of~\eqref{eq_canonicalfold} on the invariant manifold $z=0$. We append an equation for $\eps$, arriving at the system
\begin{align}
\begin{split}
\label{eq_caonicalfoldctr}
x' &= y- x^2 +h(x,y,\eps;c), \\
y' &=   \eps g(x,y,\eps;c), \\
\eps'&= 0.
\end{split}
\end{align}
For $\eps=0$, this system possesses a critical manifold given by $\{(x,y):y-x^2+h(x,y,0;c)=0\}$, which in a sufficiently small neighborhood of the origin is shaped as a parabola opening upwards. The branch of this parabola for $x<0$ is repelling and corresponds to the manifold $\mathcal{M}^\ell_0$. We define $\mathcal{M}^{\ell,+}_0$ to be the singular trajectory obtained by appending the fast trajectory given by the line $\{(x,0): x>0\}$ to the repelling branch $\mathcal{M}^\ell_0$ of the critical manifold; see Figure~\ref{fig:singularfold}. In~\cite[\S4]{cas} it was shown that, for sufficiently small $\eps>0$, $\mathcal{M}^{\ell,+}_0$ perturbs to a trajectory $\mathcal{M}^{\ell,+}_\eps$ on $z = 0$, represented as a graph $y = y^\mathrm{s}_\eps(x;c)$, which is $\mathcal{O}(\eps^{2/3})$-close  in  $C^0$ and $\mathcal{O}(\eps^{1/3})$-close in  $C^1$ to $\mathcal{M}^{\ell,+}_0$, uniformly in $c\in I_c$. Therefore, the manifold $\mathcal{W}^\mathrm{s}(\mathcal{M}^{\ell,+}_0)$ composed of the strong stable fibers of the singular trajectory $\mathcal{M}^{\ell,+}_0$ also perturbs to a two-dimensional locally invariant manifold $\mathcal{W}^\mathrm{s}(\mathcal{M}^{\ell,+}_\eps)$ which is $\mathcal{O}(\eps^{2/3})$-close in $C^0$ and $\mathcal{O}(\eps^{1/3})$-close  in $C^1$ to $\mathcal{W}^\mathrm{s}(\mathcal{M}^{\ell,+}_0)$.

\begin{figure}
\centering
\includegraphics[width=0.6\linewidth]{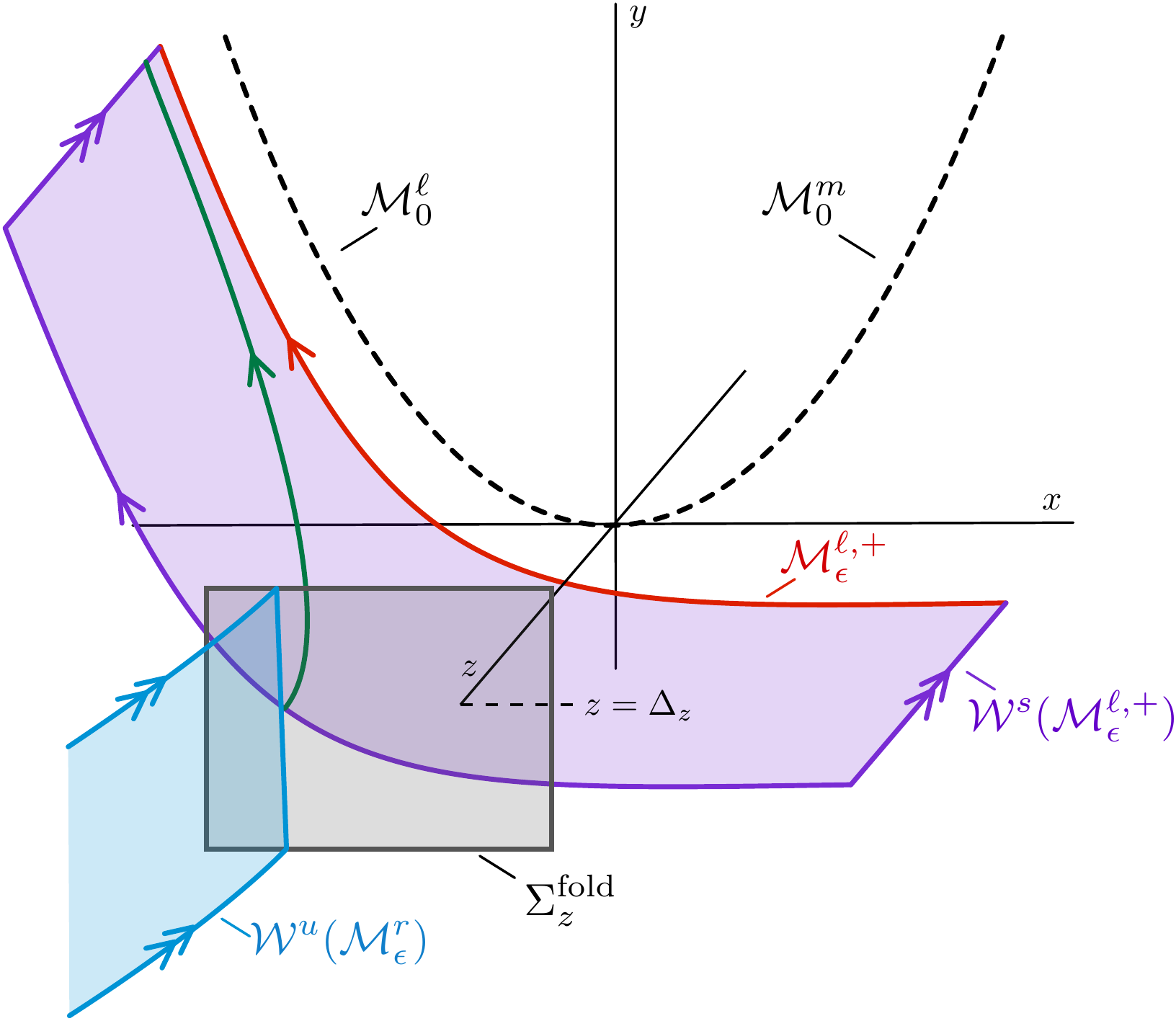}
\caption{Shown is the geometry near the lower left fold point for $c\approx c^*(a)$ and $0<\eps\ll 1$. The manifold $\mathcal{W}^\mathrm{s}(\mathcal{M}^{\ell,+}_\eps)$ is formed by the stable fibers of the trajectory $\mathcal{M}^{\ell,+}_\eps$. The manifolds $\mathcal{W}^\mathrm{s}(\mathcal{M}^{\ell,+}_\eps)$ and $\mathcal{W}^\mathrm{u}(\mathcal{M}^r_\eps)$ intersect transversely in the section $\Sigma^\textrm{fold}_z$.}
\label{fig:lower_fold_critical}
\end{figure}

\begin{figure}
\centering
\includegraphics[width=0.6\linewidth]{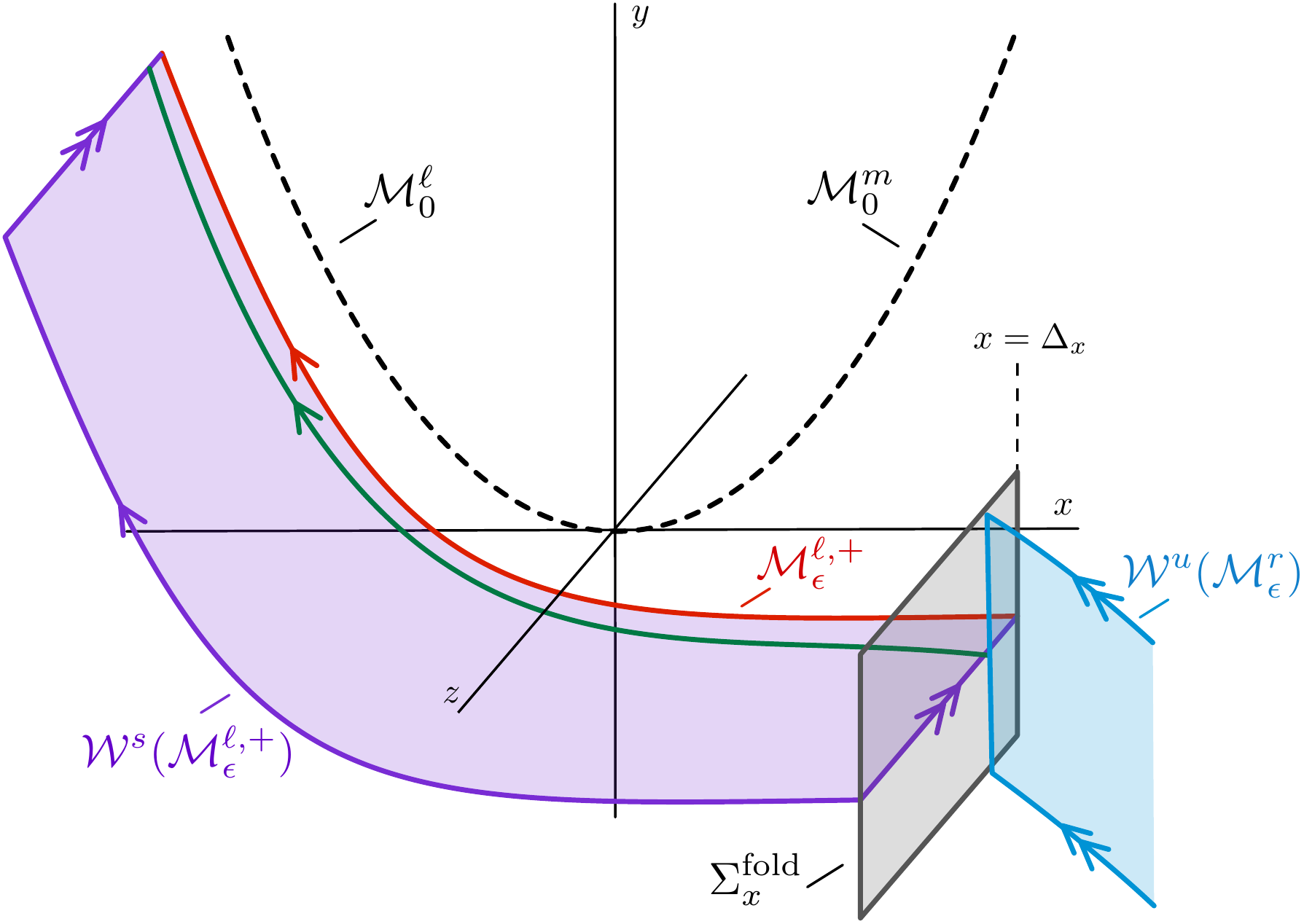}
\caption{Shown is the geometry near the lower left fold point for $c> c^*(a)$ and $0<\eps\ll 1$. The manifolds $\mathcal{W}^\mathrm{s}(\mathcal{M}^{\ell,+}_\eps)$ and $\mathcal{W}^\mathrm{u}(\mathcal{M}^r_\eps)$ intersect transversely in the section $\Sigma^\textrm{fold}_x$.}
\label{fig:lower_fold_center}
\end{figure}

For $c>c^*(a)$, the manifolds $\mathcal{W}^\mathrm{s}(\mathcal{M}^{\ell,+}_0)$ and $\mathcal{W}^\mathrm{u}(\mathcal{M}^r_0)$ intersect along the front $\phi_\mathrm{f}(c)$. The following proposition concerns the transversality of this connection.

\begin{Proposition}
\label{prop_transfront}
Fix $0<a<1/2$. There exists $\delta_c>0$ such that for each $c>c^*(a)-\delta_c$ and each sufficiently small $\eps>0$, the manifolds $\mathcal{W}^\mathrm{s}(\mathcal{M}^{\ell,+}_\eps)$ and $\mathcal{W}^\mathrm{u}(\mathcal{M}^r_\eps)$ intersect transversely.
\end{Proposition}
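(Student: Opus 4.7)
The plan is to establish transversality in the singular limit $\eps=0$ and then extend to $0<\eps\ll 1$ by invoking that transversality is a $C^1$-open property and that both invariant manifolds persist $C^1$-close to their singular limits. In the singular limit I handle the nonhyperbolic regime $c\geq c^*(a)$ and the nearby hyperbolic regime $c<c^*(a)$ by different tangent-space computations.

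For $c\geq c^*(a)$, the front $\phi_\mathrm{f}(c)$ lies entirely in the plane $\{w=w_\ell\}$, as does the fast portion of $\mathcal{M}^{\ell,+}_0$, and since the $\eps=0$ layer flow preserves $w$, the strong stable fibers based on this fast portion also lie in $\{w=w_\ell\}$. In the local fold coordinates $(x,y,z)$ of~\eqref{eq_canonicalfold} this amounts to saying that, near the fast portion, $\mathcal{W}^\mathrm{s}(\mathcal{M}^{\ell,+}_0)$ coincides with $\{y=0\}$. Consequently, at any point $q\in\phi_\mathrm{f}(c)$ the tangent plane $T_q\mathcal{W}^\mathrm{s}(\mathcal{M}^{\ell,+}_0)$ is contained in $\{dw=0\}$. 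On the other hand, $T_q\mathcal{W}^\mathrm{u}(\mathcal{M}^r_0)$ contains the variation vector obtained by moving the base saddle $p_3(w)$ along $\mathcal{M}^r_0$ and transporting along the $\eps=0$ layer flow; since this flow conserves $w$, this variation has $dw$-component equal to $1$. The two tangent planes therefore sum to $\mathbb{R}^3$, giving transversality.

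For $c$ slightly below $c^*(a)$, so that $\phi_\mathrm{f}(c)$ is a heteroclinic in the layer plane $\{w=w_\mathrm{f}(c)\}$ with $w_\mathrm{f}(c)>w_\ell$ connecting the saddles $p_3(w_\mathrm{f}), p_1(w_\mathrm{f})$, the preceding dimensional argument fails because both tangent planes now acquire nonzero $dw$-components. I would instead verify the equivalent Melnikov-type condition $\partial_w D(c,w_\mathrm{f}(c))\neq 0$, where $D(c,w)$ is the signed distance, at a Poincar\'e section transverse to $\phi_\mathrm{f}$, between the unstable manifold of $p_3(w)$ and the stable manifold of $p_1(w)$ in the layer at level $w$. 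A standard calculation expresses $\partial_w D$ as the integral along $\phi_\mathrm{f}$ of a bounded adjoint solution paired with the $\partial_w$-perturbation of the layer vector field, and its non-vanishing follows from the smooth monotone parametrization $c\mapsto w_\mathrm{f}(c)$ coming from the construction in~\cite{STR}.

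For the passage to $\eps>0$, $\mathcal{W}^\mathrm{u}(\mathcal{M}^r_\eps)$ is $C^1$-$\mathcal{O}(\eps)$-close to $\mathcal{W}^\mathrm{u}(\mathcal{M}^r_0)$ by standard Fenichel theory, while $\mathcal{W}^\mathrm{s}(\mathcal{M}^{\ell,+}_\eps)$ is $C^1$-$\mathcal{O}(\eps^{1/3})$-close to $\mathcal{W}^\mathrm{s}(\mathcal{M}^{\ell,+}_0)$ by the fold blow-up analysis recalled in the paragraph preceding the proposition. Since transversality of $C^1$-submanifolds is $C^1$-open, the transversality at $\eps=0$ persists for all sufficiently small $\eps>0$. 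The hardest point in the argument is ensuring that this persistence is uniform in $c$ across the qualitative change at $c=c^*(a)$, where the intersection picture switches from a hyperbolic saddle-to-saddle connection in a plane $w>w_\ell$ to a nonhyperbolic connection to the fold at $w=w_\ell$; this uniformity is precisely what the $C^k$-smooth-in-$c$ normal form $\Phi_\eps$ used in the fold analysis is designed to provide.
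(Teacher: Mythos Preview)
Your argument for $c \geq c^*(a)$ and the passage to $\eps > 0$ match the paper's proof essentially line for line: $\mathcal{W}^\mathrm{s}(\mathcal{M}^{\ell,+}_0)$ is tangent to $\{w=w_\ell\}$ along $\phi_\mathrm{f}(c)$, while $\mathcal{W}^\mathrm{u}(\mathcal{M}^r_0)$ is transverse to planes $w=\text{const}$, and the $C^1$-$\mathcal{O}(\eps^{1/3})$ proximity of $\mathcal{W}^\mathrm{s}(\mathcal{M}^{\ell,+}_\eps)$ carries this over to $\eps>0$.

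The genuine difference is how you handle $c^*(a)-\delta_c<c<c^*(a)$. You propose a separate Melnikov computation in $w$ to verify $\partial_w D(c,w_\mathrm{f}(c))\neq 0$ directly in the hyperbolic regime. The paper instead avoids this entirely: it proves transversality at $\eps=0$ only for $c\geq c^*(a)$ and then invokes that transversality is $C^1$-open in $c$, which immediately gives a $\delta_c$-window below $c^*(a)$. This is cleaner and dissolves your stated worry about uniformity across the qualitative transition at $c=c^*(a)$ --- one never has to analyze the two regimes separately. Your Melnikov route is correct (the nonvanishing of $\partial_w D$ follows, via the chain rule applied to $D(c,w_\mathrm{f}(c))\equiv 0$, from the sign-definite $c$-Melnikov integral together with $w_\mathrm{f}'(c)\neq 0$), and it does buy something: transversality for \emph{all} $0<c<c^*(a)$ rather than just a small window. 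But the proposition only asks for the window, and the paper's openness-in-$c$ argument delivers that with no additional computation.
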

\begin{proof}
We note that to the right of the fold point (that is, for $x>0$ in the local coordinates in $V_\ell$), the trajectory $\mathcal{M}^{\ell,+}_0$ lies in a plane of constant $w$ since in this region $\mathcal{M}^{\ell,+}_0$ is described by the fast $\eps=0$ flow. Thus we need to show transversality of the manifolds $\mathcal{W}^\mathrm{s}(\mathcal{M}^{\ell,+}_0)$ and $\mathcal{W}^\mathrm{u}(\mathcal{M}^r_0)$ with respect to $w$, which is a parameter for the fast $\eps=0$ flow.

It suffices to prove transversality at $\eps=0$ for each $c\geq c^*(a)$. By the $C^1$-dependence of the manifolds on $c$, this transversality persists for $c>c^*(a)-\delta_c$. The fact that this transversality persists for small $\eps>0$ follows from the $\mathcal{O}(\eps^{1/3})$-closeness of $\mathcal{W}^\mathrm{s}(\mathcal{M}^{\ell,+}_\eps)$ and $\mathcal{W}^\mathrm{s}(\mathcal{M}^{\ell,+}_0)$ as  $C^1$-manifolds. 

To continue, we consider the planar system~\eqref{eq_layer}, for which there exists the heteroclinic connection $\phi_\mathrm{f}(c)$ for each $c\geq c^*(a)$ which connects the equilibria $p_3(w_\ell)$ and $p_1(w_\ell)$.

Thus the manifolds $\mathcal{W}^\mathrm{s}(\mathcal{M}^{\ell,+}_0)$ and $\mathcal{W}^\mathrm{u}(\mathcal{M}^r_0)$ intersect in the full system along $\phi_\mathrm{f}(c)$. Since $\mathcal{M}^{\ell,+}_0$ lies in the plane $y=0$ in the region $x>0$ (and thus so do its fast fibers since the fast flow is confined to $y= $ const planes), we have that $\mathcal{W}^\mathrm{s}(\mathcal{M}^{\ell,+}_0)$ is tangent to the plane $y=0$ along $\phi_\mathrm{f}(c)$; equivalently $\mathcal{W}^\mathrm{s}(\mathcal{M}^{\ell,+}_0)$ is tangent to the plane $w=w_\ell$ in the original $(u,v,w)$-coordinates. In~\eqref{eq_layer}, from regular perturbation theory, the unstable manifold of the  equilibrium $p_3(w)$ (given by the trajectory $\phi_\mathrm{f}(c)$ at $w=w_\ell$) breaks smoothly in $w$ and thus $\mathcal{W}^\mathrm{u}(\mathcal{M}^r_0)$ is transverse to planes $w= $ const; in particular this gives the necessary transversality of $\mathcal{W}^\mathrm{s}(\mathcal{M}^{\ell,+}_0)$ and $\mathcal{W}^\mathrm{u}(\mathcal{M}^r_0)$.

We therefore obtain the desired transverse intersection of $\mathcal{W}^\mathrm{s}(\mathcal{M}^{\ell,+}_\eps)$ and $\mathcal{W}^\mathrm{u}(\mathcal{M}^r_\eps)$ $c>c^*(a)-\delta_c$ and each sufficiently small $\eps>0$. The geometry of this intersection for $\eps>0$ is depicted in Figure~\ref{fig:lower_fold_critical} for $c\approx c^*(a)$ and in Figure~\ref{fig:lower_fold_center} for $c>c^*(a)+\delta_c$.
\end{proof}

The analysis near the upper right fold point follows in a similar fashion. There it is possible to construct a singular trajectory $\mathcal{M}^{r,+}_0$ analogous to $\mathcal{M}^{\ell,+}_0$. This trajectory also possesses a stable manifold $\mathcal{W}^\mathrm{s}(\mathcal{M}^{r,+}_0)$ which perturbs as $\mathcal{O}(\eps^{2/3})$ in $C^0$ and as $\mathcal{O}(\eps^{1/3})$ in $C^1$. For $\eps=0$, this manifold analogously intersects $\mathcal{W}^\mathrm{u}(\mathcal{M}^\ell_0)$ along the heteroclinic trajectory $\phi_\mathrm{b}(c)$ for $c\geq c^*(a)$ in the plane $w=w_r$. We have the following, which is proved similarly to Proposition~\ref{prop_transfront}.
\begin{Proposition}
\label{prop_transback}
Fix $0<a<1/2$. There exists $\delta_c>0$ such that for each $c>c^*(a)-\delta_c$ and each sufficiently small $\eps>0$, the manifolds $\mathcal{W}^\mathrm{s}(\mathcal{M}^{r,+}_\eps)$ and $\mathcal{W}^\mathrm{u}(\mathcal{M}^\ell_\eps)$ intersect transversely.
\end{Proposition}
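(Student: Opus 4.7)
The plan is to mirror the proof of Proposition~\ref{prop_transfront}, with the roles of the left and right branches of the critical manifold interchanged and with the direction of the fast jump reversed. First I would set up local coordinates in a neighborhood $V_r$ of the upper right fold point $p_r=(u_r,0,w_r)$. By the same type of $C^k$ change of coordinates $\Phi_\eps\colon V_r\to\R^3$ (together with a possible reflection $x\mapsto -x$, $y\mapsto -y$ to account for the fact that the cubic $f$ opens downward at $p_r$ rather than upward at $p_\ell$), the system~\eqref{eq_twode} is brought into a normal form analogous to~\eqref{eq_canonicalfold}, where the critical manifold again appears as a parabola and the normally hyperbolic branch persisting as $\mathcal{M}^r_\eps$ corresponds to the repelling branch. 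I would define $\mathcal{M}^{r,+}_0$ as the concatenation of $\mathcal{M}^r_0$ with the fast half-line emerging from $p_r$ into the region lying above $\mathcal{M}^m_0$, parallel to the analogous construction at the lower fold.

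Next I would invoke the fold analysis of~\cite{cas} verbatim, applied at the upper fold: the trajectory $\mathcal{M}^{r,+}_0$ perturbs, for small $\eps>0$, to a locally invariant trajectory $\mathcal{M}^{r,+}_\eps$ on $z=0$, $\mathcal{O}(\eps^{2/3})$-close in $C^0$ and $\mathcal{O}(\eps^{1/3})$-close in $C^1$ to $\mathcal{M}^{r,+}_0$, uniformly in $c$ on a compact interval. The two-dimensional stable manifold $\mathcal{W}^\mathrm{s}(\mathcal{M}^{r,+}_0)$ made up of strong stable fibers then perturbs with the same closeness estimates to $\mathcal{W}^\mathrm{s}(\mathcal{M}^{r,+}_\eps)$, in particular as a $C^1$-manifold.

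The transversality is then proven, as in the proof of Proposition~\ref{prop_transfront}, at $\eps=0$, after which it propagates to $0<\eps\ll 1$ by $C^1$-closeness and to $c>c^*(a)-\delta_c$ by $C^1$-dependence of the manifolds in $c$. For $c\geq c^*(a)$ the layer problem~\eqref{eq_layer} at $w=w_r$ admits the heteroclinic $\phi_\mathrm{b}(c)$ connecting $p_1(w_r)$ to the fold point $p_3(w_r)=p_r$, and this orbit lies in the intersection of $\mathcal{W}^\mathrm{s}(\mathcal{M}^{r,+}_0)$ with $\mathcal{W}^\mathrm{u}(\mathcal{M}^\ell_0)$ in the full three-dimensional system. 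To the left of the fold, $\mathcal{M}^{r,+}_0$ lies in the plane $w=w_r$, so the associated strong stable fibers also lie in $w=w_r$; hence $\mathcal{W}^\mathrm{s}(\mathcal{M}^{r,+}_0)$ is tangent to $\{w=w_r\}$ along $\phi_\mathrm{b}(c)$. On the other hand, by regular perturbation theory in the parameter $w$ for~\eqref{eq_layer}, the unstable manifold of $p_1(w)$ (which carries $\phi_\mathrm{b}(c)$ when $w=w_r$) depends smoothly on $w$ and its image traces out $\mathcal{W}^\mathrm{u}(\mathcal{M}^\ell_0)$ transversely to level sets of $w$. This yields the required transverse intersection of $\mathcal{W}^\mathrm{s}(\mathcal{M}^{r,+}_0)$ and $\mathcal{W}^\mathrm{u}(\mathcal{M}^\ell_0)$ at $\eps=0$, and hence of $\mathcal{W}^\mathrm{s}(\mathcal{M}^{r,+}_\eps)$ and $\mathcal{W}^\mathrm{u}(\mathcal{M}^\ell_\eps)$ for $c>c^*(a)-\delta_c$ and sufficiently small $\eps>0$.

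The whole argument is essentially bookkeeping: the geometric content is identical to that of Proposition~\ref{prop_transfront}. The only genuinely substantive issue, and hence the main (mild) obstacle, is to verify that the normal-form reduction of~\cite{cas} applies at the upper fold with the appropriate reflection of coordinates, so that the $\mathcal{O}(\eps^{1/3})$ $C^1$-closeness used to transport transversality from $\eps=0$ to $\eps>0$ holds uniformly in $c$ on a neighborhood of $c^*(a)$; once this is noted the remainder of the proof is a direct transcription of the argument already given.
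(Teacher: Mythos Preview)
Your proposal is correct and follows precisely the approach the paper intends: the paper states that Proposition~\ref{prop_transback} ``is proved similarly to Proposition~\ref{prop_transfront},'' and your write-up carries out exactly that symmetric argument at the upper right fold, including the appropriate reflection in the normal-form coordinates and the $C^1$-$\mathcal{O}(\eps^{1/3})$ closeness from~\cite{cas} needed to transport transversality from $\eps=0$.
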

See Figure~\ref{fig:PO_nonhyp} for a depiction of the results of Propositions~\ref{prop_transfront} and ~\ref{prop_transback}.

\subsection{Construction of periodic orbits in nonhyperbolic regime}\label{sec_nhpoproof}
In this section, we provide a proof of Proposition~\ref{prop_nh_periodicorbits}.
\begin{proof}[Proof of Proposition~\ref{prop_nh_periodicorbits}]

For technical reasons, it is best to split the proof of Proposition~\ref{prop_nh_periodicorbits} into two cases: (i) $c\approx c^*(a)$ and (ii) $c> c^*(a)$, though the arguments in each case are similar. 

We begin with case (i) and fix $0<a<1/2$. For sufficiently small $\delta_c>0$, we define the interval $I_c = [c^*(a)-\delta_c, c^*(a)+\delta_c]$. When $c=c^*(a)$, the fronts $\phi_\mathrm{f}(c), \phi_\mathrm{b}(c)$ approach the folds along the unique strong stable direction. Our approach for constructing periodic orbits which follow $\Gamma_0(c)$ for $c\in I_c$ will be to place a two-dimensional section $\Sigma_z^\mathrm{fold}$ near the lower left fold point $p_\ell$ transverse to the strong stable eigendirection and consider the associated Poincar\'e map $\Pi :\Sigma_z^\mathrm{fold}\to \Sigma_z^\mathrm{fold}$; fixed points of this map correspond to periodic orbits.

We determine the location of $\mathcal{W}^\mathrm{u}(\mathcal{M}^r_\eps)$ in the neighborhood $V_\ell$. From Proposition~\ref{prop_transfront}, we know that $\mathcal{W}^\mathrm{u}(\mathcal{M}^r_0)$ intersects $\mathcal{W}^\mathrm{s}(\mathcal{M}^{\ell,+}_0)$ transversely for $\eps=0$ along the front $\phi_\mathrm{f}(c^*(a))$, and this intersection persists for $c\in I_c $ and sufficiently small $\eps>0$. 

 We define the exit section $\Sigma_z^\mathrm{fold}$ by
\begin{align}
\Sigma_z^\mathrm{fold}=\{z=\Delta_z\};
\end{align}
see Figure~\ref{fig:lower_fold_critical} for the setup. For $c \in I_c$ and sufficiently small $\eps>0$, the intersection of $\mathcal{W}^\mathrm{u}(\mathcal{M}^r_\eps)$  and $\mathcal{W}^\mathrm{s}(\mathcal{M}^{\ell,+}_\eps)$ occurs at a point 
\begin{align}
(x,y,z) = (x_\ell(c,\eps),y^\mathrm{s}_\eps(x_\ell(c,\eps);c), \Delta_z)\in \Sigma_z^\mathrm{fold},
\end{align}
and thus we may expand $\mathcal{W}^\mathrm{u}(\mathcal{M}^r_\eps)$ in $\Sigma_z^\mathrm{fold}$ as a graph $x=x^\mathrm{u}_\eps(y;c)$ where
\begin{align}
x^\mathrm{u}_\eps(y;c) = x_\ell(c,\eps)+\mathcal{O}\left(|y-y^\mathrm{s}_\eps(x_\ell(c,\eps);c)|\right), \quad  |y|\leq \Delta_y,
\end{align}
for some small $\Delta_y>0$. 

We now consider a small interval of initial conditions in $\Sigma_z^\mathrm{fold}$ which we will evolve \emph{backwards} in time until they return to the section $\Sigma_z^\mathrm{fold}$; that is, we consider the inverse map $\Pi_\mathrm{z}^{-1}:\Sigma_z^\mathrm{fold}\to\Sigma_z^\mathrm{fold}$. Since $\mathcal{W}^\mathrm{u}(\mathcal{M}^r_\eps)\cap \Sigma_z^\mathrm{fold}$ is given by a graph over $ |y|\leq \Delta_y$, we can define for each $ |\tilde{y}|\leq \Delta_y$ the curve $\mathcal{I}_{\tilde{y}} = \{(x,\tilde{y},\Delta_z):  |x|\leq \Delta_x\}\subset\Sigma_z^\mathrm{fold}$ for sufficiently small $\Delta_x>0$ which transversely intersects $\mathcal{W}^\mathrm{u}(\mathcal{M}^r_\eps)$ in $\Sigma_z^\mathrm{fold}$ at the point $(x^\mathrm{u}_\eps(\tilde{y};c), \tilde{y}, \Delta_z)$; see the left panel of Figure~\ref{fig:sigmafoldsetup}.

\begin{figure}
\hspace{.025\textwidth}
\begin{subfigure}{.45 \textwidth}
\centering
\includegraphics[width=1\linewidth]{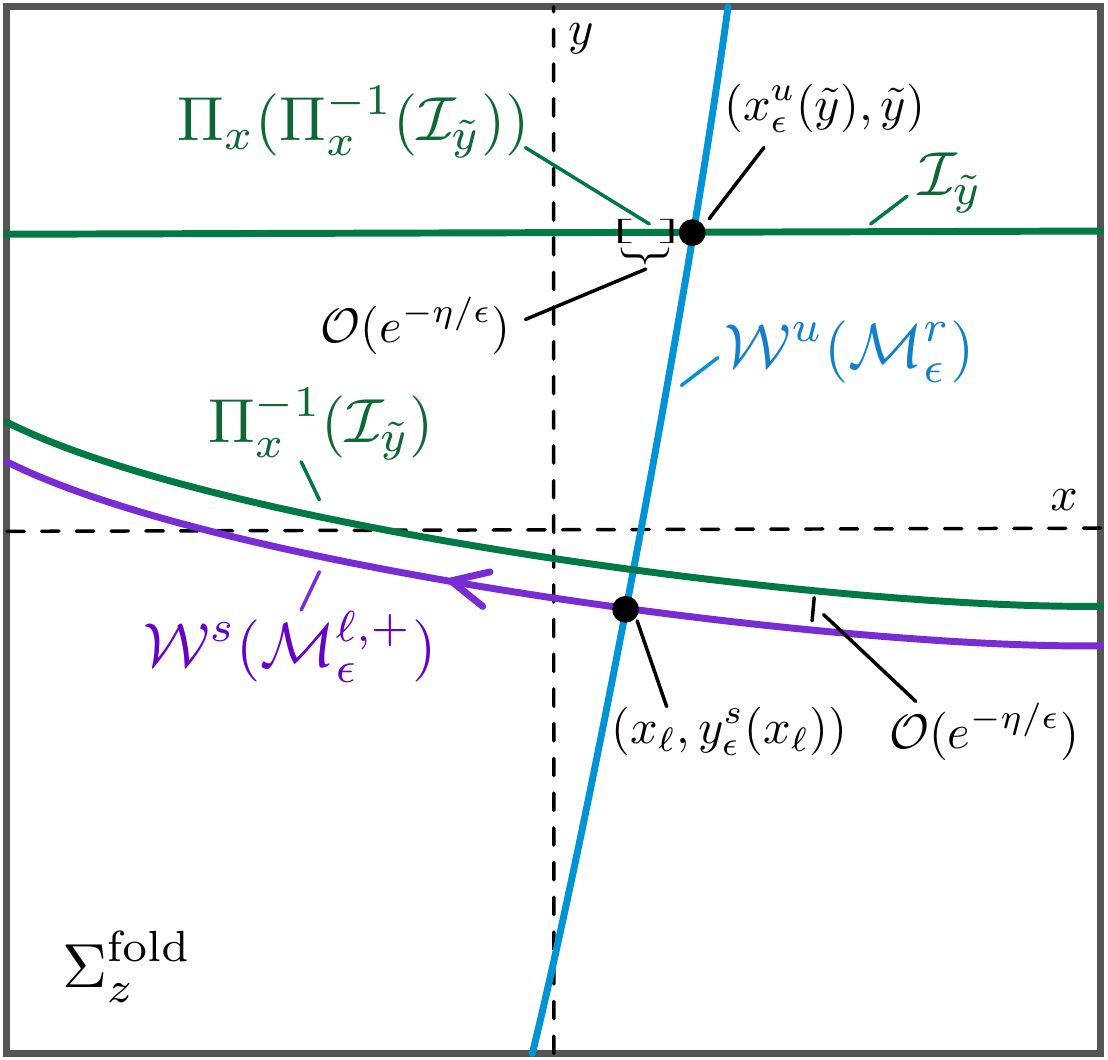}
\end{subfigure}
\hspace{.05\textwidth}
\begin{subfigure}{.45 \textwidth}
\centering
\includegraphics[width=1\linewidth]{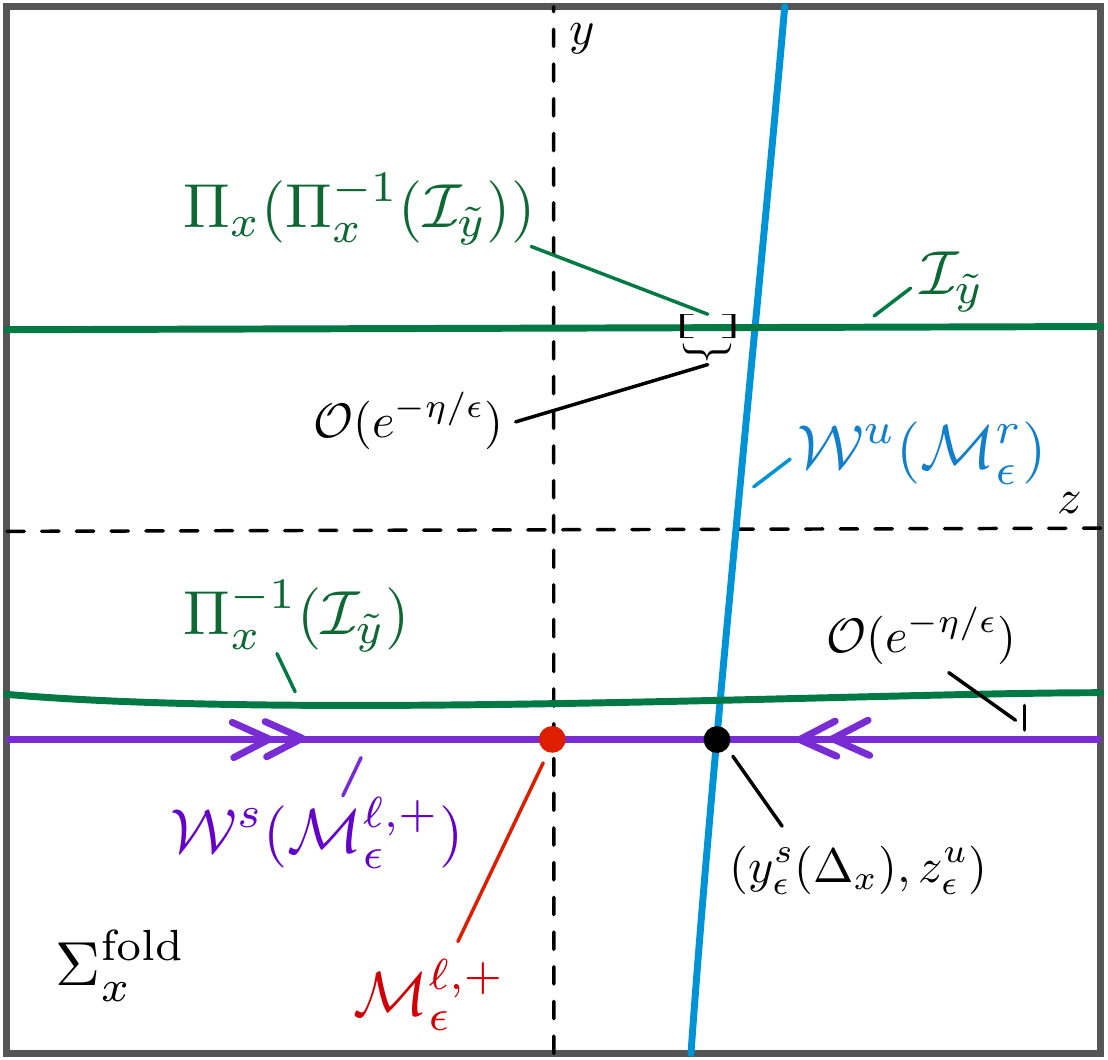}
\end{subfigure}
\hspace{.025\textwidth}
\caption{Shown is the setup for case (i) in the section $\Sigma^\textrm{fold}_z$ (left) and for case (ii) in the section $\Sigma^\textrm{fold}_x$ (right) in the proof of Proposition~\ref{prop_nh_periodicorbits}.}
\label{fig:sigmafoldsetup}
\end{figure}

Due to this transverse intersection, when evolving $\mathcal{I}_{\tilde{y}}$ backwards in time, by the exchange lemma~\cite{sec}, $\mathcal{I}_s$ traces out a two-dimensional manifold $\bar{\mathcal{I}}_{\tilde{y}}$ which aligns exponentially close to $\mathcal{W}^\mathrm{s}(\mathcal{M}^r_\eps)$ before transversely intersecting $\mathcal{W}^\mathrm{u}(\mathcal{M}^\ell_\eps)$ near the plane $w\approx w_r$, due to Proposition~\ref{prop_transback}. Due to this second transverse intersection, again by the exchange lemma, $\bar{\mathcal{I}}_{\tilde{y}}$ then aligns exponentially close to $\mathcal{W}^\mathrm{s}(\mathcal{M}^\ell_\eps)$ and hence arrives in the section $\Sigma_z^\mathrm{fold}$ aligned exponentially close to $\mathcal{W}^\mathrm{s}(\mathcal{M}^{\ell,+}_\eps)$; see Figure~\ref{fig:sigmafoldsetup} (left panel).

Therefore, we have that $\Pi_\mathrm{z}^{-1}(\mathcal{I}_{\tilde{y}})$ is given by a curve 
\begin{align}
(x,y) = \left(\tilde{x},y^\mathrm{s}_\eps(\tilde{x};c)+h(\tilde{x}, \tilde{y}, c, \eps)\right)
\end{align}
for $|\tilde{x}|\leq \Delta_x$, where $h(\tilde{x}, \tilde{y}, c, \eps) = \mathcal{O}(e^{-\eta/\eps})$ uniformly in $(\tilde{x}, \tilde{y}, c)$.

We now consider $\Pi_\mathrm{z}$ applied to $\Pi_\mathrm{z}^{-1}(\mathcal{I}_{\tilde{y}})$. Of course $\Pi_\mathrm{z}(\Pi_\mathrm{z}^{-1}(\mathcal{I}_{\tilde{y}}))\subset \mathcal{I}_{\tilde{y}}$, but the image is contained in an exponentially thin interval within $\mathcal{I}_{\tilde{y}}$. In particular for a point $\left(\tilde{x},y^\mathrm{s}_\eps(\tilde{x};c)+h(\tilde{x}, \tilde{y}, c, \eps)\right) \in \Pi_\mathrm{z}^{-1}(\mathcal{I}_{\tilde{y}})$ with $|\tilde{x}|\leq \Delta_x$, the image under $\Pi_\mathrm{z}$ is given by
\begin{align}
\Pi_\mathrm{z}\left(\tilde{x},y^\mathrm{s}_\eps(\tilde{x};c)+h(\tilde{x}, \tilde{y}, c, \eps)\right) = \left(x^\mathrm{u}_\eps(\tilde{y};c)+\mathcal{O}(e^{-\eta/\eps}), \tilde{y}\right),
\end{align}
where the exponentially small errors are uniform in $\tilde{x}, \tilde{y}, c$ and the derivatives with respect to these variables are also exponentially small. To find a fixed point, we set the argument equal to the right hand side and obtain
\begin{align}\begin{split}
\tilde{x} &= x^\mathrm{u}_\eps(\tilde{y};c)+\mathcal{O}(e^{-\eta/\eps})\\
\tilde{y} &= y^\mathrm{s}_\eps(\tilde{x};c)+\mathcal{O}(e^{-\eta/\eps}).
\end{split}
\end{align}

Hence we search for zeros of $\mathcal{F}(\tilde{x},\tilde{y};c,\eps)$ where
\begin{align}\begin{split}
\mathcal{F}(\tilde{x},\tilde{y};c,\eps):=\begin{pmatrix}\tilde{x} - x^\mathrm{u}_\eps(\tilde{y};c)+\mathcal{O}(e^{-\eta/\eps})\\
\tilde{y} - y^\mathrm{s}_\eps(\tilde{x};c)+\mathcal{O}(e^{-\eta/\eps}) \end{pmatrix}.
\end{split}
\end{align}

We have that
\begin{align}\begin{split}
D_{(\tilde{x}, \tilde{y})}\mathcal{F}(0,0;c^*(a),0):=\begin{pmatrix} 1 & K \\
0 & 1\end{pmatrix},
\end{split}
\end{align}
for some $K$ independent of $\eps$, and so by the implicit function theorem, for $c\in I_c$ and sufficiently small $\eps>0$, we can solve for a solution which occurs when
\begin{align}\begin{split}
\begin{pmatrix} \tilde{x} \\ \tilde{y}\end{pmatrix}&= \begin{pmatrix} \tilde{x}_\mathrm{p}(c,\eps) \\ \tilde{y}_\mathrm{p}(c,\eps)\end{pmatrix} :=\begin{pmatrix} x_\ell(c,\eps) \\ y^\mathrm{s}_\eps( x_\ell(c,\eps);c)\end{pmatrix}+\mathcal{O}(e^{-\eta/\eps}),
\end{split}
\end{align}
which corresponds to a periodic orbit.

We now turn to the case (ii) for which it remains to consider values $c>c^*(a)+\delta_c$. The argument is similar to case (i), and hence we only outline the differences. For small $\Delta_x>0$, we define the section
\begin{align}
\Sigma_x^\mathrm{fold} = \{x=\Delta_x\};
\end{align}
see Figure~\ref{fig:lower_fold_center} for the setup.

Since $\phi_\mathrm{f}(c)$ approaches the fold along a center direction for $c>c^*(a)$, $\phi_\mathrm{f}(c)$ is attracted to the local center manifold $z=0$ in forward time. Hence for each $\Delta_z>0$, by possibly shrinking $\Delta_x$ if necessary, it is possible to guarantee that $\phi_\mathrm{f}(c)$ intersects the section $\Sigma_x^\mathrm{fold}$ in the plane $y=0$ at a point $(\Delta_x, 0, z_\mathrm{f}(c))$ where $0\leq z_\mathrm{f}(c) <\Delta_z/2$. Therefore in the section $\Sigma_x^\mathrm{fold}$, the manifold $\mathcal{W}^\mathrm{u}(\mathcal{M}^r_\eps)$ can be represented as a graph 
\begin{align}
z = z_\mathrm{f}(c)+\mathcal{O}(y, \eps), \quad |y|\leq \Delta_y.
\end{align}
For each $|\tilde{y}|\leq \Delta_y$, we now define the curve $\mathcal{I}_{\tilde{y}} = \{(\Delta_x,\tilde{y},z):  |z|\leq \Delta_z\}\subset\Sigma_x^\mathrm{fold}$ which transversely intersects $\mathcal{W}^\mathrm{u}(\mathcal{M}^r_\eps)$ provided $\Delta_y$ is sufficiently small.

Since $\mathcal{W}^\mathrm{s}(\mathcal{M}^{\ell,+}_\eps)$ is $\mathcal{O}(\eps^{2/3})$ close in $C^0$ to $\mathcal{W}^\mathrm{s}(\mathcal{M}^{\ell,+}_0)$, in the section $\Sigma_x^\mathrm{fold}$, the manifold $\mathcal{W}^\mathrm{s}(\mathcal{M}^{\ell,+}_\eps)$ lies in the plane $y = y^\mathrm{s}_\eps(\Delta_x;c) = \mathcal{O}(\eps^{2/3})$ for $|z|\leq \Delta_z$, and therefore intersects $\mathcal{W}^\mathrm{u}(\mathcal{M}^r_\eps)$ transversely at the point $(y,z) = ( y^\mathrm{s}_\eps(\Delta_x;c), z^\mathrm{u}_\eps(c))$ where $z^\mathrm{u}_\eps(c) = z_\mathrm{f}(c)+\mathcal{O}(\eps^{2/3})$; see Figure~\ref{fig:sigmafoldsetup} (right panel).

 Proceeding as before, we consider the inverse map $\Pi_x^{-1}:\Sigma_x^\mathrm{fold}\to \Sigma_x^\mathrm{fold}$, under which $\Pi_x^{-1}(\mathcal{I}_{\tilde{y}})$ returns to $\Sigma_x^\mathrm{fold}$ exponentially aligned with $\mathcal{W}^\mathrm{s}(\mathcal{M}^{\ell,+}_\eps)$. We again consider the image $\Pi_x(\Pi_x^{-1}(\mathcal{I}_{\tilde{y}}))$, and proceeding similarly as before, we find a fixed point corresponding to a periodic orbit at
 \begin{align}\begin{split}
\begin{pmatrix} \tilde{y} \\ \tilde{z}\end{pmatrix}&= \begin{pmatrix} \tilde{y}_\mathrm{p}(c,\eps) \\ \tilde{z}_\mathrm{p}(c,\eps)\end{pmatrix} :=\begin{pmatrix} 0 \\ z_\mathrm{f}(c)\end{pmatrix}+\mathcal{O}(\eps^{2/3}).
\end{split}
\end{align}
 \end{proof}

\subsection{Stable and unstable manifolds of the periodic orbits $\Gamma_\eps(c)$}\label{sec_poinvariantmanifolds}

In order to find a heteroclinic which connects $\Gamma_\eps(c)$ and $p$, we first aim at understanding $\mathcal{W}^\mathrm{u}(\Gamma_\eps(c))$. The periodic orbit $\Gamma_\eps(c)$ is found by perturbing from a singular structure $\Gamma_0(c)$ which consists of concatenated slow and fast segments, where the slow segments are portions of the branches $\mathcal{M}^\ell_0, \mathcal{M}^r_0$ of the critical manifold. When $0<\eps\ll 1$, we aim to show that along portions of $\Gamma_\eps(c)$ which are nearby $\mathcal{M}^\ell_\eps,  \mathcal{M}^r_\eps$, the unstable manifold $\mathcal{W}^\mathrm{u}(\Gamma_\eps(c))$ is, in an appropriate sense, close to $\mathcal{W}^\mathrm{u}(\mathcal{M}^\ell_\eps), \mathcal{W}^\mathrm{u}(\mathcal{M}^r_\eps)$, respectively.

Consider the periodic orbit $\Gamma_\eps(c)$ for $c>0$, and recall that $w_\mathrm{f}(c),w_\mathrm{b}(c)\in [w_\ell, w_r]$ denote the jump-off heights along the associated singular periodic orbit $\Gamma_0(c)$. (Note that $w_\mathrm{f}(c)=w_\ell$ and $w_\mathrm{f}(c)=w_r$ for all $c\geq c^*(a)$.)

\begin{Proposition}\label{prop_poinvmflds}
Fix $0<a<1/2$ and $c>0$, and let $w_m, w_M$ satisfy $w_\mathrm{f}(c)<w_m<w_M<w_\mathrm{b}(c)$. Then there exists $\Delta, \eta>0$ such that for all $\eps>0$ sufficiently small, the following holds. Consider the Fenichel neighborhoods
\begin{align}
\begin{split}
U^\ell&:=\left\{q=(u,v,w): d(q,\mathcal{M}^\ell_\eps)<\Delta, w_m<w<w_M\right\}\\
U^r&:=\left\{q=(u,v,w): d(q,\mathcal{M}^r_\eps)<\Delta, w_m<w<w_M\right\}.
\end{split}
\end{align}
Within $U^\ell$ the periodic orbit $\Gamma_\eps(c)$ is $\mathcal{O}(e^{-\eta/\eps})$-close to $\mathcal{M}^\ell_\eps$ in the $C^1$-topology, and the manifolds $\mathcal{W}^\mathrm{u}(\Gamma_\eps(c)), \mathcal{W}^\mathrm{s}(\Gamma_\eps(c))$ are $\mathcal{O}(e^{-\eta/\eps})$-close in the  $C^1$-topology to $\mathcal{W}^\mathrm{u}(\mathcal{M}^\ell_\eps), \mathcal{W}^\mathrm{s}(\mathcal{M}^\ell_\eps)$, respectively.

Within $U^r$, the same statements hold with respect to the manifolds $\mathcal{M}^r_\eps$ and $\mathcal{W}^\mathrm{u}(\mathcal{M}^r_\eps), \mathcal{W}^\mathrm{s}(\mathcal{M}^r_\eps)$.
\end{Proposition}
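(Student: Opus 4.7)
The plan is to combine Fenichel's invariant fibration theorem with an exchange-lemma argument in the Jones--Kopell style; both tools apply uniformly in $\eps$ on $\{w_m\leq w\leq w_M\}$ because this region is bounded away from the fold points, so normal hyperbolicity of $\mathcal{M}^{\ell,r}_\eps$ holds with rates independent of $\eps$.

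\textbf{Step 1: Closeness of $\Gamma_\eps(c)$ to $\mathcal{M}^{\ell,r}_\eps$.} From the construction in~\S\ref{sec_nhpoproof} and its hyperbolic analogue, $\Gamma_\eps(c)$ enters $U^\ell$ through the perturbed fast jump $\phi_\mathrm{f}$, which lies on $\mathcal{W}^{\mathrm{u}}(\mathcal{M}^r_\eps)\cap\mathcal{W}^{\mathrm{s}}(\mathcal{M}^\ell_\eps)$ up to $\mathcal{O}(\eps)$ (or $\mathcal{O}(\eps^{2/3})$ in the nonhyperbolic case). By Fenichel's theorem, $\mathcal{W}^{\mathrm{s}}(\mathcal{M}^\ell_\eps)$ is foliated by strong stable fibers along which trajectories contract at a rate $e^{-\eta\xi}$, uniformly in $\eps$ on $U^\ell$. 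The slow time needed for $w$ to move from the jump-on value $w_\mathrm{f}(c)$ into the interior interval $[w_m,w_M]$ is bounded below by a positive constant, so the corresponding fast time is at least $c_0/\eps$; consequently the distance from $\Gamma_\eps(c)$ to its fiber base-point in $\mathcal{M}^\ell_\eps$ is $\mathcal{O}(e^{-\eta c_0/\eps})$ throughout this region. The same exponential contraction, applied to the variational equation along $\Gamma_\eps(c)$, upgrades the estimate to the $C^1$ topology. The argument inside $U^r$ is identical.

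\textbf{Step 2: Closeness of $\mathcal{W}^{\mathrm{u}}$ and $\mathcal{W}^{\mathrm{s}}$.} The unstable manifold $\mathcal{W}^{\mathrm{u}}(\Gamma_\eps(c))$ is two-dimensional, spanned by the flow direction and a Floquet direction associated with the fast saddle geometry of $\mathcal{M}^{\ell,r}_\eps$. On entry into $U^\ell$, $\mathcal{W}^{\mathrm{u}}(\Gamma_\eps(c))$ lies $C^1$-close to $\mathcal{W}^{\mathrm{u}}(\mathcal{M}^r_\eps)$, which by Proposition~\ref{prop_transfront} meets $\mathcal{W}^{\mathrm{s}}(\mathcal{M}^\ell_\eps)$ transversely; hence the tangent plane to $\mathcal{W}^{\mathrm{u}}(\Gamma_\eps(c))$ is transverse to the Fenichel stable foliation of $\mathcal{M}^\ell_\eps$ on entry. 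The exchange lemma then asserts that, after a slow passage time bounded below by a positive constant, $\mathcal{W}^{\mathrm{u}}(\Gamma_\eps(c))$ $C^1$-aligns with $\mathcal{W}^{\mathrm{u}}(\mathcal{M}^\ell_\eps)$ to $\mathcal{O}(e^{-\eta/\eps})$ on the portion of $U^\ell$ with $w_m\leq w\leq w_M$. Iterating between $U^\ell$ and $U^r$ along $\Gamma_\eps(c)$, each successive alignment feeds the transversality hypothesis of the next application (via Proposition~\ref{prop_transback} on the other side), closing the argument over a full period. The statement for $\mathcal{W}^{\mathrm{s}}(\Gamma_\eps(c))$ follows by running the same argument in backward time.

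\textbf{Main obstacle.} The delicate point is applying the exchange lemma with constants independent of $\eps$, especially for $c$ near $c^*(a)$ where the singular periodic orbit passes through the fold. The proposition avoids this issue by design: restricting to $w\in[w_m,w_M]\subset(w_\mathrm{f}(c),w_\mathrm{b}(c))$ cuts out a neighborhood of each fold, so the slow manifolds $\mathcal{M}^{\ell,r}_\eps$ are uniformly normally hyperbolic on the closure of $U^\ell\cup U^r$, the Fenichel contraction and expansion rates are bounded below by an $\eps$-independent $\eta>0$, and the $C^1$-derivative bounds needed for the exchange lemma hold uniformly. With this separation enforced, the exponential estimates follow from standard slow--fast perturbation theory without further subtleties.
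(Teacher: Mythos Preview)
Your Step~1 is fine and matches the paper's use of the exchange lemma to place $\Gamma_\eps(c)$ exponentially close to the slow manifolds. The gap is in Step~2: you take for granted that $\Gamma_\eps(c)$ has well-defined two-dimensional stable and unstable manifolds (``a Floquet direction associated with the fast saddle geometry''), and your alignment argument is circular as written. To run the exchange lemma on $\mathcal{W}^{\mathrm{u}}(\Gamma_\eps(c))$ entering $U^\ell$, you assume it is already $C^1$-close to $\mathcal{W}^{\mathrm{u}}(\mathcal{M}^r_\eps)$ on entry---but that is precisely the conclusion you want for $U^r$. ``Iterating between $U^\ell$ and $U^r$'' does not close this loop without either a contraction argument or an independent existence result for the invariant manifolds; you provide neither.

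The paper fills this gap by working with the Poincar\'e map $\Pi_p$ on a section $\Sigma_{\mathrm{f}}$ transverse to $\phi_{\mathrm{f}}(c)$. Using the exchange lemma on test curves (not on $\mathcal{W}^{\mathrm{u,s}}(\Gamma_\eps)$ themselves), it shows that $D\Pi_p$ has an \emph{almost hyperbolic splitting} at the fixed point in the sense of Newhouse--Palis: the cones $|X|\leq\mu|Y|$ and $|Y|\leq\mu|X|$ are uniformly expanded by $D\Pi_p$ and $D\Pi_p^{-1}$ respectively. Their Theorem~3.1 then gives hyperbolicity of the fixed point, hence existence of $\mathcal{W}^{\mathrm{u}}(\Gamma_\eps(c))$ and $\mathcal{W}^{\mathrm{s}}(\Gamma_\eps(c))$ as graphs over the coordinate subspaces in $\Sigma_{\mathrm{f}}$. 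Only after this is established does one flow those manifolds forward/backward and invoke the exchange lemma to obtain the $\mathcal{O}(e^{-\eta/\eps})$ $C^1$-closeness inside $U^\ell,U^r$. Your argument can be repaired by inserting exactly this step: establish hyperbolicity of $\Gamma_\eps(c)$ as a fixed point of a return map before attempting to track its invariant manifolds.
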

\begin{proof}
We focus on the hyperbolic case $0<c<c^*(a)$; the nonhyperbolic case $c\geq c^*(a)$ is similar. We consider a Poincar\'e section $\Sigma_\mathrm{f}$ transverse to the periodic orbit along the front $\phi_\mathrm{f}(c)$. We study the Poincar\'e map $\Pi_p:\Sigma_\mathrm{f}\to \Sigma_\mathrm{f}$ to determine the structure of the manifolds $\mathcal{W}^\mathrm{u}(\Gamma_\eps(c)), \mathcal{W}^\mathrm{s}(\Gamma_\eps(c))$. We can analogously place a section $\Sigma_\mathrm{b}$ transverse to the periodic orbit along the front $\phi_\mathrm{b}(c)$; see Figure~\ref{fig:PO_hyp}. Then $\Pi_p$ is the composition of two maps $\Pi_p = \Pi_{bf}\circ \Pi_{fb}$ where $\Pi_{fb}:\Sigma_\mathrm{f} \to \Sigma_\mathrm{b}$ and $\Pi_{bf}:\Sigma_\mathrm{b} \to \Sigma_\mathrm{f}$ denote the transition maps between the sections $\Sigma_\mathrm{f}, \Sigma_\mathrm{b}$.

Within the section $\Sigma_\mathrm{f}$, the manifolds $\mathcal{W}^\mathrm{u}(\mathcal{M}^r_\eps)$ and $\mathcal{W}^\mathrm{s}(\mathcal{M}^\ell_\eps)$ intersect transversely at a point in which is $\mathcal{O}(e^{-\eta/\eps})$-close to $\Gamma_\eps(c)$. We choose local coordinates $\{(X,Y):|X|,|Y|\leq \Delta\}$ in the section $\Sigma_\mathrm{f}$ so that $\mathcal{W}^\mathrm{u}(\mathcal{M}^r_\eps)\cap \Sigma_\mathrm{f} = \{X=0\}$ and $\mathcal{W}^\mathrm{s}(\mathcal{M}^\ell_\eps)\cap \Sigma_\mathrm{f} = \{Y=0\}$. In these coordinates, $\Gamma_\eps(c)\cap \Sigma_\mathrm{f}$ is given as a point $(X,Y)=\mathcal{O}(e^{-\eta/\eps})$. We now shift these coordinates so that $\Gamma_\eps(c)\cap \Sigma_\mathrm{f}=(0,0)$ and thus $\mathcal{W}^\mathrm{u}(\mathcal{M}^r_\eps)$ and $\mathcal{W}^\mathrm{s}(\mathcal{M}^\ell_\eps)$ are given by curves $X=\mathcal{O}(e^{-\eta/\eps})$ and $Y=\mathcal{O}(e^{-\eta/\eps})$, respectively; see Figure~\ref{fig:sigmafb} (left panel).

\begin{figure}
\hspace{.025\textwidth}
\begin{subfigure}{.45 \textwidth}
\centering
\includegraphics[width=1\linewidth]{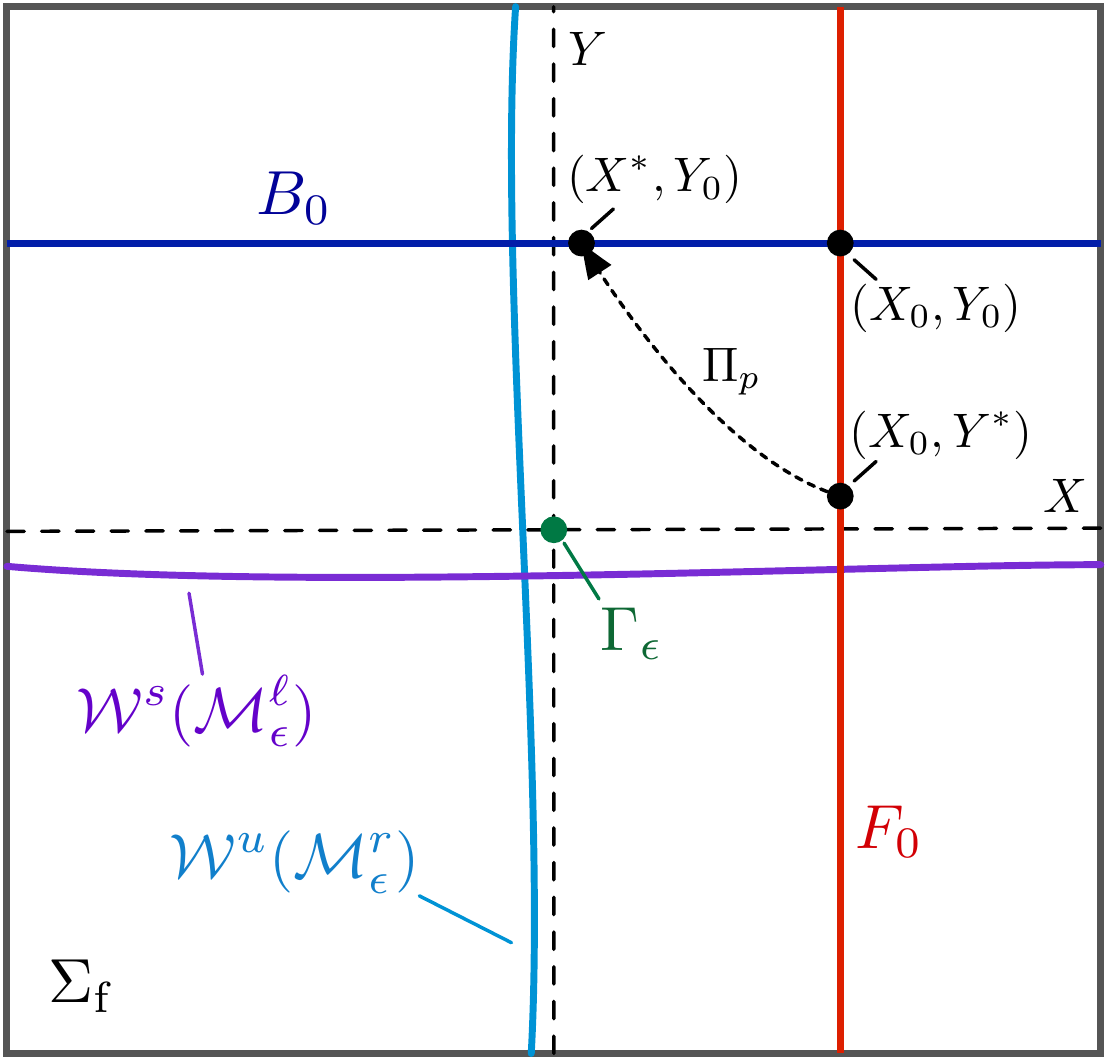}
\end{subfigure}
\hspace{.05\textwidth}
\begin{subfigure}{.45 \textwidth}
\centering
\includegraphics[width=1\linewidth]{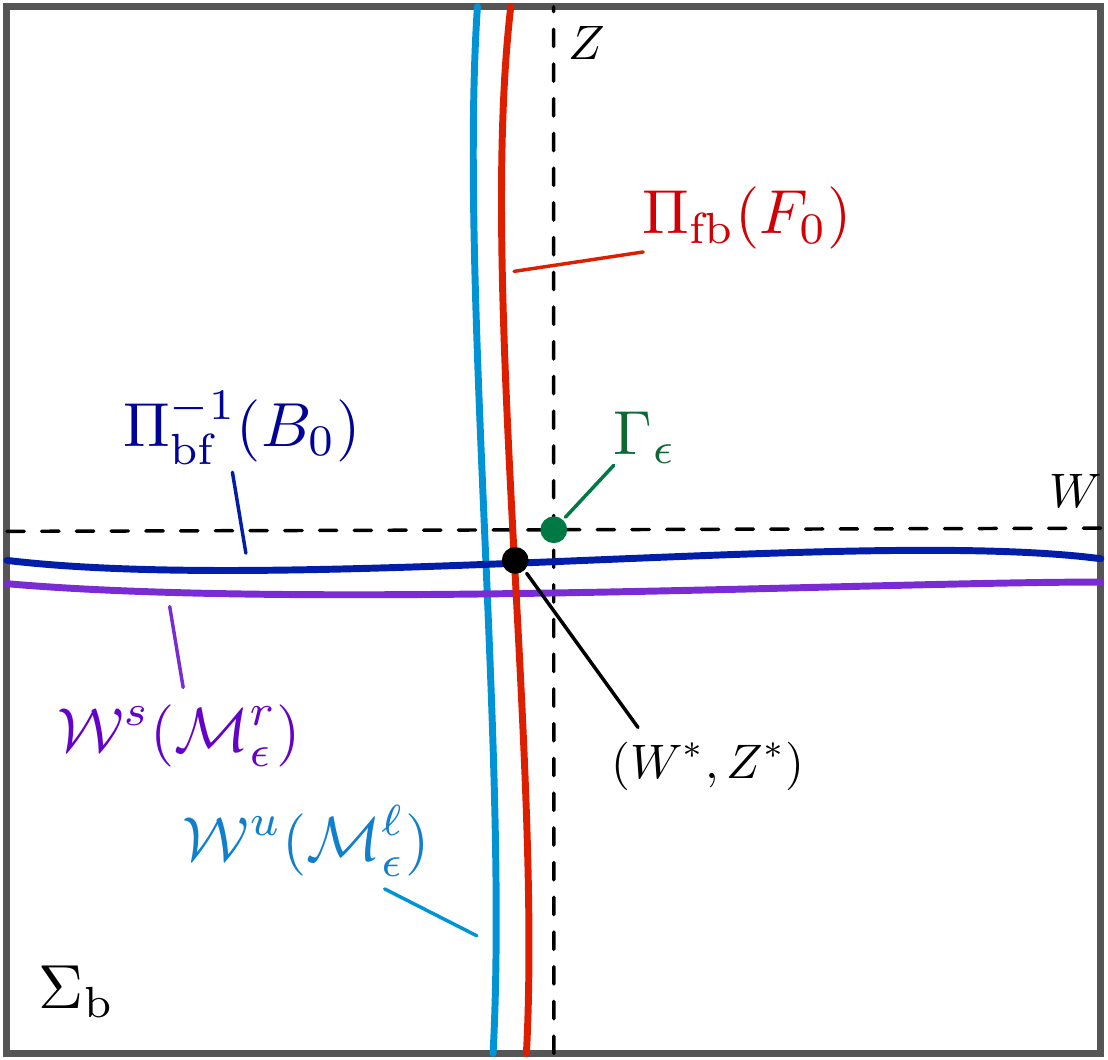}
\end{subfigure}
\hspace{.025\textwidth}
\caption{Shown is the setup in the section $\Sigma_\mathrm{f}$ (left) and $\Sigma_\mathrm{b}$ (right) in the proof of Proposition~\ref{prop_poinvmflds}.}
\label{fig:sigmafb}
\end{figure}

Within the section $\Sigma_\mathrm{b}$, we proceed similarly and choose coordinates $\{(W,Z):|W|,|Z|\leq \Delta\}$ so that $\Gamma_\eps(c)\cap \Sigma_\mathrm{f}=(0,0)$ and the manifolds $\mathcal{W}^\mathrm{u}(\mathcal{M}^\ell_\eps)$ and $\mathcal{W}^\mathrm{s}(\mathcal{M}^r_\eps)$ are given by curves $W=\mathcal{O}(e^{-\eta/\eps})$ and $Z=\mathcal{O}(e^{-\eta/\eps})$, respectively; see Figure~\ref{fig:sigmafb} (right panel).

We wish to obtain estimates on the map $\Pi_p:(X,Y)\mapsto \Pi_p(X,Y)$. We consider a line $B_0:=\{(X,Y_0):|X|\leq\Delta\}\subset \Sigma_\mathrm{f}$. We now consider the image $\Pi^{-1}_{bf}(B_0)$ in the section $\Sigma_\mathrm{b}$. Since $B_0$ transversely intersects $\mathcal{W}^\mathrm{u}(\mathcal{M}^r_\eps)$, in backwards time, by the exchange lemma $\Pi^{-1}_{bf}(B_0)$ is aligned exponentially close to $\mathcal{W}^\mathrm{s}(\mathcal{M}^r_\eps)$ which is given by a curve $Z=\mathcal{O}(e^{-\eta/\eps})$. We next consider a line $F_0:=\{(X_0,Y):|Y|\leq\Delta\}\subset \Sigma_\mathrm{f}$,and we consider the image $\Pi_{fb}(F_0)$ in the section $\Sigma_\mathrm{b}$. Again by the exchange lemma, since $F_0$ transversely intersects $\mathcal{W}^\mathrm{s}(\mathcal{M}^\ell_\eps)$, in $\Sigma_\mathrm{b}$ $\Pi_{fb}(F_0)$ is aligned exponentially close to $\mathcal{W}^\mathrm{u}(\mathcal{M}^\ell_\eps)$. 

Therefore the two curves $\Pi^{-1}_{bf}(B_0)$ and $\Pi_{fb}(F_0)$ intersect transversely at a point $(W,Z)=(W^*,Z^*) = \mathcal{O}(e^{-\eta/\eps})$. We now reverse the maps and, again by the exchange lemma, deduce that $ \Pi_{bf}(W^*,Z^*) = (X^*,Y_0)$ where $X^*= \mathcal{O}(e^{-\eta/\eps})$, and similarly $ \Pi^{-1}_{fb}(W^*,Z^*) = (X_0,Y^*)$ where $Y^*= \mathcal{O}(e^{-\eta/\eps})$; see Figure~\ref{fig:sigmafb} (left panel).

Therefore, given $(X_0,Y_0)$, there exists $X^*,Y^*$ such that $\Pi_p(X_0,Y^*) = (X^*,Y_0)$, and $X^*,Y^*$ and their derivatives with respect to $X_0,Y_0$ are $\mathcal{O}(e^{-\eta/\eps})$; also, by construction, the point $(X,Y)=(0,0)$ is a fixed point of $\Pi_p$. On the domain of definition of $\Pi$, it follows that the derivative $D\Pi_p$ is uniformly expanding on the cone $|X|\leq \mu |Y|$, and similarly $D\Pi_p^{-1}$ is expanding on the cone $|Y|\leq \mu |X|$ for any sufficiently small fixed $\mu>0$. 

Therefore, the splitting of the tangent space at the fixed point $(X,Y)=(0,0)$ along the $X$ and $Y$ coordinate directions is an almost hyperbolic splitting in the sense of~\cite[\S3]{newhousepalis}. Hence by~\cite[Theorem 3.1]{newhousepalis}, the fixed point $(X,Y)=(0,0)$ corresponding to $\Gamma_\eps(c)$ is a hyperbolic set and therefore has invariant stable and unstable manifolds $\mathcal{W}^\mathrm{u}(\Gamma_\eps(c)), \mathcal{W}^\mathrm{s}(\Gamma_\eps(c))$ given as graphs over the subspaces $X=0$ and $Y=0$, respectively~\cite{hirsch1970stable}. By evolving $\mathcal{W}^\mathrm{u}(\Gamma_\eps(c)), \mathcal{W}^\mathrm{s}(\Gamma_\eps(c))$ forward (resp. backward) under the flow of~\eqref{eq_twode} and again using the exchange lemma, it is clear that within the Fenichel neighborhoods $U_\ell, U_r$, the stable/unstable manifolds $\mathcal{W}^\mathrm{u}(\Gamma_\eps(c)), \mathcal{W}^\mathrm{s}(\Gamma_\eps(c))$ must align $C^1$ $\mathcal{O}(e^{-\eta/\eps})$-close to the stable/unstable manifolds of the slow manifolds $\mathcal{M}^\ell_\eps, \mathcal{M}^r_\eps$ as claimed.

For the nonhyperbolic case, the same argument can be applied by replacing the manifold $\mathcal{W}^\mathrm{s}(\mathcal{M}^\ell_\eps)$ with $\mathcal{W}^\mathrm{s}(\mathcal{M}^{\ell,+}_\eps)$ in the section $\Sigma_\mathrm{f}$, and by similarly replacing $\mathcal{W}^\mathrm{s}(\mathcal{M}^r_\eps)$ with $\mathcal{W}^\mathrm{s}(\mathcal{M}^{r,+}_\eps)$ in $\Sigma_\mathrm{b}$; see Figure~\ref{fig:PO_nonhyp}.

%
%
%
\end{proof}

\subsection{Proof of Theorem~\ref{thm_periodicexistence}}\label{sec_periodicorbitthmproof}
In this section, we briefly conclude the proof of Theorem~\ref{thm_periodicexistence}.
\begin{proof}[Proof of Theorem~\ref{thm_periodicexistence}]
The existence of a continuous family of periodic orbits in the hyperbolic/nonhyperbolic regimes follows from known results in~\cite{STR}, combined with Proposition~\ref{prop_nh_periodicorbits} above; see~\S\ref{sec_periodicorbits}. It remains to consider the final statements regarding the period $L(c;\eps)$ of the solutions.

First, we comment on the monotonicity of the period. We claim that the derivative $\partial_cL(c;\epsilon)$ has fixed sign. Suppose $\partial_cL(c;\epsilon)=0$ for some value of $c$. Let $\phi_\mathrm{per}(\xi;c,\epsilon)=(u_\mathrm{per},w_\mathrm{per})(\xi;c,\epsilon)$ denote the family of traveling periodic orbits. Each element of this family is a stationary solution to the PDE~\eqref{eq_pde} in the appropriate comoving frame $\xi=x-ct$
\begin{align}\begin{split}
D\partial_{\xi\xi}\phi_\mathrm{per} + c\partial_\xi \phi_\mathrm{per} + F(\phi_\mathrm{per}) = 0,
\end{split}
\end{align}
where
\begin{align}\begin{split}
D = \begin{pmatrix} 1 & 0\\0& 0 \end{pmatrix}, \quad F(u,w) = \begin{pmatrix}  f(u)-w\\\epsilon(u-\gamma w-a) \end{pmatrix}
\end{split}
\end{align}
along with periodic boundary conditions $\phi_\mathrm{per}(\xi+L;c,\epsilon) = \phi_\mathrm{per}(\xi;c,\epsilon)$, for all $\xi \in \mathbb{R}$. Equivalently, by rescaling the traveling wave variable $\xi = L\theta$, $\phi_\mathrm{per}$ satisfies the equation
\begin{align}\begin{split}\label{eq_rescaledperiodicpde}
\frac{4\pi^2}{L^2}D\partial_{\theta\theta}\phi_\mathrm{per} + \frac{2\pi}{L}c\partial_\theta \phi_\mathrm{per} + F(\phi_\mathrm{per}) = 0
\end{split}
\end{align}
with the fixed periodic boundary conditions $\phi_\mathrm{per}(\theta+2\pi;c,\epsilon) = \phi_\mathrm{per}(\theta;c,\epsilon)$, for all $\theta \in \mathbb{R}$. The linearization of~\eqref{eq_rescaledperiodicpde} about $\phi_\mathrm{per}$ is given by the operator
\begin{align}\begin{split}
\mathcal{L}_\mathrm{per}:=\frac{4\pi^2}{L^2}D\partial_{\theta\theta} + \frac{2\pi}{L}c\partial_\theta + F'(\phi_\mathrm{per})
\end{split}
\end{align}
and due to translation invariance $\partial_\theta\phi_\mathrm{per}$ lies in the kernel of $\mathcal{L}_\mathrm{per}$.

By differentiating~\eqref{eq_rescaledperiodicpde} with respect to $c$, if $\partial_cL(c;\epsilon)=0$, then the derivative $\partial_c\phi_\mathrm{per}$ satisfies the equation
\begin{align}\begin{split}
\frac{4\pi^2}{L^2}D\partial_{\theta\theta}(\partial_c\phi_\mathrm{per}) + \frac{2\pi}{L}c\partial_\theta (\partial_c\phi_\mathrm{per})+ F'(\phi_\mathrm{per})\partial_c\phi_\mathrm{per} = -\frac{2\pi}{L}\partial_\theta \phi_\mathrm{per},
\end{split}
\end{align}
that is, $\partial_c\phi_\mathrm{per}$ lies in the generalized kernel of $\mathcal{L}_\mathrm{per}$. From this we deduce that $\phi_\mathrm{per}$ has a Floquet multiplier $1$ of algebraic multiplicity $2$, which contradicts the hyperbolicity of $\phi_\mathrm{per}$ obtained in~\S\ref{sec_poinvariantmanifolds}. We deduce that the period $L(c;\epsilon)$ is strictly monotone in $c$; the fact that $L$ is increasing follows from the expansions below.

To obtain the asymptotics of $L(c;\eps)$ in $\eps$, we will use proximity of the periodic orbits $\Gamma_\eps(c)$ to the singular solutions $\Gamma_0(c)$. We first consider the hyperbolic regime and fix $0<c<c^*(a)$. Then $\Gamma_\eps(c)$ follows an orbit in phase space which remains $\mathcal{O}(\eps)$-close to the singular orbit $\Gamma_0(c)$ which traverses the front $\phi_\mathrm{f}$, the piece of the slow manifold $\mathcal{M}^\ell_0$ between $w=w_\mathrm{f}$ and $w=w_\mathrm{b}$, the front $\phi_\mathrm{b}$, and finally the piece of the slow manifold $\mathcal{M}^r_0$ from $w=w_\mathrm{b}$ and $w=w_\mathrm{f}$. We proceed by estimating the time spent along each portion. 

Along the slow manifold $\mathcal{M}^\ell_0$, $\Gamma_\eps(c)$ is $\mathcal{O}(\eps)$-close to the curve $w=f(u)$ and the $w$-dynamics are given by
\begin{align}
\dot{w} &= \frac{\eps}{c}(u-\gamma w-a)\\
&=\frac{\eps}{c}(f^{-1}(w)-\gamma w-a+\mathcal{O}(\eps)),
\end{align}
where $f^{-1}(w)$ is interpreted as the smallest of the three roots of $f(u)=w$. We determine the time spent along this portion as
\begin{align}\begin{split}
L^\ell_\eps(c) &= \int_{w_\mathrm{f}}^{w_\mathrm{b}}\frac{c}{\eps(f^{-1}(w)-\gamma w-a+\mathcal{O}(\eps))}dw\\
&=\frac{c}{\eps}\int_{w_\mathrm{f}}^{w_\mathrm{b}}\frac{1+\mathcal{O}(\eps))}{(f^{-1}(w)-\gamma w-a)}dw.
\end{split}
\end{align}
We define
\begin{align}\begin{split}
L^\ell_0(c)&=c\int_{w_\mathrm{f}}^{w_\mathrm{b}}\frac{1}{(f^{-1}(w)-\gamma w-a)}dw
\end{split}
\end{align}
and note that $L^\ell_0(c)>0$ and $L^\ell_0(c)$ is an increasing function of $c$ due to the fact that $w_\mathrm{f}$ and $w_\mathrm{b}$ decrease and increase, respectively, as $c$ increases. Further, we have that
\begin{align}\begin{split}
\epsilon L^\ell_\eps(c) &= L^\ell_0(c)+\mathcal{O}(\epsilon).
\end{split}
\end{align}
Along $\mathcal{M}^r_0$, we can proceed similarly and define
\begin{align}\begin{split}
L^r_0(c)&=c\int_{w_\mathrm{b}}^{w_\mathrm{f}}\frac{1}{(f^{-1}(w)-\gamma w-a)}dw,
\end{split}
\end{align}
where now $f^{-1}(w)$ refers to the largest of the three roots of $u=f(w)$. Again we have that $L^r_0(c)>0$ and $L^r_0(c)$ is an increasing function of $c$. Finally we obtain that the time spent near $\mathcal{M}^r_0$ is given by $L^r_\eps(c)$ where
\begin{align}\begin{split}
\epsilon L^r_\eps(c) &= L^r_0(c)+\mathcal{O}(\epsilon).
\end{split}
\end{align}

Finally, the full period $L(c,\eps)$ is obtained as the sum $L^\ell_\eps(c)+L^r_\eps(c)$ spent within a $\mathcal{O}(\eps)$ neighborhood of the slow manifolds, plus the jump time spent traveling between the slow manifolds along the fronts $\phi_\mathrm{f}$ and $\phi_\mathrm{b}$. To estimate these jump times, we note that each jump consists of a finite time segment between small Fenichel neighborhoods of $\mathcal{M}^\ell_0$ and $\mathcal{M}^r_0$, as well as transitions from the boundaries of the Fenichel neighborhoods to $\mathcal{O}(\epsilon)$ neighborhoods of $\mathcal{M}^\ell_0$ and $\mathcal{M}^r_0$. These latter transition times can be estimated as $\mathcal{O}(\log \eps)$ using corner-type estimates (see, for instance,~\cite[Theorem~4.5]{carter2016stability}). Hence we obtain
\begin{align}\begin{split}
L(c,\epsilon) &= L^\ell_\eps(c)+L^r_\eps(c)+\mathcal{O}(\log \eps)\\
&=\eps^{-1}L_0(c)+\mathcal{O}(\log \eps)
\end{split}
\end{align}
where $L_0(c)= L^\ell_0(c)+L^r_0(c)$.

We now consider the nonhyperbolic regime and fix $c>c^*(a)$. We proceed similarly by considering the flow along the slow manifolds, noting that now $\Gamma_\eps(c)$ traverses near the folds along $\mathcal{M}^{\ell,+}_\eps$ and $\mathcal{M}^{r,+}_\eps$. By the $C^1-\mathcal{O}(\eps^{1/3})$-proximity of the manifolds $\mathcal{M}^{\ell,+}_\eps$ and $\mathcal{M}^{r,+}_\eps$ to their singular $\epsilon=0$ counterparts, and the fact that $\Gamma_\eps(c)$ remains within $\mathcal{O}(\eps^{2/3})$ of the singular solution $\Gamma_0(c)$, we can estimate the time spent along each of  $\mathcal{M}^{\ell,+}_\eps$ and $\mathcal{M}^{r,+}_\eps$ between $w=w_\ell$ and $w=w_r$. Near $\mathcal{M}^{\ell,+}_\eps$, the time spent is approximated by
\begin{align}\begin{split}
L^\ell_\eps(c) &= \int_{w_\ell}^{w_r}\frac{c+\mathcal{O}(\eps^{2/3})}{\eps(f^{-1}(w)-\gamma w-a+\mathcal{O}(\eps^{1/3}))}dw\\
&=\frac{c}{\eps}\int_{w_\ell}^{w_r}\frac{1+\mathcal{O}(\eps^{1/3})}{(f^{-1}(w)-\gamma w-a)}dw.
\end{split}
\end{align}
By defining
\begin{align}\begin{split}
L^\ell_0(c)&=c\int_{w_\ell}^{w_r}\frac{1}{(f^{-1}(w)-\gamma w-a)}dw,
\end{split}
\end{align}
we note $L^\ell_0(c)>0$ and $L^\ell_0(c)$ is an increasing function of $c$, and we have that 
\begin{align}\begin{split}
\epsilon L^\ell_\eps(c) &= L^\ell_0(c)+\mathcal{O}(\epsilon^{1/3}).
\end{split}
\end{align}
We define $L^r_\eps(c), L^r_0(c)$ analogously and set $L_0(c) = L^\ell_0(c)+L^r_0(c)$; by similar arguments as in the hyperbolic case above, we obtain
\begin{align}\begin{split}
L(c,\epsilon) &= L^\ell_\eps(c)+L^r_\eps(c)+\mathcal{O}(\log \eps)\\
&=\eps^{-1}L_0(c)+\mathcal{O}(\epsilon^{-2/3}).
\end{split}
\end{align}

\end{proof}

\subsection{The pulled case: proof of Theorem~\ref{thm_pulledexistence}}\label{sec_pulledproof}
In the pulled front case, we aim to find an intersection of the manifolds $\mathcal{W}^\mathrm{u}(\Gamma_\eps(c))$ and $\mathcal{W}^\mathrm{s}(p)$ coinciding with a double root in the linearization about the equilibrium $p$. From the layer analysis in~\S\ref{sec_layer}, for $\eps=0$ this double root occurs at $c=c_\mathrm{lin}(a)$. In the full three-dimensional system for $\eps=0$, there is an additional zero eigenvalue present due to the third equation $\dot{w}=0$. This eigenvalue perturbs and becomes negative for $0<\eps \ll1$, so that the stable manifold $\mathcal{W}^\mathrm{s}(p)$ is three dimensional; see also~\S\ref{sec_persistenceinvtmflds}.

Hence for $\eps=0$, we define the strong stable manifold $\mathcal{W}^\mathrm{ss}(p)$ to be that which corresponds to the two strictly negative eigenvalues; this manifold lies in the plane $\{w=0\}$. For $0<\eps \ll 1$, this manifold perturbs to a two-dimensional manifold which is $\mathcal{O}(\eps)$-close to its $\eps=0$ counterpart. Further the eigenvalues perturb and may split; however, by solving for $c=c_\mathrm{lin}(a)+\mathcal{O}(\eps)=:c_\mathrm{lin}(a,\eps)$, we can ensure that the double eigenvalue persists for $0<\eps \ll 1$.

We therefore have the following.

\begin{proof}[Proof of Theorem~\ref{thm_pulledexistence}]
The manifolds $\mathcal{W}^\mathrm{ss}(p)$ and $\mathcal{W}^\mathrm{u}(\mathcal{M}^r_0)$ intersect transversely for $\eps=0$. This transverse intersection persists for $0<\eps\ll 1$, with the double eigenvalue occurring for $c=c_\mathrm{lin}(a,\eps)$. By Proposition~\ref{prop_poinvmflds}, the unstable manifold $\mathcal{W}^\mathrm{u}(\Gamma(c_\mathrm{p}(a,\eps)))$ is $\mathcal{O}(e^{-\eta/\eps})$-close to $\mathcal{W}^\mathrm{u}(\mathcal{M}^r_\eps)$ and hence also transversely intersects $\mathcal{W}^\mathrm{ss}(p)$ along a solution orbit $F^r_\eps(a)$.

A similar argument can be applied to the manifolds $\mathcal{W}^\mathrm{u}(\mathcal{M}^\ell_\eps)$ and $\mathcal{W}^\mathrm{ss}(p)$, which results in a second distinct heteroclinic solution $F^\ell_\eps(a)$.
\end{proof}

\subsection{The pushed case: proof of Theorem~\ref{thm_pushedexistence}}\label{sec_pushedproof}
We recall from~\S\ref{sec_layer} that for $0<a<1/3$, the front $\phi^r_\mathrm{p}(a)$ is constructed by identifying the unique solution which decays to $p$ with the strongest rate for $\eps=0$. Again, as outlined above, when $\eps>0$, the equilibrium $p$ is completely stable and has a three dimensional stable manifold $\mathcal{W}^\mathrm{s}(p)$. In the pushed regime, there are three distinct real eigenvalues, and hence for $0<\eps \ll 1$, within this manifold is a one-dimensional `super' strong stable manifold $\mathcal{W}^\mathrm{sss}(p)$ which is $\mathcal{O}(\eps)$-close to the front $\phi^r_\mathrm{p}(a)$.

To construct invasion fronts in the pushed case, we aim to find an intersection of the manifolds $\mathcal{W}^\mathrm{u}(\Gamma_\eps(c))$ and $\mathcal{W}^\mathrm{s}(p)$ along the strong stable manifold $\mathcal{W}^\mathrm{sss}(p)$.

First a lemma.
\begin{Lemma}\label{lem_cmel}
Fix $0<a<1/3$ and let $\tilde{c} = c-c_\mathrm{p}(a)$. For all sufficiently small $\eps>0$ and $\tilde{c}$, the distance function defining the separation of the manifolds $\mathcal{W}^\mathrm{u}(\mathcal{M}^r_\eps)$ and $\mathcal{W}^\mathrm{sss}(p)$ is given by
\begin{align}
D(\tilde{c},\eps) = M_\mathrm{f}^c\tilde{c}+M_\mathrm{f}^\eps \eps+ \mathcal{O}(\eps^2+\tilde{c}^2),
\end{align}
where $M_\mathrm{f}^c,M_\mathrm{f}^\eps<0$ are given by
\begin{align}
M_\mathrm{f}^c&=-\int^{\infty}_{-\infty} e^{c_\mathrm{p}(a)\xi}v_\mathrm{p}(\xi)^2\,d\xi\\
M_\mathrm{f}^\eps&=\frac{1}{c_\mathrm{p}(a)}\int^{\infty}_{-\infty} \int_{-\infty}^\xi e^{c_\mathrm{p}(a)\zeta}v_\mathrm{p}(\zeta)(u_\mathrm{p}(\xi)-a)\,d\zeta,
\end{align}
where $u_\mathrm{p}(\xi),v_\mathrm{p}(\xi)$ are given by~\eqref{eq_pushedexplicit}.
\end{Lemma}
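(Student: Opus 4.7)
The plan is to carry out a Melnikov-type computation adapted to the pushed front. Since at $\eps=0$, $c=c_\mathrm{p}(a)$ the heteroclinic $\phi^r_\mathrm{p}$ lies in the invariant plane $\{w=0\}$ and coincides with both $\mathcal{W}^\mathrm{u}(\mathcal{M}^r_0)$ and $\mathcal{W}^\mathrm{sss}(p)$, I would define $D(\tilde c,\eps)$ as the signed transverse distance between these two manifolds, measured in a two-dimensional section $\Sigma$ placed across $\phi^r_\mathrm{p}$ at $\xi=0$. By Fenichel theory for $\mathcal{W}^\mathrm{u}(\mathcal{M}^r_\eps)$ and standard invariant-manifold theory for the one-dimensional super-strong stable manifold $\mathcal{W}^\mathrm{sss}(p)$ of the hyperbolic equilibrium $p$ (which becomes cleanly defined once $\eps>0$ separates the stable eigenvalues), both objects depend smoothly on $(\tilde c,\eps)$ near the origin, so $D$ is smooth with $D(0,0)=0$, and the Taylor expansion claimed in the lemma follows once we compute $\partial_{\tilde c}D(0,0)$ and $\partial_\eps D(0,0)$.

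To obtain these derivatives I would use the standard variation-of-parameters calculation expressing each as the integral along $\phi^r_\mathrm{p}$ of a suitable adjoint solution paired with the parameter derivative of the vector field. Linearizing~\eqref{eq_twode} at $\phi^r_\mathrm{p}$ for $c=c_\mathrm{p}(a)$, $\eps=0$ yields the variational matrix
\[
A(\xi)=\begin{pmatrix} 0 & 1 & 0\\ -f'(u_\mathrm{p}(\xi)) & -c_\mathrm{p}(a) & 1\\ 0 & 0 & 0\end{pmatrix},
\]
and a direct check using $\ddot u_\mathrm{p}+c_\mathrm{p}(a)\dot u_\mathrm{p}+f(u_\mathrm{p})=0$ shows that the adjoint equation $\dot\psi=-A(\xi)^T\psi$ is solved by
\[
\psi(\xi)=\Bigl(-e^{c_\mathrm{p}(a)\xi}\dot v_\mathrm{p}(\xi),\; e^{c_\mathrm{p}(a)\xi}v_\mathrm{p}(\xi),\; -\int_{-\infty}^\xi e^{c_\mathrm{p}(a)\zeta}v_\mathrm{p}(\zeta)\,d\zeta\Bigr).
\]
Substituting $\partial_c F=(0,-v_\mathrm{p},0)$ and $\partial_\eps F=(0,0,-(u_\mathrm{p}-a)/c_\mathrm{p}(a))$ (both evaluated along $\phi^r_\mathrm{p}$ at $\eps=0$) into $\int_{\R}\psi(\xi)^T\partial_\mu F\,d\xi$ produces exactly the claimed formulas, since only $\psi_2$ pairs with $\partial_c F$ and only $\psi_3$ pairs with $\partial_\eps F$.

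Convergence and signs I would verify using~\eqref{eq_pushedexplicit}. A direct calculation gives the two stable eigenvalues of the linearization at $p$ in the plane $\{w=0\}$ as $\nu_+=-\sqrt 2\,a$ and $\nu_-=-(1-a)/\sqrt 2$, with $\nu_-<\nu_+<0$ precisely on $0<a<1/3$. At $+\infty$ the integrand of $M_\mathrm{f}^c$ decays like $e^{(c_\mathrm{p}(a)+2\nu_-)\xi}=e^{(3a-1)\xi/\sqrt 2}$, which is integrable exactly on $a<1/3$; the inner integral in $M_\mathrm{f}^\eps$ grows like $e^{-\nu_+\xi}$ at $+\infty$, but $(u_\mathrm{p}-a)\sim e^{\nu_-\xi}$ there, so the outer integrand decays at the same rate $(3a-1)/\sqrt 2$. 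Convergence at $-\infty$ is immediate from exponential decay of $v_\mathrm{p}$. For signs, the explicit expression~\eqref{eq_pushedexplicit} gives $v_\mathrm{p}(\xi)<0$ and $u_\mathrm{p}(\xi)-a>0$ throughout $\R$; thus $M_\mathrm{f}^c<0$ is immediate, the inner integral $\int_{-\infty}^\xi e^{c_\mathrm{p}(a)\zeta}v_\mathrm{p}(\zeta)\,d\zeta$ is strictly negative, and multiplying the strictly negative outer integrand by the positive factor $c_\mathrm{p}(a)^{-1}$ yields $M_\mathrm{f}^\eps<0$.

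The principal technical subtlety, which I view as the main obstacle, is that the adjoint $\psi$ exhibited above is \emph{not} bounded on $\R$: both $\psi_2$ and $\psi_3$ grow at $+\infty$ at rate $-\nu_+=\sqrt 2\,a>0$, a reflection of the defining property of the pushed front that its tangent $v_\mathrm{p}$ decays at the stronger rate $\nu_-$ rather than the generic rate $\nu_+$. Consequently the Melnikov reduction cannot be justified in the unweighted setting; the standard remedy is to work in an exponentially weighted space with weight $e^{\eta\xi}$ for some $\eta\in(0,-\nu_+)$, in which the essential spectrum of the variational operator shifts across the origin, $\psi$ becomes the unique (up to scalar) element of the weighted adjoint kernel, and the splitting of $\mathcal{W}^\mathrm{u}(\mathcal{M}^r_\eps)$ from $\mathcal{W}^\mathrm{sss}(p)$ along $\phi^r_\mathrm{p}$ realizes a transverse codimension-one condition to which the implicit function theorem applies. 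Once this weighted-space framework is in place, the formal calculation above produces the asserted first-order expansion with Taylor remainder $\mathcal{O}(\tilde c^2+\eps^2)$.
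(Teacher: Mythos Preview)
Your proposal is correct and follows essentially the same Melnikov computation as the paper: the adjoint solution you exhibit coincides with the paper's $\Psi^1_\mathrm{p}$, and pairing it with $\partial_c F$ and $\partial_\eps F$ yields the stated integrals. The only cosmetic difference is that the paper first computes $M_\mathrm{f}^c$ in the planar layer system before passing to the full three-dimensional adjoint for $M_\mathrm{f}^\eps$, whereas you work in three dimensions throughout; your added verification of integrability at $+\infty$ and the explicit discussion of the weighted-space justification are details the paper leaves implicit (it simply selects adjoint solutions by the growth bound $\nu\le -\nu_+$).
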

\begin{proof}
To compute this distance function, we apply Melnikov theory.

We consider the planar system~\eqref{eq_layer} for $w=0$
\begin{align}\label{eq_layer0}
\begin{split}
\dot{u}&=v\\
\dot{v} &= -cv-f(u),
\end{split}
\end{align}
As stated in~\S\ref{sec_layer}, for $c=c_\mathrm{p}(a)$, this system possesses a heteroclinic connection $\phi^r_\mathrm{p}(\xi)=(u_\mathrm{p}(\xi),v_\mathrm{p}(\xi))$ between the critical points $(u,v)=(1,0)=p_3(0)$ and $(u,v)=(a,0)=p_2(0)=p$ that lies in the intersection of $\mathcal{W}^\mathrm{u}(p_3(0))$ and $\mathcal{W}^\mathrm{sss}(p)$. We now compute the distance between $\mathcal{W}^\mathrm{u}(p_3(0))$ and $\mathcal{W}^\mathrm{sss}(p)$ to first order in $c-c_\mathrm{p}(a)$. We consider the adjoint equation of the linearization of~\eqref{eq_layer0} about the front $\phi_r$ given by
\begin{align}\label{eq_layer0adj}
\dot{\psi} = \begin{pmatrix}
0&f'(u_\mathrm{p}(\xi))\\[1em]
-1&c_\mathrm{p}(a)
\end{pmatrix}\psi. 
\end{align}
The space of solutions which grow as $\xi\to\infty$ with exponential rate at most
\begin{align}
\begin{split}
\nu\leq -\nu^+ &= \frac{c_\mathrm{p}(a)-\sqrt{c_\mathrm{p}(a)^2+4(a^2-a)}}{2} 
\end{split}
\end{align}
is one-dimensional and spanned by 
\begin{align}
\begin{split}
\psi_\mathrm{p}(\xi):= e^{c_\mathrm{p}(a)\xi}\begin{pmatrix}-\dot{v}_\mathrm{p}(\xi)\\ \dot{u}_\mathrm{p}(\xi)\end{pmatrix}\\
=e^{c_\mathrm{p}(a)\xi}\begin{pmatrix}-\dot{v}_\mathrm{p}(\xi)\\ v_\mathrm{p}(\xi)\end{pmatrix}
\end{split}
\end{align}
Let $F_0$ denote the right hand side of~\eqref{eq_layer0}, and define the Melnikov integral
\begin{align*}
M_\mathrm{f}^c&= \int^{\infty}_{-\infty} D_cF_0(\phi^r_\mathrm{p}(\xi))\cdot \psi_\mathrm{p}(\xi)\,d\xi\\
&=-\int^{\infty}_{-\infty} e^{c_\mathrm{p}(a)\xi}v_\mathrm{p}(\xi)^2\,d\xi\\
&<0.
\end{align*}
This integral measures the distance between $\mathcal{W}^\mathrm{u}(p_3(0))$ and $\mathcal{W}^\mathrm{sss}(p)$ to first order in $c-c_\mathrm{p}(a)$.  

We next compute the distance between $\mathcal{W}^\mathrm{u}(p_3(0))$ and $\mathcal{W}^\mathrm{sss}(p)$ to first order in $\epsilon$. We now consider the adjoint equation of the linearization of the full system~\eqref{eq_twode} about the front $\phi_r$ at $\epsilon=0$, which is given by
\begin{align}\label{eq_linadj}
\dot{\Psi} = \begin{pmatrix}
0&f'(u_\mathrm{p}(\xi))&0\\[1em]
-1&c_\mathrm{p}(a)&0\\
0&-1&0
\end{pmatrix}\Psi. 
\end{align}
The space of solutions which grow as $\xi\to\infty$ with exponential rate at most $-\nu^+$ is two dimensional and spanned by 
\begin{align}
\begin{split}
\Psi^1_\mathrm{p}(\xi):= \begin{pmatrix}-e^{c_\mathrm{p}(a)\xi}\dot{v}_\mathrm{p}(\xi)\\ e^{c_\mathrm{p}(a)\xi}\dot{u}_\mathrm{p}(\xi)\\ -\int_{-\infty}^\xi e^{c_\mathrm{p}(a)\zeta}\dot{u}_\mathrm{p}(\zeta)\end{pmatrix}, \qquad \Psi^2_\mathrm{p}(\xi):= \begin{pmatrix}0\\0\\1\end{pmatrix},
\end{split}
\end{align}
and $\Psi_\mathrm{p}^1$ is the unique such solution which converges to zero as $t\to -\infty$. We denote by $F_1$ denote the right hand side of~\eqref{eq_twode}, and we define the Melnikov integral
\begin{align*}
M_\mathrm{f}^\eps&= \int^{\infty}_{-\infty} D_\eps F_1(\phi^r_\mathrm{p}(\xi))\cdot \Psi^1_\mathrm{p}(\xi)\,d\xi\\
&=\frac{1}{c_\mathrm{p}(a)}\int^{\infty}_{-\infty} \int_{-\infty}^\xi e^{c_\mathrm{p}(a)\zeta}v_\mathrm{p}(\zeta)(u_\mathrm{p}(\xi)-a)\,d\zeta \,d\xi\\
&<0,
\end{align*}
which measures the distance between $\mathcal{W}^\mathrm{u}(p_3(0))$ and $\mathcal{W}^\mathrm{sss}(p)$ to first order in $\epsilon$.  

Setting $\tilde{c} = c-c_\mathrm{p}(a)$, we are now able to write the distance function
\begin{align}
D(\tilde{c},\eps) = M_\mathrm{f}^c\tilde{c}+M_\mathrm{f}^\eps \eps+ \mathcal{O}(\eps^2+\tilde{c}^2),
\end{align}
which defines the separation of $\mathcal{W}^\mathrm{sss}(p)$ and $\mathcal{W}^\mathrm{u}(\mathcal{M}^r_\eps)$ for $0<\eps\ll1$ and $\tilde{c}$ sufficiently small.
\end{proof}

We can now complete the proof of Theorem~\ref{thm_pushedexistence}.
\begin{proof}[Proof of Theorem~\ref{thm_pushedexistence}]
The manifolds $\mathcal{W}^\mathrm{sss}(p)$ and $\mathcal{W}^\mathrm{u}(\mathcal{M}^r_0)$ intersect for $\eps=0$. These manifolds perturb for $0<\eps\ll 1$. By Proposition~\ref{prop_poinvmflds}, the unstable manifold $\mathcal{W}^\mathrm{u}(\Gamma(c))$ is $\mathcal{O}(e^{-\eta/\eps})$-close to $\mathcal{W}^\mathrm{u}(\mathcal{M}^r_\eps)$. We show that by adjusting $c$, it is possible to find an intersection of $\mathcal{W}^\mathrm{u}(\Gamma(c))$ and $\mathcal{W}^\mathrm{sss}(p)$.

In Lemma~\ref{lem_cmel}, the distance between $\mathcal{W}^\mathrm{sss}(p)$ and $\mathcal{W}^\mathrm{u}(\mathcal{M}^r_\eps)$ was computed as
\begin{align}
D(\tilde{c},\eps) = M_\mathrm{f}^c\tilde{c}+M_\mathrm{f}^\eps \eps+ \mathcal{O}(\eps^2+\tilde{c}^2).
\end{align}
Since $\mathcal{W}^\mathrm{u}(\Gamma(c_\mathrm{p}(a,\eps)))$ is $\mathcal{O}(e^{-\eta/\eps})$-close to $\mathcal{W}^\mathrm{u}(\mathcal{M}^r_\eps)$, the separation between $\mathcal{W}^\mathrm{u}(\Gamma(c))$ and $\mathcal{W}^\mathrm{sss}(p)$ is given by the modified distance function
\begin{align}
\hat{D}(\tilde{c},\eps) = M_\mathrm{f}^c\tilde{c}+M_\mathrm{f}^\eps \eps+ \mathcal{O}(\eps^2+\tilde{c}^2),
\end{align}
where the exponentially small terms have been absorbed into the $\mathcal{O}(\eps^2)$ term. We can solve for an intersection by setting $\hat{D}(\tilde{c},\eps)=0$, which occurs when
\begin{align}\label{eq_cpexpansion}
c =  c_\mathrm{p}(a,\eps):=c_\mathrm{p}(a)-\frac{M_\mathrm{f}^\eps}{M_\mathrm{f}^c}\epsilon+\mathcal{O}(\eps^2),
\end{align}
where the coefficient $\frac{M_\mathrm{f}^\eps}{M_\mathrm{f}^c}>0$. This intersection occurs along the desired solution orbit $P^r_\eps(a)$.

\end{proof}

\section{Direct simulations --- corroboration and more phenomena}\label{sec_numerics} 

In this section, we present results of direct numerical simulations which visualize the predicted pulled and pushed fronts of Theorems~\ref{thm_pulledexistence} and~\ref{thm_pushedexistence}, as well as some additional interesting phenomena in the region $a=\mathcal{O}(\eps)$. All simulations were performed for $\eps=0.001$ and $\gamma=0$, using a Runge-Kutta scheme initialized with the homogeneous state $(u,w)=(a,0)$ and a small amplitude random perturbation near the left edge of the spatial domain.

\paragraph{Pulled fronts.}
Our results in Theorem~\ref{thm_pulledexistence} predict the existence of `left' and `right' pulled fronts $F^\ell_\eps(a), F^r_\eps(a)$ for each value of $a\in ( 1/2-1/\sqrt{6}, 1/2)$ with wave speed $c=c_\mathrm{lin}(a,\eps)=c_\mathrm{lin}(a)+\mathcal{O}(\eps)$. While Theorem~\ref{thm_pulledexistence} guarantees the existence of these solutions for all such values of $a$, we expect that the right pulled front $F^r_\eps(a)$ loses stability for $a<1/3$ as Theorem~\ref{thm_pushedexistence} predicts the existence of a steeper pushed front. Furthermore, in this region the singular front $\phi_\mathrm{lin}^r = (u_\mathrm{lin}^r,v_\mathrm{lin}^r)$ in the underlying fast system~\eqref{eq_layer} is non-monotone and therefore by Sturm-Liouville theory the linearization
\begin{align}
\mathcal{L} U := U_{\xi\xi}+c_\mathrm{lin}U_\xi +f'(u_\mathrm{lin}^r)U
\end{align}
of the layer Nagumo PDE
\begin{align}
u_t = u_{xx} +f(u)
\end{align}
in the corresponding comoving frame $\xi=x-c_\mathrm{lin}t$ admits a positive eigenvalue. We expect that this eigenvalue persists in the full system for $\eps>0$, rendering the front $F^r_\eps(a)$ unstable in the PDE~\eqref{eq_pde} for $a<1/3$.

 Figure~\ref{fig:pulledfronts} depicts spacetime plots of the variable $u(x,t)$ of left and right pulled fronts obtained for $a=0.4$ as well as a left pulled front obtained for $a=0.2$. The left pulled fronts are distinguished via the fact that profile eventually increases monotonically as $\xi = x-ct \to \infty$. 

\begin{figure}
\begin{subfigure}{.33 \textwidth}
\centering
\includegraphics[width=1\linewidth]{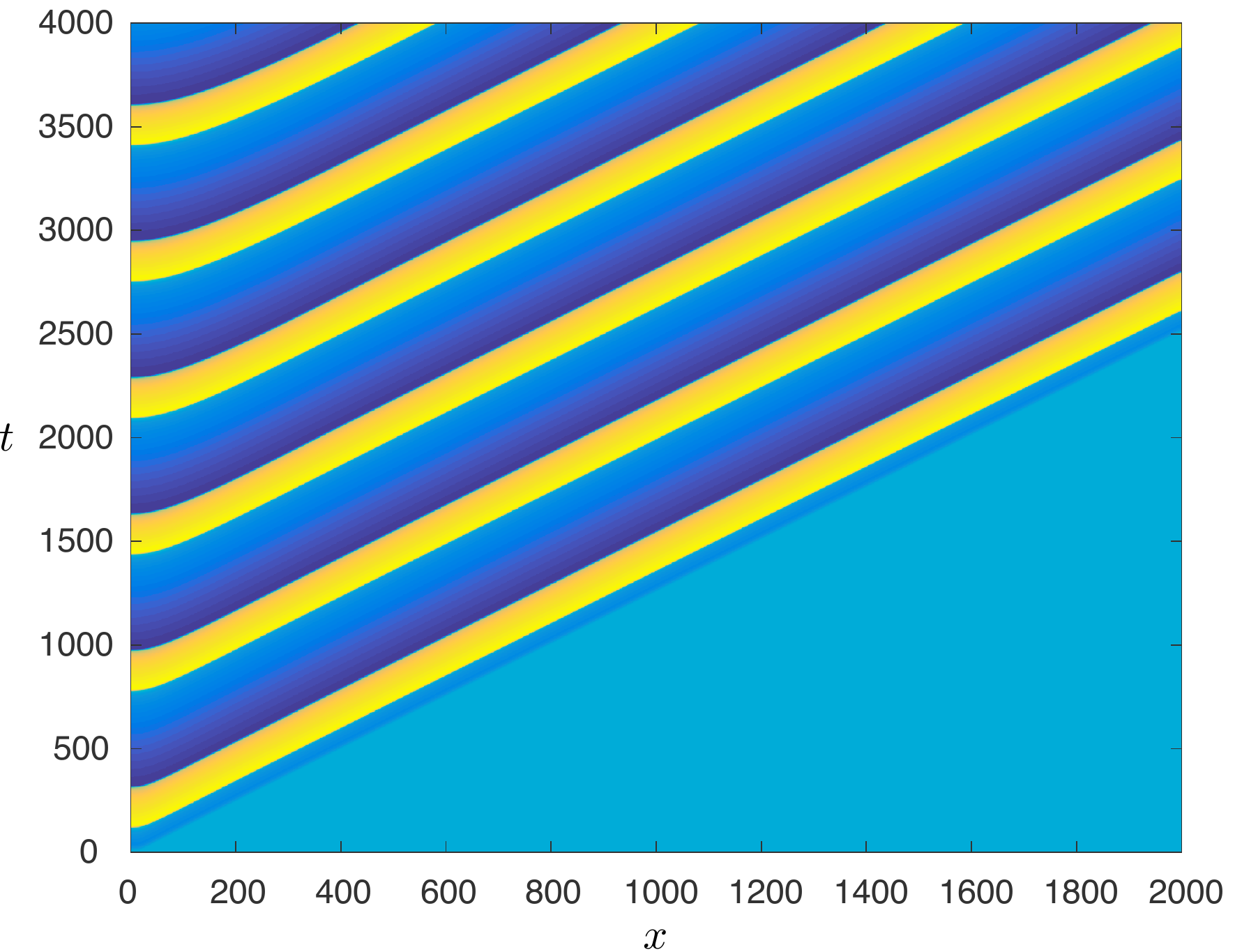}
\end{subfigure}
\begin{subfigure}{.33 \textwidth}
\centering
\includegraphics[width=1\linewidth]{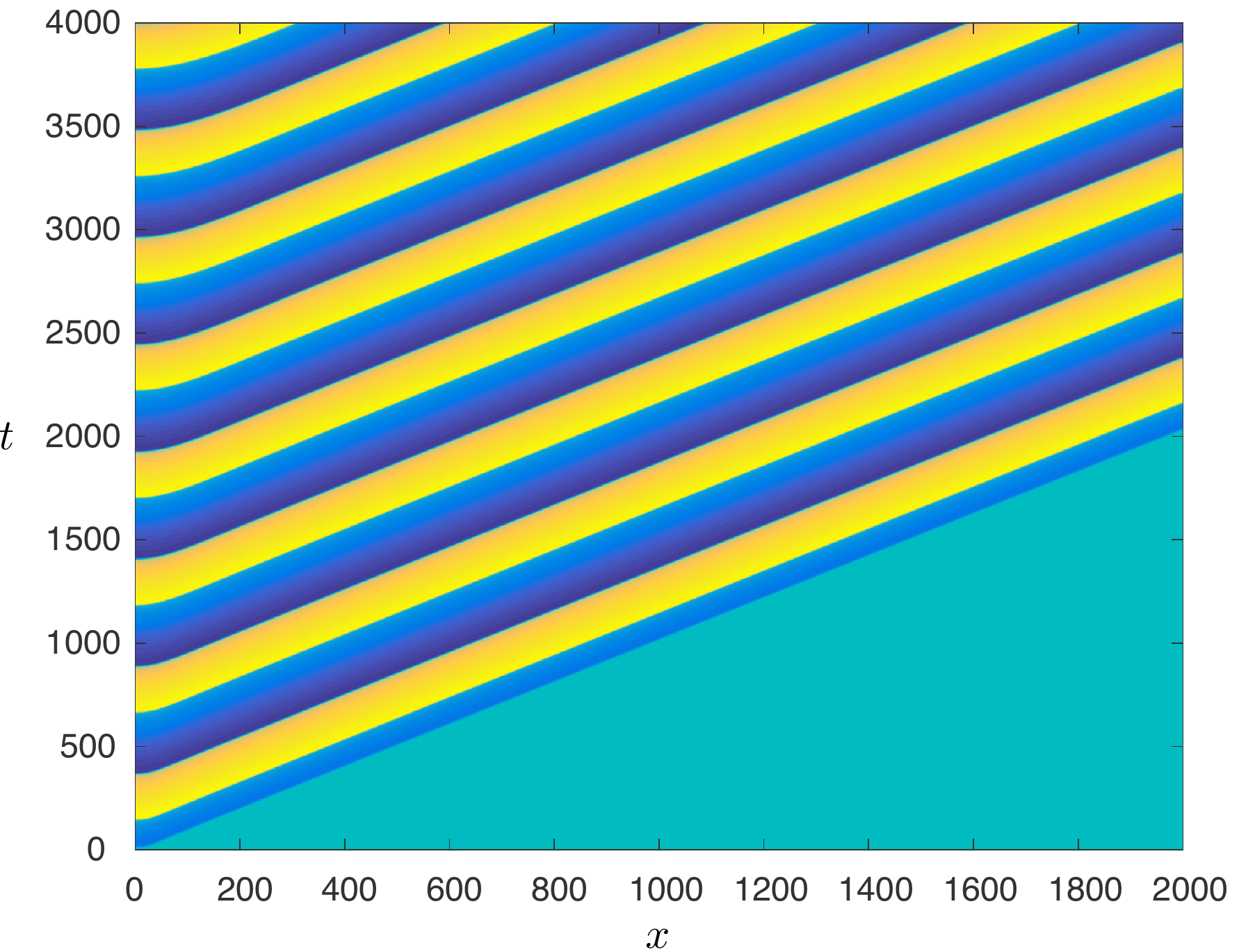}
\end{subfigure}
\begin{subfigure}{.33 \textwidth}
\centering
\includegraphics[width=1\linewidth]{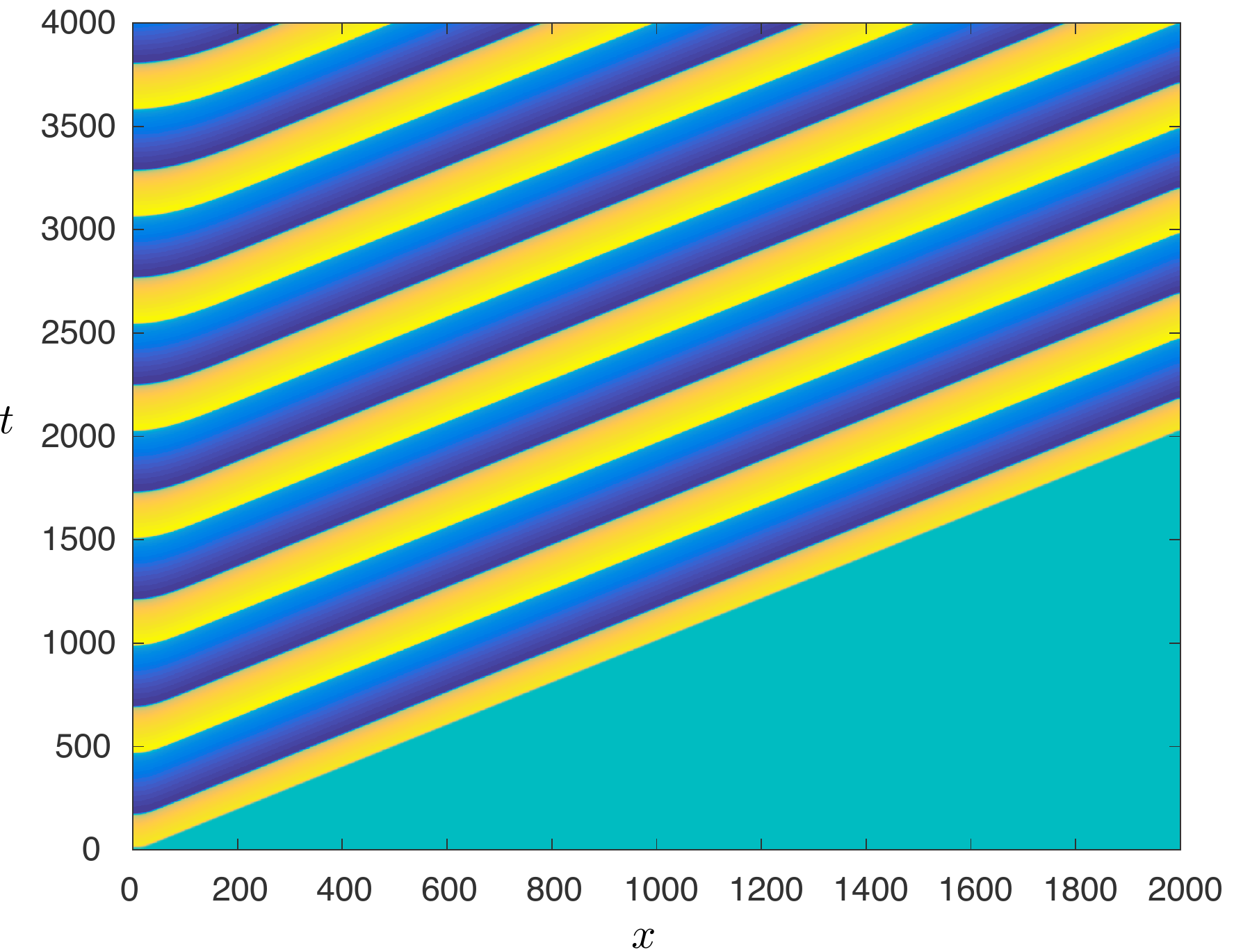}
\end{subfigure}
\caption{Shown are spacetime plots of $u(x,t)$ corresponding to left pulled fronts obtained for $a=0.2$ (left panel) and $a=0.4$ (center panel) and a right pulled front obtained for $a=0.4$ (right panel).}
\label{fig:pulledfronts}
\end{figure}

\paragraph{Pushed fronts.}  We illustrate the results of Theorem~\ref{thm_pushedexistence}, which predicts pushed fronts for each $0<a<1/3$. Figure~\ref{fig:pushedfronts} depicts spacetime plots of the variable $u(x,t)$ of pushed fronts obtained for values of $a=\{0.05, 0.1, 0.2\}$.

\begin{figure}
\begin{subfigure}{.33 \textwidth}
\centering
\includegraphics[width=1\linewidth]{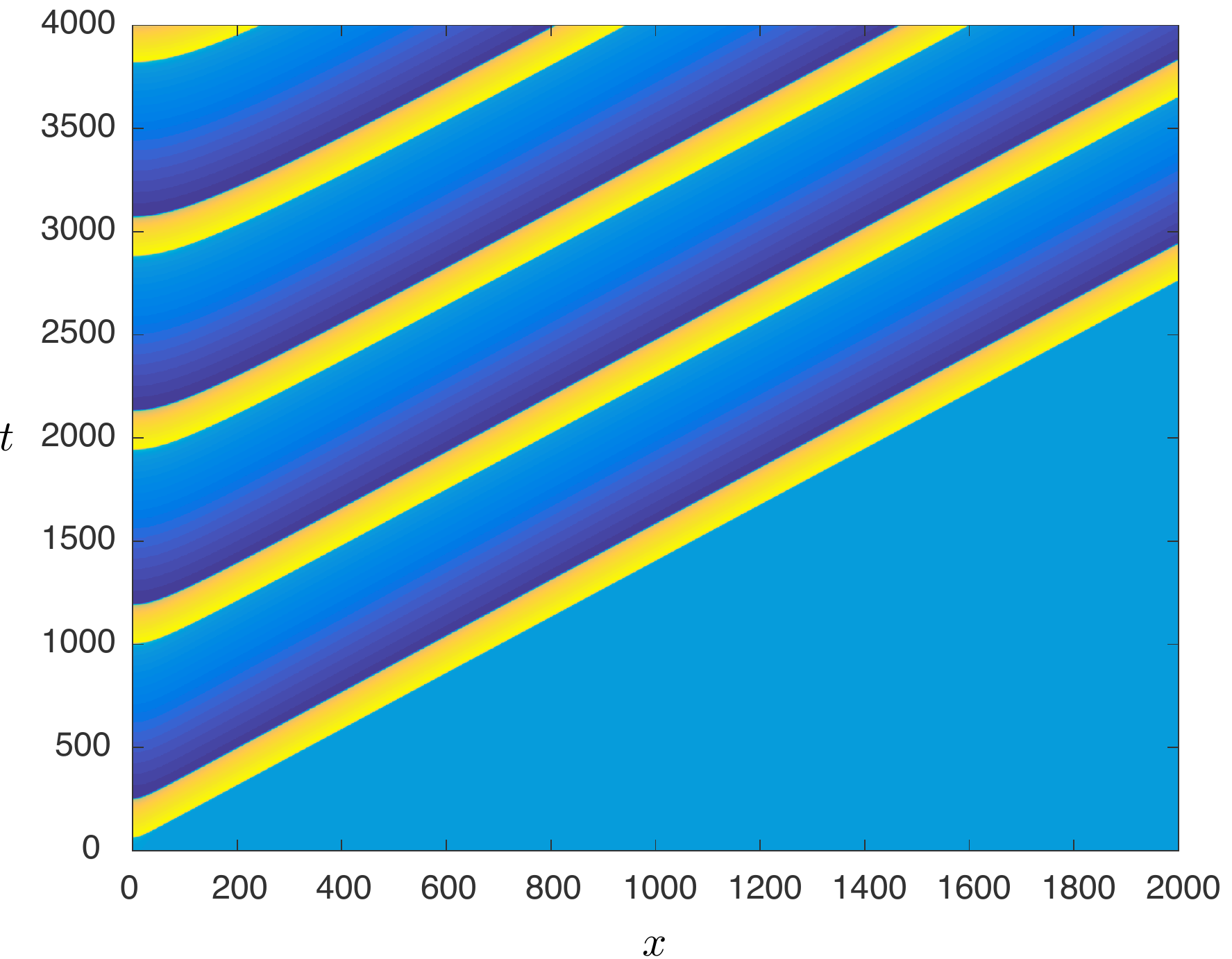}
\end{subfigure}
\begin{subfigure}{.33 \textwidth}
\centering
\includegraphics[width=1\linewidth]{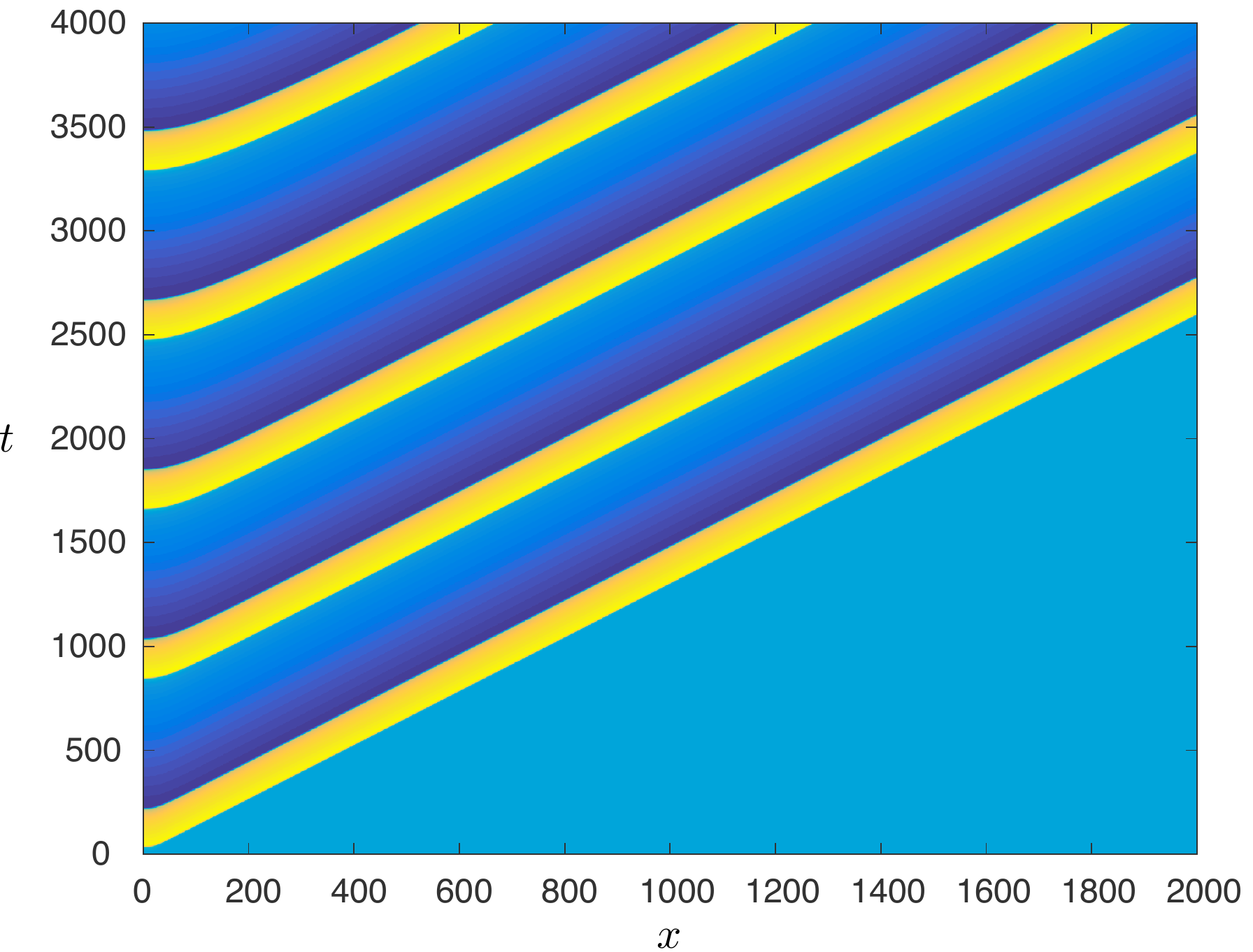}
\end{subfigure}
\begin{subfigure}{.33 \textwidth}
\centering
\includegraphics[width=1\linewidth]{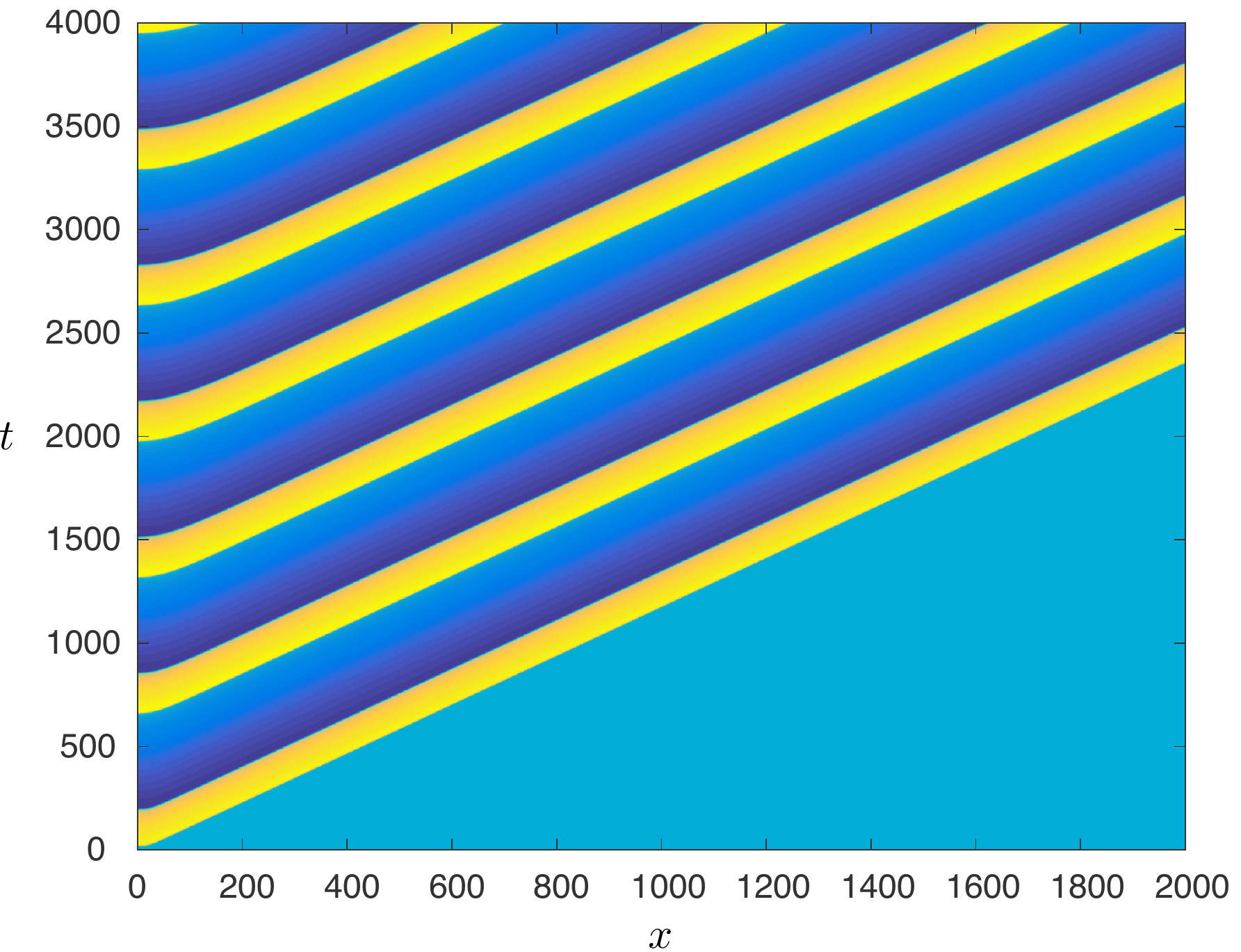}
\end{subfigure}
\caption{Shown are spacetime plots of $u(x,t)$ corresponding to pushed fronts obtained for $a=0.05$ (left panel),  $a=0.1$ (center panel), and $a=0.2$ (right panel).}
\label{fig:pushedfronts}
\end{figure}

\paragraph{Speed predictions.}
We compare our predictions with speeds obtained numerically in direct simulations. For pulled fronts, we compute the predicted speed $c_\mathrm{lin}(a,\eps)$ by solving for pinched double roots using a Newton continuation solver. To compute the predicted speed $c_\mathrm{p}(a,\eps)$ for pushed fronts, we use the explicit solutions~\eqref{eq_pushedexplicit} and the expression~\eqref{eq_cpexpansion} for $c_\mathrm{p}(a,\eps)$ in terms of the Melnikov integrals $M_\mathrm{f}^\eps, M_\mathrm{f}^c$, to obtain the leading order approximation
\begin{align}
c_\mathrm{p}(a,\eps) \approx \frac{1+a}{\sqrt{2}}-\frac{3\sqrt{2}}{a(1+a)}\left(-1+H\left[\frac{1+a}{1-a}\right]\right) \epsilon \end{align}
where the Melnikov integrals were evaluated in Mathematica, and $H[x]$ denotes the harmonic number function. The approximations for the predicted speeds $c_\mathrm{lin}(a,\eps)$ and $c_\mathrm{p}(a,\eps)$ are depicted in Figure~\ref{fig:speedpredictions} by magenta and dashed cyan curves, respectively.

\begin{figure}
\centering
\includegraphics[width=0.6\linewidth]{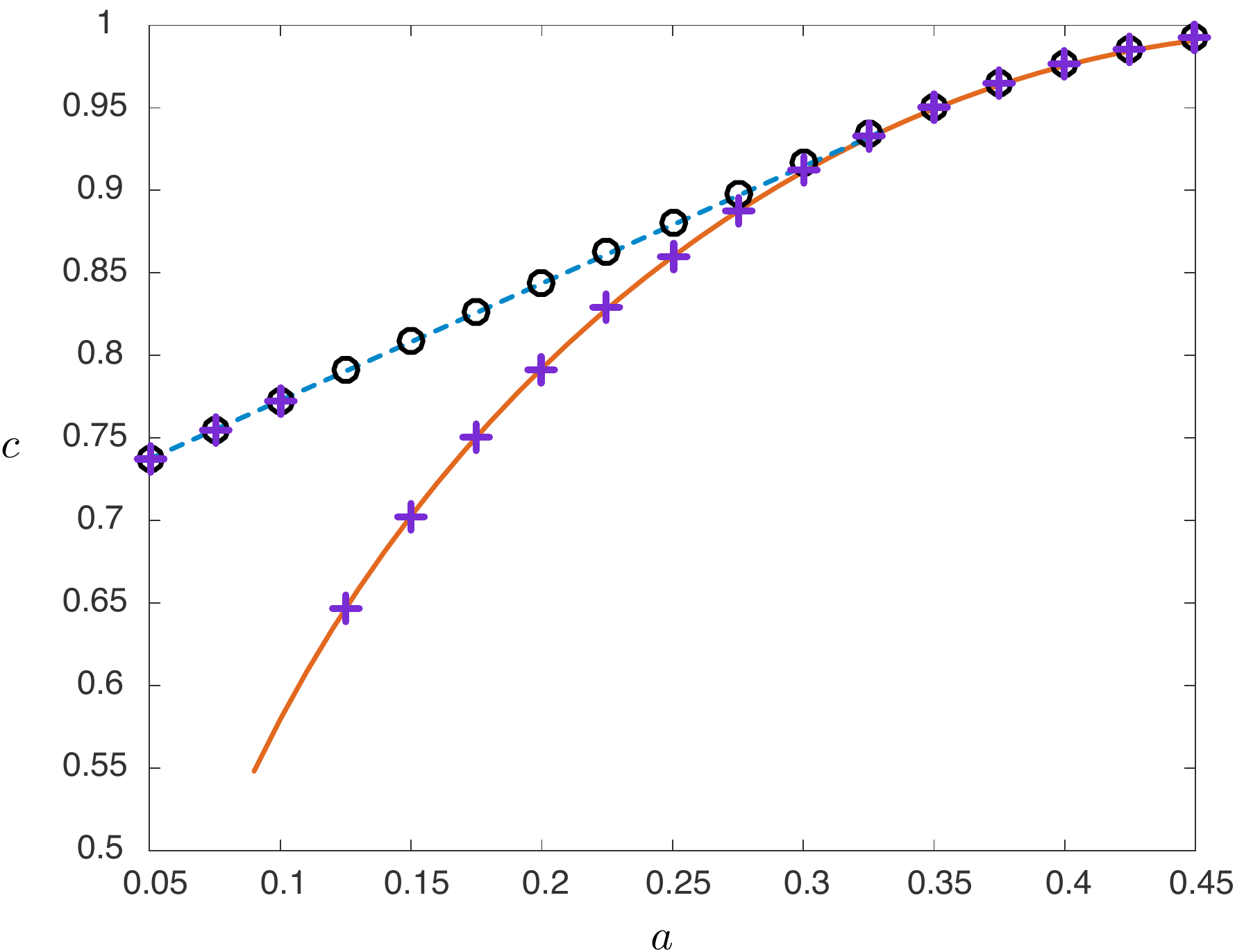}
\caption{Shown are speeds obtained from direct numerical simulations compared with predictions for $\eps=0.001$. The dashed cyan curve and solid magenta curve depict predictions for $c_\mathrm{p}(a,\eps)$ and $c_\mathrm{lin}(a,\eps)$, respectively. The black circles and purple crosses depict numerically computed speeds for profiles obtained from direct numerical simulations with positive and negative initial perturbations, respectively, in order to select for right versus left pulled fronts. Note the crossover from right pulled fronts to pushed fronts as $a$ decreases through $a=1/3$, as well as the disappearance of left pulled fronts as $a$ decreases through $a_\mathrm{b}$.}
\label{fig:speedpredictions}
\end{figure}

Also shown in Figure~\ref{fig:speedpredictions} are numerically computed speeds of profiles obtained in direct numerical simulations, which are in agreement with the predictions. In order to select for left (resp. right) pulled fronts, the initial perturbation from the homogeneous state $(u,w)=(a,0)$ was chosen to be negative (resp. positive). The results for negative perturbations are depicted by purple crosses. Left pulled fronts are obtained for values of $a$ greater than the critical $a_\mathrm{b}$, below which connecting orbits are blocked; for lower values of $a$, the solutions transition to right pushed fronts with the corresponding speed $c_\mathrm{p}(a,\eps)$. The results for positive perturbations are depicted by black circles. Right pulled fronts are obtained for values of $a>1/3$, below which the transition to pushed fronts occurs. This agrees with our expectation that the right pulled fronts lose stability for values of $a<1/3$.

\paragraph{Additional bifurcations for $a=\mathcal{O}(\eps)$.} The results of Theorem~\ref{thm_pushedexistence} are valid for fixed $a$ and sufficiently small $\eps>0$, that is, typically $0<\eps \ll a$. As $a\to 0$, the equilibrium $(u,w) = (a,0)$ moves toward the origin and approaches the lower left fold point on the critical manifold; see Figure~\ref{fig:singular_slow}. When $a=0$, the equilibrium lies exactly at the fold and admits the structure of a canard point~\cite{krupaszmolyan2001}. In~\eqref{eq_pde}, this specific scenario is responsible for a wide range of complex canard-induced dynamics~\cite{cas,carterunpeeling} when unfolding jointly in the parameters $0<a,\eps\ll 1$. 

In our current setup, we therefore expect similar phenomena in this region, in particular traveling canard orbits in~\eqref{eq_twode} as well as spatially homogeneous canard oscillations in the kinetics of~\eqref{eq_pde}. Figure~\ref{fig:oscillatoryfronts} depicts the results of direct numerical simulations for $a=\eps=0.05$ and $a=\eps=0.1$; we observe the appearance of traveling front solutions which leave oscillatory patterns in the wake of the interface. 

\begin{figure}
\hspace{.025\textwidth}
\begin{subfigure}{.45 \textwidth}
\centering
\includegraphics[width=1\linewidth]{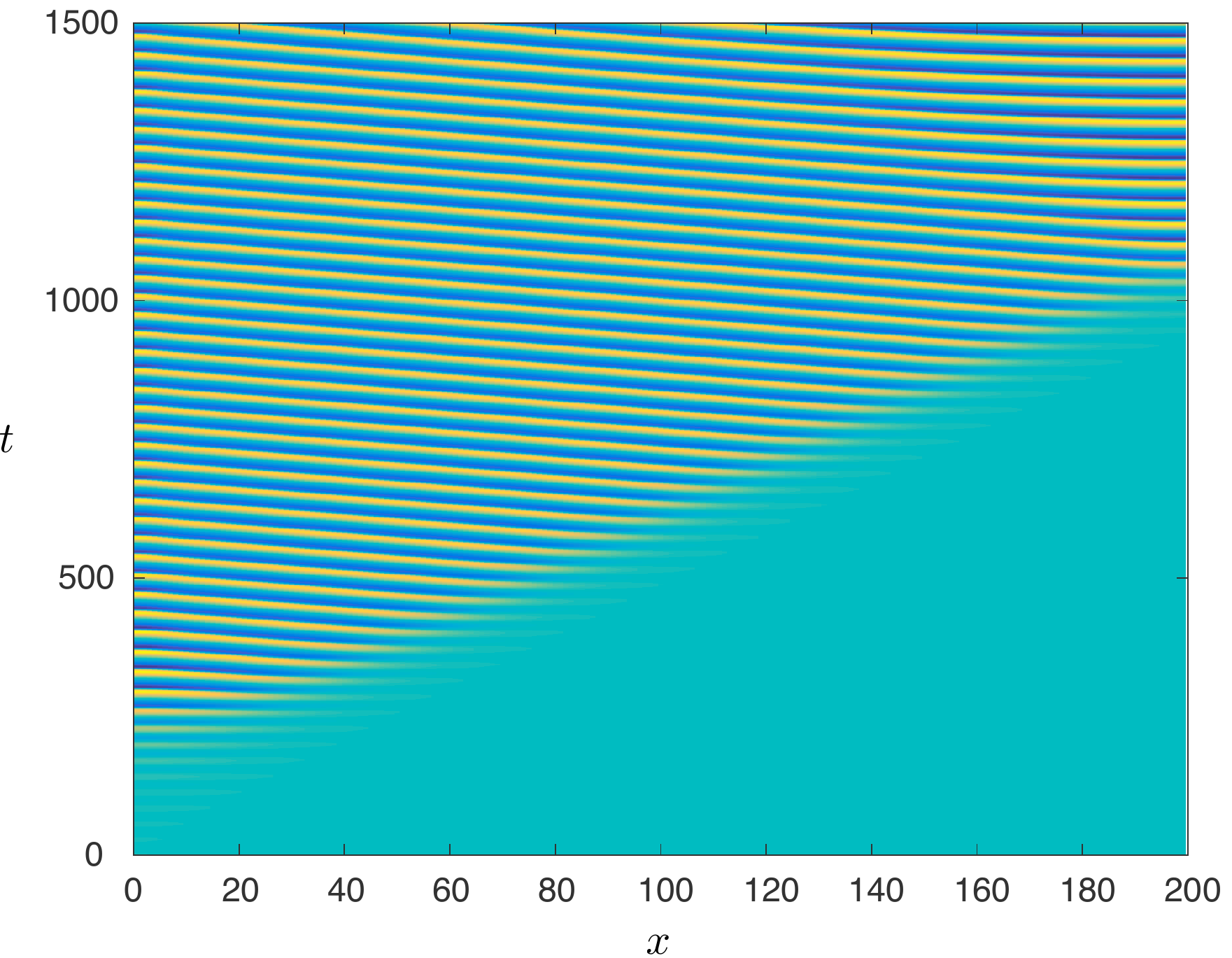}
\end{subfigure}
\hspace{.025\textwidth}
\begin{subfigure}{.45 \textwidth}
\centering
\includegraphics[width=1\linewidth]{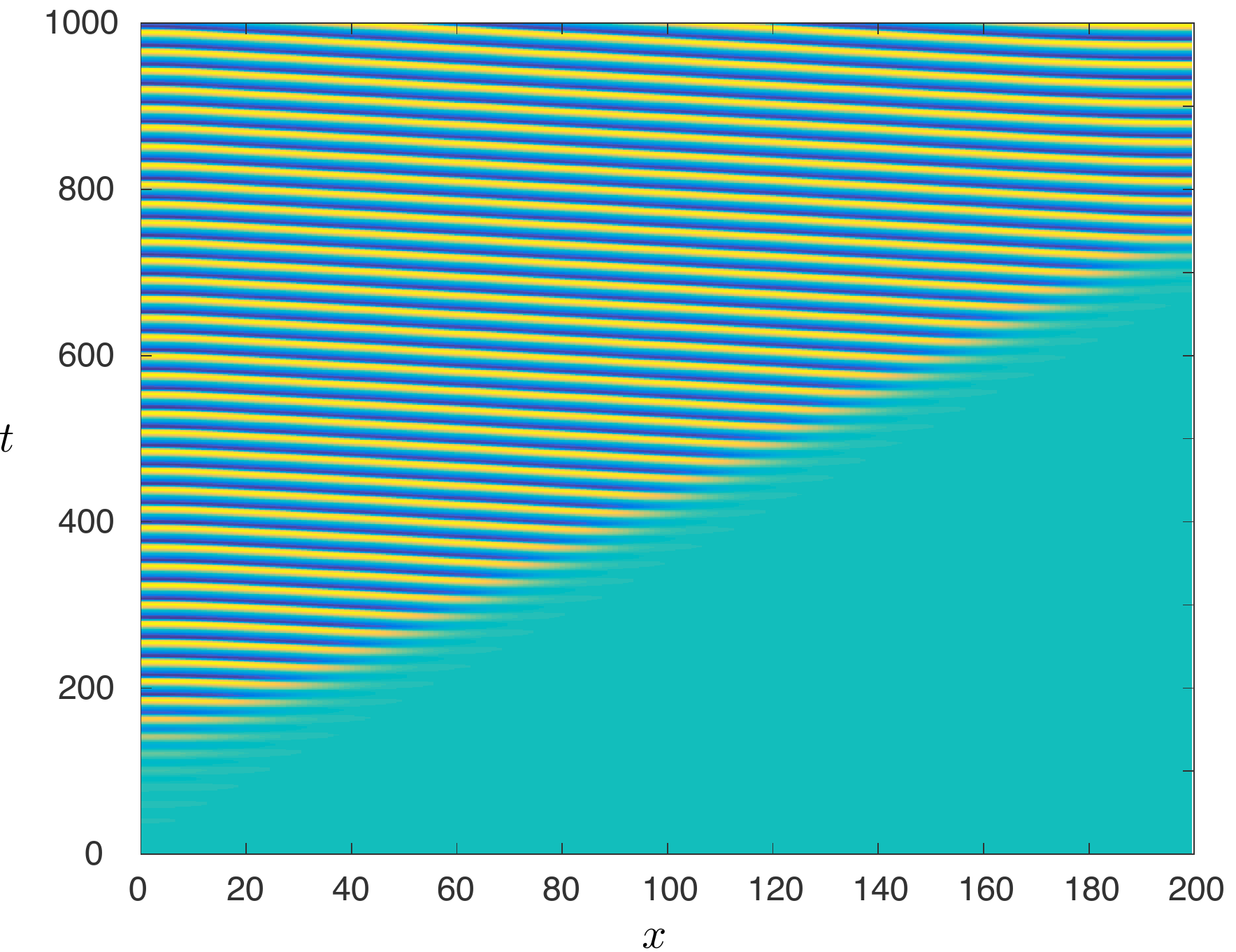}
\end{subfigure}
\hspace{.025\textwidth}
\caption{Shown are spacetime plots of oscillatory fronts for $a=\eps = 0.05$ (left panel) and $a=\eps = 0.1$ (right panel).}
\label{fig:oscillatoryfronts}
\end{figure}

\section{Discussion}\label{sec_discussion}
We presented existence results for periodic wave trains and heteroclinic orbits connecting an equilibrium to those wave trains in a traveling-wave equation for the FitzHugh--Nagumo equation in the oscillatory regime, that is, in the parameter regime when the unique equilibrium of the system is unstable. The heteroclinic orbits represent fronts that describe spreading of oscillations into a region occupied by an unstable equilibrium. We focused on specific heteroclinic orbits with wave speed parameter given by a steepest front selection criterion, which are observed when starting from compactly supported initial perturbations of the unstable state. 

A first natural extension  of our result would be concerned with the stability of the invasion fronts. As mentioned, we expect right-pulled fronts to be stable only for values of $a>1/3$, and left-pulled fronts to be stable throughout. Pushed fronts should be stable, mimicking altogether the results from the scalar Nagumo equation. Here, stability would refer to first spectral stability in suitably exponentially weighted spaces, linear stability, or even nonlinear stability against localized perturbations. A nonlinear stability analysis involves several difficulties, involving first the leading edge, and second the wave trains in the wake of the front. In the leading edge, the analysis for pushed fronts is rather straighforward as exponential weights enforcing decay slightly weaker than the front push the essential spectrum into the left half plane, leaving only a simple eigenvalue at the origin, while allowing for perturbations that correspond to compactly supported initial data \cite{hadeler}. For pulled fronts, exponential weights with rate of decay of the front push the essential spectrum only to the origin, and additional algebraic weights are required to obtain linear and nonlinear decay; see for instance \cite{gallay}. Initial conditions with (one-sided) compact support relative to the unstable equilibrium are not small (or even bounded) perturbations in such a function space. The second difficulty is concerned with the fact that the wave trains in the wake of the primary front are only diffusively stable, thus requiring a decomposition analogous to \cite{gsu}. An analysis allowing for compactly supported perturbations of the unstable state or even perturbations of finite size, independent of $\eps$, appears out of reach at this point. 

In a slightly different direction, one would like to understand transitions to oscillatory fronts as shown in Figure \ref{fig:oscillatoryfronts}. Similar phenomena had been observed in \cite{mesuro} and attributed to an instability of a pushed front with respect to a pair of pinched double roots crossing the imaginary axis \cite[\S 3 \& \S 5]{mesuro}. In the present case, the analysis would likely involve the understanding of the critical canard transition at $a=0$, extending work in \cite{carterunpeeling}.

\end{document}